\documentclass[submission,copyright,creativecommons]{eptcs}

\newcommand{\categoryname}[1]{{\sf{#1}}}  

\input xy  
\xyoption{all}     

\usepackage{tikz-cd}
\usepackage[nosort]{cite}
\usepackage{amssymb,amsmath,amsthm}
\usepackage{multicol}
\usepackage{tikz-cd}
\usepackage{mathtools}
\usepackage{stmaryrd}
\usepackage{enumerate}
\usepackage{tensor}
\usepackage{mathrsfs}
\usepackage{amsmath}
\usepackage{comment}
\usepackage{tikzit}
\usepackage{tikz}
\usepackage{float}
\usepackage[english]{babel}
\usepackage{braket}
\usepackage{dcolumn}               
\usepackage{bm}
\usepackage{hyperref}
\usepackage{slashed}
\usepackage{appendix}
\usepackage{comment}
\usepackage{array}
\usepackage{amstext}
\usepackage{array}
\newcolumntype{P}[1]{>{\centering\arraybackslash}p{#1}}
\newcolumntype{M}[1]{>{\centering\arraybackslash}m{#1}}


\tikzstyle{H}=[draw, color=black, fill={rgb:black,1;white,3}, shape=rectangle, tikzit fill=yellow]
\tikzstyle{thick}=[draw, line width=0.5mm]
\tikzstyle{Z}=[draw, fill=white, circle, scale=1, inner sep=0pt, minimum size=10pt]
\tikzstyle{X}=[draw, fill={rgb:black,1;white,3}, text=black, circle, scale=1, inner sep=0pt, minimum size=10pt, tikzit fill={rgb,255: red,191; green,191; blue,191}]
\tikzstyle{Xthick}=[draw, fill=white, circle, scale=1, inner sep=0pt, minimum size=15pt, line width=0.5mm, tikzit fill={rgb,255: red,191; green,191; blue,191}]
\tikzstyle{Zthick}=[draw, fill={rgb:black,1;white,3}, text=black, circle, scale=1, inner sep=0pt, minimum size=15pt, line width=0.5mm]
\tikzstyle{phase}=[draw, fill=white, diamond, scale=1, inner sep=0pt, minimum size=10pt]
\tikzstyle{discard}=[draw, xscale=2.2, ground, rotate=90]
\tikzstyle{mmixed}=[draw, quantum, yscale=-2.2, ground, rotate=180]
\tikzstyle{quantum}=[line width=.6mm]
\tikzstyle{map}=[draw, color=black, fill=white, rectangle]
\tikzstyle{mapperp}=[draw, color=white, fill=black, text=white, rectangle]
\tikzstyle{s}=[draw, color=black, fill=gray, rectangle]
\tikzstyle{mapthick}=[draw, color=black, fill=white, rectangle, inner sep=0pt, minimum size=15pt, line width=0.5mm]
\tikzstyle{otimes}=[draw, fill=white, rotate=45, scale=0.9, minimum height=.1cm, circle, append after command={[shorten >=\pgflinewidth, shorten <=\pgflinewidth]
(\tikzlastnode.north) edge (\tikzlastnode.south)
(\tikzlastnode.east) edge (\tikzlastnode.west)}]
\tikzstyle{dot}=[thick, fill=black, circle, scale=1, inner sep=.05cm]
\tikzstyle{oplus}=[draw, scale=0.9, minimum height=.1cm, circle, append after command={[shorten >=\pgflinewidth, shorten <=\pgflinewidth]
(\tikzlastnode.north) edge (\tikzlastnode.south)
(\tikzlastnode.east) edge (\tikzlastnode.west)
}]
\tikzstyle{andin}=[draw, and gate US, rotate=90, scale=1, fill=white, label={center:{\it \&}}]
\tikzstyle{mulin}=[draw, and gate US, rotate=90, scale=1, fill=white, label={center:{}}]
\tikzstyle{andout}=[draw, and gate US, rotate=-90, scale=1, fill=white, label={center:{\it \&}}]
\tikzstyle{scalar}=[draw, rounded corners=1ex, rectangle round south west=false, rectangle round south east=false, tikzit fill={rgb,255: red,129; green,253; blue,255}, tikzit shape=rectangle]
\tikzstyle{scalarop}=[draw, rounded corners=1ex, rectangle round north west=false, rectangle round north east=false, tikzit fill={rgb,255: red,116; green,172; blue,255}, tikzit shape=rectangle]
\tikzstyle{fanin}=[draw, shape border rotate=30, regular polygon, regular polygon sides=3, fill=white, inner sep=.1cm]
\tikzstyle{fanout}=[draw, shape border rotate=-30, regular polygon, regular polygon sides=3, fill=white, inner sep=.1cm]
\tikzstyle{onein}=[draw, shape border rotate=30, regular polygon, regular polygon sides=3, fill=black, inner sep=.04cm, scale=1.2]
\tikzstyle{oneout}=[draw, shape border rotate=-30, regular polygon, regular polygon sides=3, fill=black, inner sep=.04cm, scale=1.2]
\tikzstyle{zeroin}=[draw, shape border rotate=30, regular polygon, regular polygon sides=3, fill=white, inner sep=.04cm, scale=1.2]
\tikzstyle{zeroout}=[draw, shape border rotate=-30, regular polygon, regular polygon sides=3, fill=white, inner sep=.04cm, scale=1.2]
\tikzstyle{red}=[fill=red, draw=black, shape=circle, tikzit shape=circle]
\tikzstyle{green}=[fill=green, draw=black, shape=circle]
\tikzstyle{new style 0}=[fill=black, draw=black, shape=circle]
\tikzstyle{white}=[fill=white, draw=black, shape=circle]
\tikzstyle{rect k}=[fill=white, draw=black, shape=rectangle]
\tikzstyle{new style 1}=[fill=yellow, draw=black, shape=rectangle]
\tikzstyle{new style 2}=[fill=blue, draw=black, shape=circle]
\tikzstyle{new style 3}=[fill=cyan, draw=blue, shape=rectangle]
\tikzstyle{new style 4}=[fill=white, draw=black, shape=circle]

\tikzstyle{new edge style 0}=[-]
\tikzstyle{boldedge}=[-, line width=1.6pt, shorten <=-0.17mm, shorten >=-0.17mm, tikzit draw=blue]

\newif\ifpgfshaperectangleroundnortheast
\newif\ifpgfshaperectangleroundnorthwest
\newif\ifpgfshaperectangleroundsoutheast
\newif\ifpgfshaperectangleroundsouthwest
\pgfkeys{/pgf/.cd,
  rectangle round north east/.is if=pgfshaperectangleroundnortheast,
  rectangle round north west/.is if=pgfshaperectangleroundnorthwest,
  rectangle round south east/.is if=pgfshaperectangleroundsoutheast,
  rectangle round south west/.is if=pgfshaperectangleroundsouthwest,
  rectangle round north east, rectangle round north west,
  rectangle round south east, rectangle round south west,
}
\makeatletter
\def\pgf@sh@bg@rectangle{%
  \pgfkeysgetvalue{/pgf/outer xsep}{\outerxsep}%
  \pgfkeysgetvalue{/pgf/outer ysep}{\outerysep}%
  \pgfpathmoveto{\pgfpointadd{\southwest}{\pgfpoint{\outerxsep}{\outerysep}}}%
  {\ifpgfshaperectangleroundnorthwest\else\pgfsetcornersarced{\pgfpointorigin}\fi%
    \pgfpathlineto{\pgfpointadd{\southwest\pgf@xa=\pgf@x\northeast\pgf@x=\pgf@xa}{\pgfpoint{\outerxsep}{-\outerysep}}}}%
  {\ifpgfshaperectangleroundnortheast\else\pgfsetcornersarced{\pgfpointorigin}\fi%
    \pgfpathlineto{\pgfpointadd{\northeast}{\pgfpoint{-\outerxsep}{-\outerysep}}}}%
  {\ifpgfshaperectangleroundsoutheast\else\pgfsetcornersarced{\pgfpointorigin}\fi%
    \pgfpathlineto{\pgfpointadd{\southwest\pgf@ya=\pgf@y\northeast\pgf@y=\pgf@ya}{\pgfpoint{-\outerxsep}{\outerysep}}}}%
  {\ifpgfshaperectangleroundsouthwest\else\pgfsetcornersarced{\pgfpointorigin}\fi%
    \pgfpathclose}}
\tikzstyle{red}=[fill=red, draw=black, shape=circle, tikzit shape=circle]
\tikzstyle{green}=[fill=green, draw=black, shape=circle]
\tikzstyle{new style 0}=[fill=black, draw=black, shape=circle]
\tikzstyle{white}=[fill=white, draw=black, shape=circle]
\tikzstyle{rect k}=[fill=white, draw=black, shape=rectangle]
\tikzstyle{new style 1}=[fill=yellow, draw=black, shape=rectangle]
\tikzstyle{H}=[draw, color=black, fill={rgb:black,1;white,3}, shape=rectangle, tikzit fill=yellow]
\tikzstyle{thick}=[draw, line width=0.5mm]
\tikzstyle{not edge}=[-, dashed, dash pattern=on 2pt off 1.5pt, thick, draw={rgb,255: red,255; green,68; blue,68}]
\tikzstyle{Z}=[draw, fill=white, circle, scale=1, inner sep=0pt, minimum size=10pt]
\tikzstyle{X}=[draw, fill={rgb:black,1;white,3}, text=black, circle, scale=1, inner sep=0pt, minimum size=10pt, tikzit fill={rgb,255: red,191; green,191; blue,191}]
\tikzstyle{Xthick}=[draw, fill=white, circle, scale=1, inner sep=0pt, minimum size=15pt, line width=0.5mm, tikzit fill={rgb,255: red,191; green,191; blue,191}]
\tikzstyle{Zthick}=[draw, fill={rgb:black,1;white,3}, text=black, circle, scale=1, inner sep=0pt, minimum size=15pt, line width=0.5mm]
\tikzstyle{phase}=[draw, fill=white, diamond, scale=1, inner sep=0pt, minimum size=10pt]
\tikzstyle{discard}=[draw, xscale=2.2, ground, rotate=90]
\tikzstyle{mmixed}=[draw, quantum, yscale=-2.2, ground, rotate=180]
\tikzstyle{quantum}=[line width=.6mm]
\tikzstyle{map}=[draw, color=black, fill=white, rectangle]
\tikzstyle{mapperp}=[draw, color=white, fill=black, text=white, rectangle]
\tikzstyle{s}=[draw, color=black, fill=gray, rectangle]
\tikzstyle{mapthick}=[draw, color=black, fill=white, rectangle, inner sep=0pt, minimum size=15pt, line width=0.5mm]
\tikzstyle{otimes}=[draw, fill=white, rotate=45, scale=0.9, minimum height=.1cm, circle, append after command={[shorten >=\pgflinewidth, shorten <=\pgflinewidth]
(\tikzlastnode.north) edge (\tikzlastnode.south)
(\tikzlastnode.east) edge (\tikzlastnode.west)}]
\tikzstyle{dot}=[thick, fill=black, circle, scale=1, inner sep=.05cm]
\tikzstyle{oplus}=[draw, scale=0.9, minimum height=.1cm, circle, append after command={[shorten >=\pgflinewidth, shorten <=\pgflinewidth]
(\tikzlastnode.north) edge (\tikzlastnode.south)
(\tikzlastnode.east) edge (\tikzlastnode.west)
}]
\tikzstyle{andin}=[draw, and gate US, rotate=90, scale=1, fill=white, label={center:{\it \&}}]
\tikzstyle{mulin}=[draw, and gate US, rotate=90, scale=1, fill=white, label={center:{}}]
\tikzstyle{andout}=[draw, and gate US, rotate=-90, scale=1, fill=white, label={center:{\it \&}}]
\tikzstyle{scalar}=[draw, rounded corners=1ex, rectangle round south west=false, rectangle round south east=false, tikzit fill={rgb,255: red,129; green,253; blue,255}, tikzit shape=rectangle]
\tikzstyle{scalarop}=[draw, rounded corners=1ex, rectangle round north west=false, rectangle round north east=false, tikzit fill={rgb,255: red,116; green,172; blue,255}, tikzit shape=rectangle]
\tikzstyle{fanin}=[draw, shape border rotate=30, regular polygon, regular polygon sides=3, fill=white, inner sep=.1cm]
\tikzstyle{fanout}=[draw, shape border rotate=-30, regular polygon, regular polygon sides=3, fill=white, inner sep=.1cm]
\tikzstyle{onein}=[draw, shape border rotate=30, regular polygon, regular polygon sides=3, fill=black, inner sep=.04cm, scale=1.2]
\tikzstyle{oneout}=[draw, shape border rotate=-30, regular polygon, regular polygon sides=3, fill=black, inner sep=.04cm, scale=1.2]
\tikzstyle{zeroin}=[draw, shape border rotate=30, regular polygon, regular polygon sides=3, fill=white, inner sep=.04cm, scale=1.2]
\tikzstyle{zeroout}=[draw, shape border rotate=-30, regular polygon, regular polygon sides=3, fill=white, inner sep=.04cm, scale=1.2]
\tikzstyle{red}=[fill=red, draw=black, shape=circle, tikzit shape=circle]
\tikzstyle{green}=[fill=green, draw=black, shape=circle]
\tikzstyle{new style 0}=[fill=black, draw=black, shape=circle]
\tikzstyle{white}=[fill=white, draw=black, shape=circle]
\tikzstyle{rect k}=[fill=white, draw=black, shape=rectangle]
\tikzstyle{new style 1}=[fill=yellow, draw=black, shape=rectangle]
\tikzstyle{new style 2}=[fill=blue, draw=black, shape=circle]
\tikzstyle{new style 3}=[fill=cyan, draw=blue, shape=rectangle]
\tikzstyle{new style 4}=[fill=white, draw=black, shape=circle]

\tikzstyle{new edge style 0}=[-]
\tikzstyle{boldedge}=[-, line width=1.6pt, shorten <=-0.17mm, shorten >=-0.17mm, tikzit draw=blue]

\tikzstyle{new edge style 0}=[-]
\tikzstyle{boldedge}=[-, line width=1.6pt, shorten <=-0.17mm, shorten >=-0.17mm, tikzit draw=blue]
\tikzstyle{new edge style 0}=[-]
\tikzstyle{strings}=[baseline={([yshift=-.5ex]current bounding
\usetikzlibrary{arrows,decorations.pathmorphing,backgrounds,positioning,fit}box.center)}]

\newtheorem{observation}{Remark}[section]
\newtheorem{example}[observation]{Example}
\newtheorem{definition}[observation]{Definition}

\newtheorem{lemma}[observation]{Lemma}
\newtheorem{proposition}[observation]{Proposition}
\newtheorem{theorem}[observation]{Theorem}
\newtheorem{remark}[observation]{Remark}
\newtheorem{corollary}[observation]{Corollary}

\title{Categories of Kirchhoff Relations}

\author{Robin Cockett
\institute{University of Calgary\\
Alberta, Canada}
\institute{Department of Computer Science}
\email{robin@ucalgary.ca}
\and
Amolak Ratan Kalra 
\institute{University of Calgary\\
Alberta, Canada}
\institute{Department of Computer Science}
\email{amolakratan.kalra@ucalgary.ca}
\and
Shiroman Prakash 
\institute{Dayalbagh Educational Institute\\
Uttar Pradesh, India}
\institute{Department of Physics and Computer Science}
\email{sprakash@dei.ac.in}
}

\begin{document}
\maketitle
\begin{abstract}
It is known that the category of affine Lagrangian relations, $\categoryname{AffLagRel_F}$, over a field, $F$, of integers modulo a prime $p$ (with $p > 2$) is isomorphic to the category of stabilizer quantum circuits for $p$-dits.  Furthermore, it is known that electrical circuits (generalized for the field $F$) occur as a natural subcategory of  $\categoryname{AffLagRel_F}$. The purpose of this paper is to provide a characterization of the relations in this subcategory -- and in important subcategories thereof -- in terms of parity-check and generator matrices as used in error detection.\\
In particular, we introduce the subcategory consisting of Kirchhoff relations to be (affinely) those Lagrangian relations that conserve total momentum or equivalently satisfy Kirchhoff's current law.  Maps in this subcategory can be generated by electrical components  (generalized for the field $F$): namely resistors, current dividers, and current and voltage sources.  The ``source'' electrical components deliver the affine nature of the maps while current dividers add an interesting quasi-stochastic aspect.\\
We characterize these Kirchhoff relations in terms of parity-check matrices and in addition, characterizes two important subcategories: the deterministic Kirchhoff relations and the lossless relations.   The category of deterministic Kirchhoff relations as electrical circuits are generated by resistors circuits. Lossless relations, which are deterministic Kirchhoff, provide exactly the basic hypergraph categorical structure of these settings.
\end{abstract}

\tableofcontents

\section{Introduction}
Categorical techniques have been used with great success to analyze, simplify and study both quantum and electrical circuits.
On the quantum circuit side these methods have met with considerable success to give new insights into certain aspects of quantum error correction \cite{PhysRevA.102.022406, Aleks1, kissinger2022phase, kalra2019demonstration} and quantum circuit simplification \cite{de2022circuit,booth2021outcome, kissinger2020reducing}.
On the other hand the category of Lagrangian relations and more generally affine relations and its associated graphical calculus called Graphical Affine Algebra (GAA) has been used to capture the behaviour of both active and passive electrical networks \cite{baez2017props, fong2016algebra, Bonchi2019GraphicalAA,gla,zanasi,boisseau2021string,bonchi2017interacting,baez2014categories}. 

\medskip

Recently a connection between Lagrangian relations and stabilizer quantum mechanics for qudits of odd prime dimensions was made by the work of Comfort and Kissinger \cite{comfort2021graphical}. Inspired by this work and earlier results in physical systems theory \cite{DPS2009, DPS2011, DPS2014,DPS1,DPS2,karunakaran1975systems} and qudit quantum computation \cite{gross2006hudson,spekkens2016quasi, Wootters1987,Gottesman1999,mitpaper, prakash2021normal, Catani_2017, PhysRevA.75.032110} we introduce the subcategory of momentum-conserving Lagrangian relations over a field $F$ called Kirchhoff relations and denoted as $\categoryname{KirRel}_{F}$. We study $\categoryname{KirRel}_{F}$ and its subcategories using parity-check and generator matrices from error correction. By studying the structure of parity-check matrices we isolate different subcategories of $\categoryname{KirRel}$. These subcategories correspond to different classes of electrical networks. In the last section of the paper we list a universal set of generators for $\categoryname{KirRel}$, all these generators have natural interpretations in terms of electrical elements namely as resistors, junctions, and current dividers. 


\section{Linear and Affine Relations}
Let $F$ be a field: we will mainly be interested in finite fields of integers modulo some prime $p>2$.  A \textbf{linear relation} $\mathcal{R}:m \to n$ between the vector spaces $F^m$ and $F^n$ is a linear subspace $\mathcal{R} \subseteq F^{n}\oplus F^{m}$.  These linear relations over the field $F$ form a prop, $\categoryname{LinRel}_{F}$, as described in \cite{Bonchi2019GraphicalAA}, where the objects are natural numbers and morphisms are linear relations with respect to relational composition and tensor product given by direct sum.  $\categoryname{LinRel}_{F}$ is also a hypergraph category, \cite{fong2019hypergraph, fong2018seven}, in two different ways first with respect to the Frobenius structure given by the copy map and its converse (here represented by black spiders) and secondly by the Frobenius structure provided by addition and its converse (here represented by white spiders).
A special case of a linear relation is a linear function, in which each ${x} \in F^m$ is related to exactly one ${y} \in F^n$. A linear function from $F^m$ to $F^n$ can be written as an $n \times m$ matrix $A$ such that ${y} = A {x}$. Linear functions form a prop by themselves, which is given by $\iota:\categoryname{Mat}_{F}\to \categoryname{LinRel}_{F}$.
While a general linear relation $\mathcal{R}: m \to n$ cannot be represented as an $n \times m$ matrix, it is possible to represent an arbitrary linear relation 
using larger matrices with $n+m$ columns, and this idea is exploited  in the theory of linear error-correcting codes, \cite{huffman2010fundamentals}.


\subsection{Generator and parity-check matrices}

Any linear subspace $\mathcal{R} : m \to n$ can be expressed as the \textit{span} of a set of $k$ independent vectors, that is as the set of all $(x,y) \in F^m \oplus F^n$ that 
can be written as
\begin{equation}
    \begin{pmatrix} x \\ y \end{pmatrix} = G^T {t}  \label{generator}
\end{equation}
for some $t \in F^k$, where $G^T: k \to m + n$ has rank $k$.\footnote{$G^T$ indicates the transpose of $G$.}  In the terminology of linear error-correcting codes, $G$  is a \textbf{generator matrix} of the linear relation $\mathcal{R}$.  Alternatively, the linear relation can be expressed as the vectors satisfying the linear equation:
\begin{equation}
    H \begin{pmatrix} x \\ y \end{pmatrix} = 0. \label{parity-check}
\end{equation}
where $H: n+m \to n+m-k $ is a matrix of rank $n+m-k$.  In other words, $\mathcal{R}$ is the kernel of $H$.  In the terminology of linear error-correcting codes, $H$ is a \textbf{parity-check matrix} for the linear relation $\mathcal{R}$.
\begin{example} Consider the function from $F^2 \to F^2$ shown below.
Note that here we use the notation of \cite{Bonchi2019GraphicalAA} where the black spiders correspond to copying and the white spiders correspond to addition.
\begin{equation*}
\begin{tikzpicture}[scale=0.3]
	\begin{pgfonlayer}{nodelayer}
		\node [style=white] (0) at (1.75, -26) {};
		\node [style=none] (1) at (0.75, -28.25) {};
		\node [style=none] (2) at (3, -28.25) {};
		\node [style=white] (3) at (1.75, -24.5) {};
		\node [style=none] (4) at (0.5, -22.5) {};
		\node [style=none] (5) at (3, -22.5) {};
		\node [style=new style 0] (6) at (1.75, -24.5) {};
		\node [style=none] (7) at (0.5, -21.5) {$y_{1}$};
		\node [style=none] (8) at (3, -21.5) {$y_{2}$};
		\node [style=none] (9) at (0.75, -29) {$x_{1}$};
		\node [style=none] (10) at (3, -29) {$x_{2}$};
	\end{pgfonlayer}
	\begin{pgfonlayer}{edgelayer}
		\draw [bend right, looseness=1.25] (0) to (1.center);
		\draw [bend left, looseness=1.25] (0) to (2.center);
		\draw [bend left, looseness=1.25] (3) to (4.center);
		\draw [bend right, looseness=1.25] (3) to (5.center);
		\draw (6) to (0);
	\end{pgfonlayer}
\end{tikzpicture}
\end{equation*}
This diagram encodes the following relation $\{(x_{1},x_{2}),(y_{1},y_{2})~|~x_{1}+x_{2}=y_{1},~y_{1}=y_{2}\}$ which can be written in matrix form as:
\[
\begin{pmatrix}
1 & 1 &-1 & 0\\
0 & 0 & 1 & -1\\
\end{pmatrix}
\begin{pmatrix}
x_{1}\\
x_{2}\\
y_{1}\\
y_{2}\\
\end{pmatrix}
=\begin{pmatrix}
0\\
0\\
\end{pmatrix}
\]
Thus, $H:=\begin{pmatrix}
1 & 1 &-1 & 0\\
0 & 0 & 1 & -1\\
\end{pmatrix}$ is a parity-check matrix.   A generator matrix for the relation is: 
\[ G^{T}:=\begin{pmatrix}
1 & 0\\
0 & 1\\
1 & 1 \\
0 & 0 \\
\end{pmatrix}\]
Note that $HG^{T}=0$.
\end{example}
\begin{example}
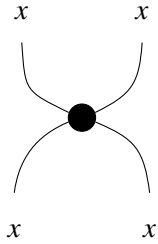
\begin{figure}
    \centering
  \begin{tikzpicture}[scale=0.4]
	\begin{pgfonlayer}{nodelayer}
		\node [style=new style 0] (0) at (12, 0.5) {};
		\node [style=none] (1) at (10, 3) {};
		\node [style=none] (2) at (14, 3) {};
		\node [style=none] (3) at (9.75, -2) {};
		\node [style=none] (4) at (14.25, -2) {};
		\node [style=none] (5) at (9.75, -3.25) {$x$};
		\node [style=none] (6) at (14.25, -3.25) {$x$};
		\node [style=none] (7) at (10, 4) {$x$};
		\node [style=none] (8) at (14, 4) {$x$};
	\end{pgfonlayer}
	\begin{pgfonlayer}{edgelayer}
		\draw [bend right, looseness=1.50] (1.center) to (0);
		\draw [bend right, looseness=1.25] (0) to (2.center);
		\draw [bend right] (0) to (3.center);
		\draw [bend left, looseness=1.25] (0) to (4.center);
	\end{pgfonlayer}
\end{tikzpicture}
    \caption{Parity check matrices for a copy spider with two inputs and two outputs}
    \label{spider2in2out}
\end{figure}
Consider the relation $\mathcal R$ between $F^2$ and $F^2$, defined by $\{((x,x),(x,x)) \in F^2\oplus F^2 | x \in F\}$, (which can be represented by a copy spider with 2 inputs and 2 outputs, as defined in the next subsection, shown in Figure \ref{spider2in2out}.) This can alternatively be specified via a parity-check matrix as the set $((x_1, x_2), (y_1, y_2)) \in F^2 \oplus F^2$ that satisfies
\begin{equation}
    \begin{pmatrix} 1 & -1 & 0 & 0 \\ 0 & 1 & -1 & 0 \\ 0 & 0 & 1 & -1 \end{pmatrix}  \begin{pmatrix} x_1 \\ x_2 \\ y_1 \\ y_2 \end{pmatrix} = 0.
\end{equation}
It can also be specified via a generator matrix as
\begin{equation}
    \begin{pmatrix} x_1 \\ x_2 \\ y_1 \\ y_2 \end{pmatrix} = \begin{pmatrix} 1 \\ 1 \\ 1 \\ 1 \end{pmatrix} \begin{pmatrix} t \end{pmatrix}.
\end{equation}
From these expressions we see that the dimension of $\mathcal R$, is $k=1$.
\end{example}
A parity-check matrix, $H$, for $\mathcal{R}$ is not unique.  Any vector that satisfies equation \eqref{parity-check}, also satisfies the equation with $H$ replaced by $H' := \beta H$, where $\beta$ is any invertible matrix with dimension $m+n-k$ (similarly, one can replace $G$ by $\beta' G$ where $\beta'$ is an invertible matrix with dimension $k$).  While parity-check matrices are not unique, whenever $H$ and $H'$ are parity-check matrices for $\mathcal{R}$ then there exists a unique comparison isomorphism $\alpha$ such that the following diagram commutes:
\begin{equation}
\begin{tikzcd}
F^{m}\oplus F^{n} \arrow[rr, "H"] \arrow[rrd, "H'"'] &  & F^{m+n-k} \arrow[d, "\alpha"] \\
                                                     &  & F^{m+n-k}                    
\end{tikzcd}
\end{equation}
This is because, in $\categoryname{Mat}_{F}$, a parity-check matrix for $\mathcal{R}$ is a universal coequalizer with kernel $\mathcal{R}$ and all such coequalizers are uniquely comparable as above. 

Finally, note that  $H$ is a parity-check matrix and $G$ is a generator matrix for the same subspace if and only if $H G^T=0$ and this is an {\em exact\/} composite in the sense that 
the image of $G^T$ is the kernel of $H$ (both of which are the subspace being classified).  Exactness of the composite is assured provided ${\sf Rank}(H)+{\sf Rank}(G)=m+n$, which 
is always the case for generator and parity check matrices for the same subspace.

As mentioned above, the parity-check and generator matrices for a relation are not unique. The rows of $G$ are linearly independent and form a basis for the linear subspace $\mathcal R$. Changing the order of rows, or making elementary row operations, such as adding one row to another row, does not change the linear subspace specified by $G$ or ${H}$. 
\begin{equation}
H =  \begin{pmatrix} 1_{m+n-k} & A  \end{pmatrix} \sigma: m+n \to m+n-k, \label{standard-formH}
\end{equation}
where $\sigma$ is a permutation matrix.  Corresponding to this is a generator matrix in the standard form:
\begin{equation}
G =  \begin{pmatrix} -A^T & 1_{k} \end{pmatrix} \sigma: m+n  \to k \label{standard-formG}
\end{equation}
where the same matrix $A$ and permutation $\sigma$ appear in equations \eqref{standard-formH} and \eqref{standard-formG} (see \cite{huffman2010fundamentals}).  
One can see that this $G$ is a generator matrix  as (using the fact that $\sigma^T = \sigma^{-1}$):
\[ H G^T = \begin{pmatrix} 1_{m+n-k} & -A  \end{pmatrix} \sigma \sigma^T \begin{pmatrix} A  \\ 1_{k} \end{pmatrix} = \begin{pmatrix} 1_{m+n-k} & -A  \end{pmatrix} \begin{pmatrix} A  \\ 1_{k} \end{pmatrix}= A - A =0 \]
and the ranks make this composite exact.  The standard form for a generator parity-check pair is, thus determined by $(A,\sigma)$.

While standard forms for characterizing a given $\mathcal{R}$ are not unique, the choice of $\sigma$ completely determines the pair $(A,\sigma)$.  This can be seen from the commuting diagram:
\begin{equation}
\begin{tikzcd}
                                              & m+n-k \arrow[d, "P"'] \arrow[rrd, "1_{m+n-k}"] &  &                          \\
                                              & m+n \arrow[rr, "H"]                    &  & m+n-k \arrow[dd, "\alpha"] \\
N \arrow[rd, "\sigma'"'] \arrow[ru, "\sigma"] &                                      &  &                          \\
                                              & m+n \arrow[rr, "H'"']                  &  & m+n-k                     
\end{tikzcd}
\end{equation}
where $P:=\begin{pmatrix} 0 \\ 1 \end{pmatrix}$.
This shows that the isomorphism $\alpha$ between parity-check matrices $\sigma H$ and $\sigma' H'$ for a relation $\mathcal{R}$ is
$\alpha=H'\sigma'\sigma^{-1} P$.  Thus, for a fixed $\sigma H$, an $\alpha$ which produces $\sigma' H'$ is completely determined by the 
permutation $\sigma'$. 
Suppose two relations $\mathcal{R}_1$ and $\mathcal{R}_2$ between the same vector spaces are described by the parity-check and generator matrices $H_1$, $G_1$ and $H_2$, $G_2$ respectively, then it is easy to see that $\mathcal{R}_1 \subset \mathcal{R}_2$ if and only if $H_2 G_1=0$. 

Two relations which can be used to convert inputs into outputs, and vice versa are:
\[ 
   \textit{cap}=\{(0,(x,x))~|~x\in F)\} ~~ \mbox{and} ~~  \textit{cup}=\{((x,x),0)~|~x\in F\}
\]
We may use cups and caps to convert any relation from $F^m$ to $F^n$ into a {\em state} (where $m=0$) or an effect (where $n=0$), or indeed into a relation between $F^{m'}$ and $F^{n'}$, where $m'+n'=m+n$.  In particular, these relations allows one to recover a relation from its parity matrix and from its generator matrix as shown below:

\begin{center}
\begin{tikzpicture}[scale=0.35]
	\begin{pgfonlayer}{nodelayer}
		\node [style=none] (0) at (14, 3) {};
		\node [style=none] (1) at (19, 3) {};
		\node [style=none] (2) at (14, 1) {};
		\node [style=none] (3) at (19, 1) {};
		\node [style=none] (4) at (16.5, 2) {$R$};
		\node [style=none] (5) at (14.75, 3) {};
		\node [style=none] (6) at (18.25, 3) {};
		\node [style=none] (7) at (14.75, 1) {};
		\node [style=none] (8) at (18.25, 1) {};
		\node [style=none] (9) at (14.75, 5) {};
		\node [style=none] (10) at (18.25, 5) {};
		\node [style=none] (11) at (14.75, -1.25) {};
		\node [style=none] (12) at (18.25, -1.25) {};
		\node [style=none] (13) at (16.5, -2.75) {$\textit{n wires}$};
		\node [style=none] (14) at (16.5, 0.5) {$\ldots$};
		\node [style=none] (15) at (16.5, 3.5) {$\ldots$};
		\node [style=none] (16) at (16.5, 6) {$\textit{m wires}$};
		\node [style=none] (17) at (10.75, 2.25) {$=$};
		\node [style=none] (18) at (-0.25, 0.75) {};
		\node [style=none] (19) at (4.75, 0.75) {};
		\node [style=none] (20) at (-0.25, 2.75) {};
		\node [style=none] (21) at (4.75, 2.75) {};
		\node [style=none] (22) at (2.25, 1.75) {$H$};
		\node [style=none] (23) at (0.5, 0.75) {};
		\node [style=none] (24) at (3.5, 0.75) {};
		\node [style=none] (25) at (0.5, 2.75) {};
		\node [style=none] (26) at (4, 2.75) {};
		\node [style=none] (27) at (0.5, -1.25) {};
		\node [style=none] (28) at (3.5, -1.25) {};
		\node [style=none] (29) at (0.5, 5) {};
		\node [style=none] (30) at (4, 5) {};
		\node [style=none] (31) at (2.5, 6) {$\textit{n+m-k wires}$};
		\node [style=none] (32) at (2.25, 3.25) {$\ldots$};
		\node [style=none] (33) at (2.25, 0.25) {$\ldots$};
		\node [style=none] (34) at (2.25, -2.75) {$\textit{n wires}$};
		\node [style=new style 4] (35) at (0.5, 5) {};
		\node [style=new style 4] (36) at (4, 5) {};
		\node [style=none] (37) at (4, 0.75) {};
		\node [style=none] (38) at (4.5, 0.75) {};
		\node [style=none] (39) at (7, 0.75) {};
		\node [style=none] (40) at (8.25, 0.75) {};
		\node [style=none] (41) at (5.5, -0.75) {};
		\node [style=none] (42) at (6.5, -1.5) {};
		\node [style=none] (43) at (7, 4.75) {};
		\node [style=none] (44) at (8.25, 4.75) {};
		\node [style=new style 0] (45) at (5.5, -0.75) {};
		\node [style=new style 0] (46) at (6.5, -1.5) {};
		\node [style=none] (47) at (6.5, -2.75) {$\textit{m cups}$};
		\node [style=none] (48) at (7.75, 4.75) {$\ldots$};
		\node [style=none] (49) at (26.25, 3) {};
		\node [style=none] (50) at (31.25, 3) {};
		\node [style=none] (51) at (26.25, 1) {};
		\node [style=none] (52) at (31.25, 1) {};
		\node [style=none] (53) at (28.75, 2) {$G^{T}$};
		\node [style=none] (54) at (27, 3) {};
		\node [style=none] (55) at (30, 3) {};
		\node [style=none] (56) at (27, 1) {};
		\node [style=none] (57) at (30.5, 1) {};
		\node [style=none] (58) at (27, 5) {};
		\node [style=none] (59) at (30, 5) {};
		\node [style=none] (60) at (27, -1.25) {};
		\node [style=none] (61) at (30.5, -1.25) {};
		\node [style=none] (62) at (29, -2.75) {$\textit{k wires}$};
		\node [style=none] (63) at (28.75, 0.5) {$\ldots$};
		\node [style=none] (64) at (28.75, 3.5) {$\ldots$};
		\node [style=none] (65) at (28.5, 6.5) {$\textit{m wires}$};
		\node [style=new style 0] (66) at (27, -1.25) {};
		\node [style=new style 0] (67) at (30.5, -1.25) {};
		\node [style=none] (68) at (30.5, 3) {};
		\node [style=none] (69) at (31, 3) {};
		\node [style=none] (70) at (33.5, 3) {};
		\node [style=none] (71) at (34.75, 3) {};
		\node [style=none] (72) at (32, 4.5) {};
		\node [style=none] (73) at (33, 5.25) {};
		\node [style=none] (74) at (33.5, -1) {};
		\node [style=none] (75) at (34.75, -1) {};
		\node [style=new style 0] (76) at (32, 4.5) {};
		\node [style=new style 0] (77) at (33, 5.25) {};
		\node [style=none] (78) at (33, 6.5) {$\textit{n caps}$};
		\node [style=none] (79) at (34.25, -1) {$\ldots$};
		\node [style=none] (80) at (23.25, 2) {$=$};
	\end{pgfonlayer}
	\begin{pgfonlayer}{edgelayer}
		\draw (0.center) to (1.center);
		\draw (2.center) to (3.center);
		\draw (3.center) to (1.center);
		\draw (0.center) to (2.center);
		\draw (9.center) to (5.center);
		\draw (10.center) to (6.center);
		\draw (7.center) to (11.center);
		\draw (8.center) to (12.center);
		\draw (18.center) to (19.center);
		\draw (20.center) to (21.center);
		\draw (21.center) to (19.center);
		\draw (18.center) to (20.center);
		\draw (27.center) to (23.center);
		\draw (28.center) to (24.center);
		\draw (25.center) to (29.center);
		\draw (26.center) to (30.center);
		\draw [bend right=45, looseness=1.25] (38.center) to (42.center);
		\draw [bend right=45] (42.center) to (40.center);
		\draw [bend right=45] (37.center) to (41.center);
		\draw [bend right=45, looseness=1.25] (41.center) to (39.center);
		\draw (39.center) to (43.center);
		\draw (40.center) to (44.center);
		\draw (49.center) to (50.center);
		\draw (51.center) to (52.center);
		\draw (52.center) to (50.center);
		\draw (49.center) to (51.center);
		\draw (58.center) to (54.center);
		\draw (59.center) to (55.center);
		\draw (56.center) to (60.center);
		\draw (57.center) to (61.center);
		\draw [bend left=45, looseness=1.25] (69.center) to (73.center);
		\draw [bend left=45] (73.center) to (71.center);
		\draw [bend left=45] (68.center) to (72.center);
		\draw [bend left=45, looseness=1.25] (72.center) to (70.center);
		\draw (70.center) to (74.center);
		\draw (71.center) to (75.center);
	\end{pgfonlayer}
\end{tikzpicture}
\end{center}

As described in \cite{pawelblog,comfort2021graphical} orthogonal complementation provides a functor on linear relations  
\[ (\_)^{\perp}:\categoryname{LinRel}_{F} \to \categoryname{LinRel}_{F}; \mathcal{R} \mapsto \mathcal{R}^\perp = \{ (x',y') | \forall (x,y) \in \mathcal{R}. x^Tx' -y^T y' = 0 \} \]
which, furthermore, has the effect of flipping parity-checking  and generating:

\begin{proposition}\label{ortho}
To have a parity-check (respectively generator) matrix  of $\mathcal{R}$ is precisely to have a generator (respectively parity-check) matrix of $\mathcal{R}^\perp$.
\end{proposition}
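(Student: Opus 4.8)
The plan is to turn orthogonal complementation into an explicit operation on the matrices that present a relation. Write $J$ for the diagonal $(m+n)\times(m+n)$ matrix $\begin{pmatrix} 1_m & 0 \\ 0 & -1_n \end{pmatrix}$, with $+1$ on the coordinates of the first (``$x$'') factor and $-1$ on those of the second (``$y$'') factor, so that the pairing defining $(\_)^\perp$ is $\langle v, v' \rangle = v^T J v'$ and $\mathcal R^\perp = \{ v' \mid v^T J v' = 0 \text{ for all } v \in \mathcal R \}$. The only features of $J$ that matter are that it is symmetric and an involution: $J^T = J$ and $J^2 = 1_{m+n}$. I will show that right multiplication by $J$ sends a generator matrix of $\mathcal R$ to a parity-check matrix of $\mathcal R^\perp$ and a parity-check matrix of $\mathcal R$ to a generator matrix of $\mathcal R^\perp$, and that this is a bijection.

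The two computations are short. If $G : m+n \to k$ is a generator matrix for $\mathcal R$ (so $\mathcal R = \{ G^T t \mid t \in F^k\}$ and $\operatorname{rank}(G) = k = \dim \mathcal R$), then for $v \in F^{m+n}$ we have $v \in \mathcal R^\perp$ iff $(G^T t)^T J v = t^T (GJ v) = 0$ for all $t \in F^k$, i.e.\ iff $GJ v = 0$; hence $\ker(GJ) = \mathcal R^\perp$. Since $\operatorname{rank}(GJ) = \operatorname{rank}(G) = k$, the matrix $GJ$ is surjective onto $F^k$, so it is a parity-check matrix for $\mathcal R^\perp$, and rank--nullity gives $\dim \mathcal R^\perp = (m+n) - \dim \mathcal R$ as a byproduct. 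Dually, if $H : m+n \to m+n-k$ is a parity-check matrix for $\mathcal R$ (so $\mathcal R = \ker H$), the same manipulation shows $\operatorname{im}\big((HJ)^T\big) = \operatorname{im}(J H^T) \subseteq \mathcal R^\perp$, and this is an equality since both spaces have dimension $\operatorname{rank}(H) = (m+n) - \dim\mathcal R = \dim\mathcal R^\perp$; thus $HJ$ is a generator matrix for $\mathcal R^\perp$. Because $J^2 = 1$, the assignments $G \mapsto GJ$ and $H \mapsto HJ$ are mutually inverse, so each of the two statements is a bijective correspondence --- and it is compatible with the comparison isomorphisms discussed above, since $H \mapsto HJ$ commutes with left multiplication by invertible matrices. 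Finally, applying the first computation with $\mathcal R^\perp$ in place of $\mathcal R$, together with the trivial inclusion $\mathcal R \subseteq (\mathcal R^\perp)^\perp$, gives $(\mathcal R^\perp)^\perp = \mathcal R$, which yields the ``respectively'' versions of the statement by interchanging $\mathcal R$ and $\mathcal R^\perp$.

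I do not expect a real obstacle here: the proposition is in essence the observation that $(\_)^\perp$ acts as ``transpose twisted by the signature $J$'' on matrix presentations. The only thing that needs care, since $F$ is a finite field, is not to import intuition from real inner-product spaces --- over $F$ one may well have $\mathcal R \cap \mathcal R^\perp \neq 0$, so $\mathcal R^\perp$ is not a direct-sum complement of $\mathcal R$. What the argument genuinely uses, namely $\dim \mathcal R + \dim \mathcal R^\perp = m+n$ and $(\mathcal R^\perp)^\perp = \mathcal R$, is still valid because the form $v^T J v'$ is nondegenerate ($J$ is invertible), and in any case both facts have already dropped out of the computation above via rank--nullity, so no separate theory of bilinear forms is needed. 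As a consistency check, if $G, H$ are a generator/parity-check pair for $\mathcal R$ then $(GJ)(HJ)^T = G J J H^T = G H^T = (H G^T)^T = 0$ and $\operatorname{rank}(GJ) + \operatorname{rank}(HJ) = \operatorname{rank}(G) + \operatorname{rank}(H) = m+n$, exhibiting $GJ$ and $HJ$ as a parity-check/generator pair for the subspace $\ker(GJ)$, which by the above is $\mathcal R^\perp$.
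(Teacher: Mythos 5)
Your proof is correct and follows essentially the same route as the paper's: evaluate the defining pairing of $\mathcal{R}^\perp$ against the image of $G^T$ to identify $\mathcal{R}^\perp$ as a kernel, then dualize. In fact you are more careful than the paper on two points: you track the signature matrix $J=\mathrm{diag}(1_m,-1_n)$ coming from the minus sign in the definition of $(\_)^\perp$ (so the parity-check matrix of $\mathcal{R}^\perp$ is $GJ$, i.e.\ $G$ with its $y$-columns negated, rather than $G$ itself --- a sign the paper's proof silently drops), and you establish both directions of the ``respectively'' statement with the accompanying rank and dimension bookkeeping, whereas the paper only writes out the generator-to-parity-check direction.
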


\begin{proof}
Let $\mathcal R$ be a rank $k$ relation described by generator matrix $G$. Then, for all $x \in \mathcal R^\perp$, and for all $u \in F^k$ we have $x^T G^T u=0$, by definition of $\mathcal R^\perp$. This means $x \in \mathcal R^\perp$ if and only if $Gx=0$, which means $G$ is a parity-check matrix for $\mathcal R^\perp$.
\end{proof}

This result is well-known in the coding theory literature \cite{huffman2010fundamentals} and says that a parity-check of a code $C$ is a generator matrix of the dual code $C^{\perp}$ and a generator matrix for $C$ is a parity-check matrix for the dual code $C^{\perp}$.

A useful observation describes the orthogonal of the relation of a map: Note that there is an inclusion functor $\iota:\categoryname{Mat}_{F}\to \categoryname{LinRel}_{F};~M\mapsto \{(x,y)~|~y=Mx\} $
\begin{lemma}
If $M: m\to n \in \categoryname{Mat}_{F}$ then the orthogonal complement of $\iota(M)$ is:
\[
\iota(M)^{\perp}=\{(x,y)~|~x=M^{T}y\}
\]
\label{ortho-lemma}
\end{lemma}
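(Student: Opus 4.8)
The plan is to unfold the definition of orthogonal complementation and read off the answer; this is essentially Proposition~\ref{ortho} applied to a particularly simple generator matrix. Recall that $\iota(M)=\{(x,Mx)\mid x\in F^m\}\subseteq F^m\oplus F^n$. By the definition of $(\_)^{\perp}$ given above, a pair $(x',y')$ lies in $\iota(M)^{\perp}$ exactly when $x^Tx'-(Mx)^Ty'=0$ holds for every $x\in F^m$.

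The first step is to rewrite $(Mx)^Ty'=x^TM^Ty'$ by the standard transpose identity, so that the membership condition becomes $x^T(x'-M^Ty')=0$ for all $x\in F^m$. From here both inclusions are immediate. If $x'=M^Ty'$ the condition holds identically, which gives $\{(x,y)\mid x=M^Ty\}\subseteq\iota(M)^{\perp}$. Conversely, $x^Tv=0$ for all $x\in F^m$ forces $v=0$ (let $x$ range over the standard basis of $F^m$, i.e.\ use non-degeneracy of the standard pairing), and applying this to $v=x'-M^Ty'$ yields $x'=M^Ty'$. Combining the two inclusions gives $\iota(M)^{\perp}=\{(x,y)\mid x=M^Ty\}$.

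There is no real obstacle here: the lemma is a one-line computation, and the only points that need care are the transpose bookkeeping and the sign carried by the output block in the bilinear form $x^Tx'-y^Ty'$. For a more structural derivation one can instead observe that the columns of $\begin{pmatrix}1_m\\ M\end{pmatrix}$ span $\iota(M)$, so this matrix is a generator matrix of $\iota(M)$; by Proposition~\ref{ortho} it is (after the sign flip on the output block demanded by the form) a parity-check matrix of $\iota(M)^{\perp}$, whose kernel is precisely $\{(x,y)\mid x=M^Ty\}$.
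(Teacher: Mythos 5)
Your proof is correct. It differs from the paper's in a small but genuine way: the paper only verifies the \emph{one} inclusion $\{(M^Ty,y)\} \subseteq \iota(M)^{\perp}$ by computing ${x}^T M^T {y}-{x}^T M^T {y}=0$, and then closes the gap with a dimension count, asserting $\dim \iota(M)^{\perp}=n$ up front and matching it against the $n$-dimensional subspace $V=\{(M^Ty,y)\}$. You instead establish both inclusions directly: the reverse inclusion follows from non-degeneracy of the standard pairing (letting $x$ range over the standard basis forces $x'-M^Ty'=0$). What your route buys is that you never need to invoke the general fact $\dim U^{\perp} = \dim(F^m\oplus F^n) - \dim U$, which the paper uses without proof; what the paper's route buys is that it only requires checking that the form vanishes on a candidate subspace, which is the single displayed computation. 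Your closing remark --- that $\begin{pmatrix}1_m\\ M\end{pmatrix}$ is a generator matrix for $\iota(M)$ and hence, via Proposition~\ref{ortho}, essentially a parity-check matrix for $\iota(M)^{\perp}$ --- is a nice structural reformulation consistent with how the paper uses this lemma later, though you are right to flag the sign on the output block: the paper's form is $x^Tx'-y^Ty'$, not the plain inner product, so the parity-check matrix one reads off is $\begin{pmatrix}1_m & -M^T\end{pmatrix}$ up to sign conventions, whose kernel is indeed $\{(x,y)\mid x=M^Ty\}$.
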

\begin{proof}
First note that $\dim \iota(M)^\perp=n$, so we need construct a subspace $V$ of $F^m\oplus F^n$ of dimension $n$. The subspace encoded by $\iota(M)$ is: $({x},M{x})$ for all ${x} \in F^m$. Define $V$ as $(M^T {y}, {y})$ for all ${y} \in F^n$. $V$ has dimension $n$. The inner product of any vector in $V$ with any vector in $\iota(M)$, computed as:
\begin{equation*}
    {x}^T M^T {y}-{x}^T M^T {y}=0.
\end{equation*}
Therefore $\iota(M)^{\perp}=V$.
\end{proof}


\subsection{Lagrangian Relations and Parity-check Matrices}

Symplectic forms are used in classical physics to distinguish position and momentum, and similarly, in electrical engineering, to distinguish current and voltage.
Lagrangian relations \cite{weinstein1987symplectic} are linear relations between symplectic vector spaces which, in addition, are invariant under taking their symplectic dual.   These 
relations over finite fields play an important role in quantum mechanics as they model stabilizer circuits for $p$-dits \cite{comfort2021graphical}.  They now become the main 
object of study in this paper.

\begin{definition}
A \textbf{symplectic form} is a bilinear map $\langle \_,\_\rangle: V \otimes V\to F$ which satisfies:
\begin{itemize}
\item $\langle x,x \rangle=0$ for every $x \in V$;
\item Any $x$ such that $\langle{x,y}\rangle=0$ for every $y \in V$ implies $x=0$.
\end{itemize}
A vector space $V$ equipped with a symplectic form $\langle\_,\_\rangle$ is called a \textbf{symplectic vector space.}
\end{definition}

One can always express a bilinear form in matrix form as $\langle x,y \rangle:= x^T  S y$.  For a symplectic form one always has  $S^{-1}=-S=S^{T}$.
Furthermore, a symplectic vector space always has a Darboux basis:  this is a basis $q_1,...,q_n,p_1,...,p_n$ for which 
$\langle q_i,q_j \rangle = \langle p_i,p_j \rangle = \delta_{i,j}$ and $\langle q_i,p_j \rangle = \langle p_i,q_j \rangle = -\delta_{i,j}$ (where $\delta_{i,j} =1$ when $i=j$ and 
zero otherwise).  In particular, this means that a symplectic vector space always has even dimension.  In the special case when the canonical basis of the vector space 
is a Darboux basis, the symplectic form can be expressed using the $2n \times 2n$ block matrix $J$: 
\[ \langle x,y \rangle:=  \begin{pmatrix} q^T & p^T \end{pmatrix} J \begin{pmatrix} q \\ p \end{pmatrix}  ~~~\mbox{where} ~~~
J =\begin{pmatrix}
0 & 1_{n}\\
-1_{n} & 0\\
\end{pmatrix}
\]

When a symplectic vector space has its canonical basis a Darboux basis, its vectors are naturally graded. 
In classical physics one regards $q^T=(q_1 \ldots, q_n)$ as position and $p^T=(p_1 \ldots p_n)$ as momenta coordinates.  

\begin{definition}
The \textbf{symplectic dual} of a linear subspace $U \subseteq V$ of a symplectic vector space $V$ is the linear subspace 
$U':=\{u' \in V | \forall u \in U, \langle u',u \rangle=0\} \subseteq V$.
\end{definition}

Suppose a linear subspace $U \subseteq V$ is specified by the generator matrix $G$ then a vector 
$u' \in V$ is in $U'$ if and only of it satisfies $u' S G t = 0$ for all $t \in F^k$. This means that $U'$ can be specified using 
the parity-check matrix $H' := G  S^T$, which means $H' S = G$.  Similarly, letting $H$ be a parity-check matrix  
matrix for $U$ then one obtains a generator matrix $G'$ for $U'$ as  $G':= H S$.  Thus, similarly to the situation for orthogonal 
subspaces, for symplectic duals one can flip between generating and parity-checking.

\begin{definition}
A \textbf{Lagrangian subspace} $U$ of a symplectic vector space $V$ is linear subspace which is its own symplectic dual, so that $U=U'$.
\end{definition} 

When a subspace $U \subseteq V$ of a symplectic vector space is Lagrangian one can obtain from a parity-check matrix $H$ a generator matrix $G$, using the above 
observations, as $G = H S$.  In fact, it is clear that a subspace is Lagrangian if and only if a parity-check and generator matrices can be chosen to satisfy
$G=H S$.  This has the consequence that the codomain of $G$ and $H$ must have the same dimension, so if $V$ has dimension $2n$ a Lagrangian relation 
must have dimension $n$.

We shall now specialize to the case in which the canonical basis and the Darboux basis coincide and take the states to be given by Lagrangian relations.  This gives a prop $\categoryname{LagRel}_F$ (see \cite{comfort2021graphical}) which, furthermore, is a hypergraph category: the Fr\"obenius structure is given by pairing black and white spiders.  Note that to obtain a relation from a state it is necessary to use the paired black and white cups and this introduces the negation in the last coordinate of Lagrangian relations which are described 
below. 

The prop $\categoryname{LagRel}_F$ may be described as follows:
\begin{description}
\item[Objects] $n \in \mathbb{N}$: correspond to the graded vector spaces $F^n \oplus F^n$ equipped with the cannonical symplectic form given by $J$, described above.  
\item[Maps]  $\mathcal{L}: n \to m$: are relations $\mathcal{R} \subseteq F^{n+m} \oplus F^{n+m}$  where the state $\mathcal{L} = \{ ((p,q),(p',-q'))|  (p,q),(p',q'))\in \mathcal{R} \}$ is 
Lagrangian.
\item[Tensor:]  Given by addition on objects and the (graded) direct sum on maps.
\end{description}

The composition of Lagrangian relations is relational composition and is described in \cite{fong2016algebra,comfort2021graphical}.

Our main interest is to obtain a characterization of the states of this prop -- which are of course just Lagrangian relations.  First observe that parity-check and generator matrices of Lagrangian relations can be graded
\[  H= 
\begin{pmatrix}
H_{q} ~|~ H_{p}\\
\end{pmatrix}  ~~~\mbox{and}~~~ G=
\begin{pmatrix}
G_{q} ~|~G_{p}\\
\end{pmatrix}  \]
so that, using the relation $G = H S$, discussed  above with $S:=J$, we have:
\begin{equation*}
\begin{pmatrix}
H_{q} ~|~ H_{p}\\
\end{pmatrix}
\begin{pmatrix}
0 & 1\\
-1 & 0\\
\end{pmatrix}
=\begin{pmatrix}
G_{q}\\
G_{p}\\
\end{pmatrix}
\end{equation*}
which implies that 
\begin{equation*}
    H_p=-G_q,\quad H_q=G_p.
\end{equation*}

Note that an arbitrary permutation of the coordinates view as a relation will not in general be a Lagrangian relation.  For a permutations to be a Lagrangian relation it must preserve the symplectic form and so be a {\em symplectomorphism\/} or a {\em symplectic permutation\/}.  Thus, it must not only respect the grading but also permute the two grades in the same manner.  This means a symplectic permutation can be written as:
\[ \sigma_S =  \begin{pmatrix} \sigma & 0 \\ 0 &\sigma \end{pmatrix}. \]

We saw that parity-check and generator matrices of a linear relation can be brought into a standard form. We can attempt to do the same for Lagrangian states, however, we will also want the permutation matrix $\sigma$ in \eqref{standard-formH} and \eqref{standard-formG} that permutes the columns of $G$ and $H$ to be Lagrangian. Taking this into account we can state the main result of the paper:

\begin{theorem}\label{standard form theorem} The parity-check matrix $H$ for a Lagrangian subspace $\mathcal R \subseteq (F^n)^2$ can be put into the following standard form:
\begin{equation}
    H = \begin{pmatrix} {Y} & 0 & 1_{n_p} & A^T\\
     -A & 1_{n_q} & 0 & 0  \end{pmatrix}\sigma_{S} \label{standard-form-lagrangian}
\end{equation}
where $n_p+n_q=n$ (the dimension of $\mathcal{R}$), $\sigma_S$ is a symplectic permutation, $A$ has dimensions $n_q \times n_p$, $Y$, and $Y=Y^T$. 
\end{theorem}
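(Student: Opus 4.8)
The plan is to begin with an arbitrary parity-check matrix $H$ for $\mathcal{R}$, which we may take to be an $n\times 2n$ matrix of rank $n$ (a Lagrangian subspace of the $2n$-dimensional symplectic space has dimension $n$, so its parity-check matrix has rank $2n-n=n$), and to reduce it using only the operations that neither change the classified subspace nor destroy the Lagrangian condition: row operations $H\mapsto \beta H$ with $\beta$ invertible, and symplectic column permutations $H\mapsto H\sigma_S$ with $\sigma_S=\mathrm{diag}(\sigma,\sigma)$. First I would restate the Lagrangian hypothesis in matrix form. Grading $H=(H_q\mid H_p)$ with each block $n\times n$, the fact (established above) that $G:=HS=HJ$ is a generator matrix for the same subspace forces $HG^T=-HJH^T=0$; unwinding, $HJH^T=H_qH_p^T-H_pH_q^T$, so the hypothesis is exactly that $H_qH_p^T$ is symmetric. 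Both reduction moves preserve this, since $(\beta H)J(\beta H)^T=\beta(HJH^T)\beta^T$ and $\sigma_S J\sigma_S^T=J$.

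Next, set $n_p:=\mathrm{rank}(H_p)$ and $n_q:=n-n_p$. Row-reducing $H$ I can make the bottom $n_q$ rows of $H_p$ vanish while the top $n_p$ rows of $H_p$ remain independent; selecting $n_p$ pivot columns of that top block and moving the corresponding modes to the front with a symplectic permutation, then rescaling the top $n_p$ rows, brings $H_p$ to the shape $\left(\begin{smallmatrix}1_{n_p}&B\\0&0\end{smallmatrix}\right)$. Writing the (permuted) $H_q$ in matching blocks $\left(\begin{smallmatrix}Q_{11}&Q_{12}\\Q_{21}&Q_{22}\end{smallmatrix}\right)$, the still-valid symmetry of $H_qH_p^T=\left(\begin{smallmatrix}Q_{11}+Q_{12}B^T&0\\Q_{21}+Q_{22}B^T&0\end{smallmatrix}\right)$ yields two facts: $Q_{21}=-Q_{22}B^T$, and $Q_{11}+Q_{12}B^T$ is symmetric. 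Moreover the bottom $n_q$ rows of $H$ are $(Q_{21}\mid Q_{22}\mid 0\mid 0)$ with $(Q_{21}\mid Q_{22})=Q_{22}(-B^T\mid 1_{n_q})$, and since the top $n_p$ rows already contribute rank $n_p$ through their $1_{n_p}$ block and $H$ has rank $n$, the block $(Q_{21}\mid Q_{22})$ must have rank $n_q$, which (as $(-B^T\mid 1_{n_q})$ is already of full row rank $n_q$) forces $Q_{22}$ to be invertible. This invertibility is what produces the block $1_{n_q}$ in the standard form.

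Finally I would left-multiply the bottom $n_q$ rows by $Q_{22}^{-1}$, turning them into $(-B^T\mid 1_{n_q}\mid 0\mid 0)$, and then subtract $Q_{12}$ times these rows from the top $n_p$ rows; since the bottom rows have zero $p$-part this leaves the $p$-blocks $1_{n_p}$ and $B$ untouched, replaces $Q_{12}$ by $0$, and replaces $Q_{11}$ by $Q_{11}+Q_{12}B^T$, which is symmetric. Setting $Y:=Q_{11}+Q_{12}B^T$ and $A:=B^T$ (of size $n_q\times n_p$) gives exactly \eqref{standard-form-lagrangian}. The one delicate point — and the reason to be careful about the order of operations — is that here the column permutation is constrained to be symplectic, so we cannot independently reduce the $q$- and $p$-halves; the argument gets around this by choosing all the pivot data from $H_p$ alone and then letting the isotropy identity $H_qH_p^T=H_pH_q^T$ together with the rank constraint dictate the shape of the $q$-block (in particular the invertibility of $Q_{22}$), rather than imposing that shape by further permutations.
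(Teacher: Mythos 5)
Your proof is correct and follows essentially the same route as the paper's: express the Lagrangian condition as symmetry of $H_qH_p^T$, standardize one half of $H$ by row operations and a symplectic column permutation, and let that symmetry together with the rank-$n$ condition force invertibility of the remaining diagonal block, which is then reduced to the identity. If anything your write-up is slightly tidier: you standardize $H_p$, which matches the stated final form (the paper's own proof swaps the roles of the two halves mid-argument), and your derivation of the invertibility of $Q_{22}$ from the factorization $(Q_{21}\mid Q_{22})=Q_{22}(-B^T\mid 1_{n_q})$ is cleaner than the paper's corresponding rank step.
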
 
\begin{proof}
First note that $H=(H_{q}|H_{p})$ and $HJH^{T}=0$ this implies that
\begin{equation*}
0=HJH^{T}=\begin{pmatrix}H_{q} | H_{p}\\
\end{pmatrix}
\begin{pmatrix}
0 & 1\\
-1 & 0\\
\end{pmatrix}
\begin{pmatrix}
H_{q}^{T}\\
H_{p}^{T}\\
\end{pmatrix}
=-H_{p}H_{q}^{T}+H_{q}H_{p}^{T} \implies H_{p}H_{q}^{T}=H_{q}H_{p}^{T}
\end{equation*}
This implies that $H_{p}H_{q}^{T}$ is self-transpose.
Consider $H=(H_{q}|H_{p})$ and putting $H_{q}$ in standard form that is:
\begin{equation*}
H_{q}=\begin{pmatrix}
1 & A^T \\
0 & 0\\
\end{pmatrix}\sigma_{q}
\end{equation*}
We can adjust $H_{p}$ to be $H_{p}=H'_{p}\sigma_{q}$ where $H'_{p}=\begin{pmatrix}
B_{1} & B_{2}\\
B_{3} & B_{4}\\
\end{pmatrix}$
Hence $H_{p}H_{q}^{T}$ is:
\begin{equation*}
H_{p}H_{q}^{T}=\begin{pmatrix}
B_{1} & B_{2}\\
B_{3} & B_{4}\\
\end{pmatrix}\sigma_{q}\sigma_{q}^{T}
\begin{pmatrix}
1 & 0\\
A & 0\\
\end{pmatrix}
\end{equation*}
Now since $\sigma_{q}^{T}=\sigma^{-1}_{q}$ we have the following:
\begin{equation*}
H_{p}H_{q}^{T}=
\begin{pmatrix}
B_{1}+B_{2}A & 0\\
B_{3}+B_{4}A & 0\\
\end{pmatrix}
\end{equation*}
Now since $H_{p}H_{q}^{T}$ is self-transpose, it follows that:
\begin{equation*}
B_{3}+B_{4}A=0
\end{equation*}
and
\begin{equation*}
B_{1}+B_{2}A=B_{1}^{T}+A^T B_{2}^{T}
\end{equation*}
But note that 
\begin{equation*}
(H_{q}|H_{p})= 
\begin{pmatrix}
1 & A & B_{1} & B_{2}\\
0 & 0 & B_{3} & B_{4}\\
\end{pmatrix}\sigma_{S}
\end{equation*}
has rank $n$ and so 
\begin{equation*}\textit{rank}
\begin{pmatrix}
0 & 0 & B_{2} & B_{4}\\
\end{pmatrix}=n_{p}
\end{equation*}
Note that $B_{2}$ depends on $B_{4}$ so $B_{4}$ has rank $n_{p}$ and so is invertible.   This means we can use row operations to reduce $B_{4}$ to the identity which results in the following form for $H_{p}$
\[
H_{p}=\begin{pmatrix}
B'_1 & B'_2 \\
B'_3 & B'_4\\
\end{pmatrix} := \begin{pmatrix}
Y & 0 \\
A' & 1\\
\end{pmatrix}
\]
where the same equations hold so $A'= -1 A=-A$ and $Y =  Y + 0 A = Y^T + A^T 0 = Y^T$.  Hence, finally, we get the desired form:
\begin{equation*}
    H = \begin{pmatrix} Y & 0 & 1_{n_p} & A^T\\
     -A & 1_{n_q} & 0 & 0  \end{pmatrix} \sigma_S 
\end{equation*}
where $\sigma_{S}$ is a symplectic permutation.
\end{proof}
Again, for any particular Lagrangian subspace this standard form, determined by the triple $(Y,A,\sigma)$, is not unique, however, as before, the $Y$ and $A$ depend on $\sigma$. Recall also this standard form is for states: for Lagrangian relations one must convert the relation to a state using Lagrangian cups.

As described in \cite{comfort2021graphical}, there is functor 
\[ L: \categoryname{LinRel_{F}}\to \categoryname{LagRel}_{F}: \mathcal{R} \mapsto \mathcal{R} \oplus \mathcal{R}^\perp\]
which pairs the relation with its orthogonal complement.  

Consider $L(\mathcal{R})$ where $\mathcal{R}$ is a state: to see that $L(\mathcal R)$ is a Lagrangian subspace, observe that one can choose the 
generator matrix of $L(\mathcal R)$ to be of the form
\begin{equation}
    G_{L(\mathcal R)}=\begin{pmatrix} G & 0 \\ 0 & -H \end{pmatrix}, \quad   H_{L(\mathcal R)}=\begin{pmatrix} H & 0 \\ 0 & G \end{pmatrix}
\end{equation}
where $G$ and $H$ are generator and parity-check matrices of $\mathcal R$. Then we observe that $G_{L(\mathcal R)} = H_{L(\mathcal R)} J$ 
verifying this is a (non-standard) parity-check generator pair for a Lagrangian relation.

Lemma \ref{ortho-lemma} tells us that the effect of $L$ on a linear function is :
\[L(\mathcal{R}_M) = \{ ((q,p),(q',p')) | q' = M q, p = M p' \}\] 


\subsection{Composition}

So far we have avoided discussion composition of linear relations (Lagrangian or otherwise) in terms of parity-check matrices.  The main reason for this 
is that describing the parity check matrix of a composite of two relations is a little more involved than might be expected.   Consequently reasoning about 
the composition is often easier using alternative characterizations of the relations (we will see this technique being employed heavily in the next section).   

That said composition can be calculated in terms of standardizing parity-check and generator matrices quite cleanly.  Curiously this does not seem to have been observed 
in the literature on parity-check and generator matrices so we have included a brief description of it here.

Composition of relations viewed categorically has two steps: first one forms the pullback of the two relations to be composed and, next, one has to quotient the pullback 
back to a relation.   Thus the first step is to form the pullback $P$ and the next is to quotient this to the composite relation ${\cal R}_2 \circ {\cal R}_1$:

 \[ \xymatrix{& & {\cal R}_2 \circ {\cal R}_1 \ar@/^1pc/[dddrr]^{\pi_1^{{\cal R}_2 \circ {\cal R}_1}} \ar@/_1pc/[dddll]_{\pi_0^{{\cal R}_2 \circ {\cal R}_1}} \\& & P \ar@{->>}[u]  \ar[dr]_{\pi_1^P} \ar[dl]^{\pi_0^P} \\ & {\cal R}_1 \ar[dl]^{\pi^{{\cal R}_1}_0}  \ar[dr]_{\pi^{{\cal R}_1}_1}&  & {\cal R}_2 \ar[dl]^{\pi^{{\cal R}_2}_0} \ar[dr]_{\pi^{{\cal R}_2}_1} \\  A & &  B & & C} \]
 
 This involves forming the pullback $P$ which is the same as forming the equalizer of 
 \[ \xymatrix{P \ar[rr]^{\langle \pi_0^P,  \pi_1^P \rangle} & & {\cal R}_1 \times {\cal R}_2 \ar@<1ex>[rr]^{\pi_0\pi_1^{{\cal R}_1}} \ar@<-1ex>[rr]_{\pi_1
 \pi_0^{{\cal R}_2}} & & B}. \]
 To calculate this one calculates a standardized parity-check matrix for  $ \pi_0\pi_1^{{\cal R}_1} - \pi_1\pi_0^{{\cal R}_2}$:
 \[ H  = \begin{pmatrix} 1  & A \end{pmatrix}: {\cal R}_1 \times {\cal R}_2 \to P. \] 
 One then ``inverts'' this to obtain a generator matrix $G^T = \begin{pmatrix} A \\ 1 \end{pmatrix}:  P \to {\cal R}_1 \times {\cal R}_2$.  Finally  one needs to calculate the 
 standardized generator matrix  for 
 \[ \langle \pi_0^P \pi_0^{{\cal R}_1}, \pi_1^P\pi_1^{{\cal R}_2} \rangle: P \to A \times C \]
 which is the map
 \[ \langle \pi_0^{{\cal R}_2 \circ {\cal R}_1}, \pi_1^{{\cal R}_2 \circ {\cal R}_1} \rangle: {\cal R}_2 \circ {\cal R}_1 \to A \times B. \]
 
 Here is an example to illustrate this algorithm:
  \begin{example}
 Consider linear maps $\mathcal{R}_{1}$ and $\mathcal{R}_{2}$ given as follows:
 \[
 \mathcal{R}_{1}=\begin{pmatrix}
 0 & 1\\
 1 & 0\\
 \end{pmatrix}~~~~~~~
 \mathcal{R}_{2}=\begin{pmatrix}
 0 & 1\\
 -1 & 0\\
 \end{pmatrix}
 \]
 The composition of these two linear maps is $\mathcal{R}_{3}$ given as follows:
 \[
 \mathcal{R}_{3}=
 \begin{pmatrix}
 0 & 1\\
 1 & 0\\
 \end{pmatrix}
 \begin{pmatrix}
 0 & 1\\
 -1 & 0\\
 \end{pmatrix}
 =
 \begin{pmatrix}
 -1 & 0\\
 0 & 1\\
 \end{pmatrix}
 \]
 To see this via composition of generator matrices first note that the generator matrix $G_{1}$ and $G_{2}$ corresponding to the relation $\mathcal{R}_{1}$ and $\mathcal{R}_{2}$ are:
 \[
 G_{1}=\begin{pmatrix}
 1 & 0 & 0 & 1\\
 0 & 1 & 1 & 0\\
 \end{pmatrix}~~~~~
 G_{2}=\begin{pmatrix}
 1 & 0 & 0 & 1\\
 0 & 1 & -1 & 0\\
 \end{pmatrix}
 \]
 The projection $\pi_{0}\pi_{1}^{\mathcal{R}_{1}}$ of the generator matrix $G_{1}$ is given as follows:
 \[
 \pi_{0}\pi_{1}^{\mathcal{R}_{1}}
 =
 \begin{pmatrix}
 1 & 0\\
 0 & 1\\
 0 & 0\\
 0 & 0\\
 \end{pmatrix}
 \begin{pmatrix}
 1 & 0 & 0 & 1\\
 0 & 1 & 1 & 0
 \end{pmatrix}
 \begin{pmatrix}
 0 & 0\\
 0 & 0\\
 1 & 0\\
 0 & 1\\
 \end{pmatrix}
 =\begin{pmatrix}
 0 & 1\\
 1 & 0\\
 0 & 0\\
 0 & 0\\
 \end{pmatrix}
 \]
 The projection $\pi_{1}\pi_{0}^{\mathcal{R}_{2}}$ for $G_{2}$ is given as follows:
 \[
 \pi_{1}\pi_{0}^{\mathcal{R}_{2}}
 =
 \begin{pmatrix}
 0 & 0\\
 0 & 0\\
 1 & 0\\
 0 & 1\\
 \end{pmatrix}
 \begin{pmatrix}
 1 & 0 & 0 & 1\\
 0 & 1 & -1 & 0
 \end{pmatrix}
 \begin{pmatrix}
 1 & 0\\
 0 & 1\\
 0 & 0\\
 0 & 0\\
 \end{pmatrix}
 =\begin{pmatrix}
 0 & 0\\
 0 & 0\\
 1 & 0\\
 0 & 1\\
 \end{pmatrix}
 \]
 Now one calculates the standardized parity-check using the formula  $ \pi_0\pi_1^{{\cal R}_1} - \pi_1\pi_0^{{\cal R}_2}$
 \[
  \pi_0\pi_1^{{\cal R}_1} - \pi_1\pi_0^{{\cal R}_2}=\begin{pmatrix}
  0 & 1\\
  1 & 0\\
  -1 & 0\\
  0 & -1\\
  \end{pmatrix}
 \]
 Now we calculate the $\pi^{p}_{0}\pi_{0}^{\mathcal{R}_{1}}$ and $\pi^{p}_{1}\pi_{1}^{\mathcal{R}_{2}}$
 \[
 \pi^{p}_{0}\pi_{0}^{\mathcal{R}_{1}}=
 \begin{pmatrix}
 0 & 1 & 1 & 0\\
 1 & 0 & 0 & 1\\
 \end{pmatrix}
 \begin{pmatrix}
 1 & 0\\
 0 & 1\\
 0 & 0\\
 0 & 0\\
 \end{pmatrix}
 \begin{pmatrix}
 1 & 0\\
 0 & 1\\
 \end{pmatrix}
 =\begin{pmatrix}
 0 & 1\\
 1 & 0\\
 \end{pmatrix}
 \]
 \[
 \pi^{p}_{1}\pi_{1}^{\mathcal{R}_{2}}=
  \begin{pmatrix}
 0 & 1 & 1 & 0\\
 1 & 0 & 0 & 1\\
 \end{pmatrix}
 \begin{pmatrix}
 0 & 0\\
 0 & 0\\
 1 & 0\\
 0 & 1\\
 \end{pmatrix}
 \begin{pmatrix}
 0 & 1\\
 -1 & 0\\
 \end{pmatrix}
 =\begin{pmatrix}
 0 & 1\\
 -1 & 0\\
 \end{pmatrix}
 \]
 Using this one can then write the composite generator matrix for $\mathcal{R}_{3}$ as:
 \[
 \begin{pmatrix}
 0 & 1 & 0 & 1\\
 1 & 0 & -1 & 0\\
 \end{pmatrix}
 \]
 Now putting this in standard form we get:
 \[
 G_{3}=\begin{pmatrix}
 1 & 0 & -1 & 0\\
 0 & 1 & 0 & 1 \\
 \end{pmatrix}
 \]
 which is the desired composite generator matrix for $R_{3}$
 \end{example}
 

\section{Kirchhoff Relations}


In this section we describe the category of Kirchhoff relations as a subcategory of $\categoryname{LagRel}$.  While similar categorical structures have been investigated in \cite{baez2015compositional, fong2016algebra}, we believe our approach, using parity-check matrices, besides being novel, brings some new subcategories to the fore.  


\subsection{Kirchhoff current law}


As remarked earlier objects in $\categoryname{LagRel}$ are graded vector spaces with the grading in classic physics being interpreted as position and momentum.   However, we wish now to switch interpretation so the grading represents voltage and current: to achieve this some additional structure seems necessary. The word ``current" implies the flow of some conserved quantity, while the word ``voltage'' implies a sort of gauge-invariance, which gives the freedom to shift all voltages by a constant.  Kirchhoff relations, which we now introduce, interpret these intuitions.  

\begin{definition}
A Lagrangian relation $\mathcal{R}: m \to n$  in $\categoryname{LagRel}$ satisfies: 
\begin{enumerate}[(i)]
\item \textbf{Kirchhoff's Current Law} if, for all $((q,p),(q',p')) \in \mathcal R$, the following equality holds:
\[  \sum_{j=1}^n p_j = \sum_{k=1}^m p'_k\]
\item \textbf{Translation invariance} if, whenever $\lambda \in F$  and $((q,p),(q',p')) \in \mathcal{R}$, then 
$((q+\vec{\lambda}_m,p),(q'+\vec{\lambda}_n,p')) \in \mathcal{R}$, where $\vec{\lambda}_n$ is a vector 
of dimension $n$ all of whose components are the same $\lambda \in F$.
\end{enumerate}
\end{definition} 

Kirchhoff's Current Law may be equivalently expressed as the requirement that $p^T \vec{1}_n  = {p'}^T \vec{1}_m$ while translation invariance amounts to 
insisting that $((\vec{1}_m,0),(\vec{1}_n,0))$ is a member of the relation. We now show that Kirchhoff's current law is, for Lagrangian relations,  equivalent to translation invariance . 
This result is essentially a version of Noether's theorem. 

\begin{proposition} \label{noether}
For a state $\mathcal{R}$ in $\categoryname{LagRel}_F$ the following are equivalent:
\begin{enumerate}[(i)]
    \item $\mathcal R$ satisfies the Kirchhoff current law;
    \item $\mathcal R$ satisfies translational invariance.
\end{enumerate}
\end{proposition}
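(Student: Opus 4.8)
The plan is to exploit the description of Lagrangian states via generator/parity-check matrices and the ``flip'' relation $G = HJ$, together with the standard form from Theorem \ref{standard form theorem}, to translate both conditions into a single algebraic statement about the matrices. First I would restate both conditions intrinsically: writing a state $\mathcal R \subseteq (F^n)^2$ with generator matrix $G = (G_q \mid G_p)$, the vectors of $\mathcal R$ are exactly the rows of $\begin{pmatrix} G_q^T & G_p^T\end{pmatrix}$ applied to parameters $t$, i.e.\ $(q,p) = (G_q^T t, G_p^T t)$. Kirchhoff's current law $p^T\vec 1 = 0$ (for a state, with the sign convention absorbed) says $t^T G_p \vec 1 = 0$ for all $t$, i.e.\ $G_p \vec 1 = 0$; equivalently $\vec 1_n \in \ker G_p = \ker H_p$ since $H_p = -G_q$ wait --- here I must be careful with which of $H,G$ has $\vec 1$ in its kernel. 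The cleanest route: Kirchhoff says every $(q,p)\in\mathcal R$ has $\langle \vec 1, p\rangle = 0$, which says the covector $(\,0 \mid \vec 1_n\,)$ annihilates $\mathcal R$, i.e.\ $(0\mid\vec 1)$ lies in the orthogonal complement paired with the symplectic form --- but $\mathcal R$ is Lagrangian, so $\mathcal R$ equals its own symplectic dual, hence $(0\mid\vec 1)$ annihilates $\mathcal R$ (w.r.t.\ $J$) precisely when $J(0\mid\vec 1)^T = (\vec 1 \mid 0)^T \in \mathcal R$. And $(\vec 1 \mid 0) \in \mathcal R$ is exactly the translation-invariance condition as reformulated in the paragraph before the proposition. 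That is the whole argument in one line.

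So the key steps, in order, are: (1) reformulate Kirchhoff's current law for a state as the assertion that the linear functional $v \mapsto \langle \vec 1_n, v_p\rangle$ vanishes on all of $\mathcal R$, i.e.\ $(0,\vec 1_n)$ lies in the annihilator $\mathcal R^{\circ}$ of $\mathcal R$ with respect to the standard pairing; (2) use the identity $\mathcal R^{\circ} = J \mathcal R'$ relating the standard annihilator to the symplectic dual $\mathcal R'$ (this is just the definition $\langle u',u\rangle = u'^T S u$ with $S = J$), so $(0,\vec 1_n)\in\mathcal R^\circ \iff J^{-1}(0,\vec 1_n)^T \in \mathcal R'$; (3) compute $J^{-1}(0,\vec 1_n)^T = (\vec 1_n, 0)^T$ (using $J^{-1} = -J = \begin{pmatrix} 0 & -1\\ 1 & 0\end{pmatrix}$, sign depending on conventions, but landing on $\pm(\vec 1_n,0)$, and the sign is irrelevant since $\mathcal R'$ is a subspace); (4) invoke that $\mathcal R$ is Lagrangian, so $\mathcal R' = \mathcal R$, giving $(0,\vec 1_n)\in\mathcal R^\circ \iff (\vec 1_n,0)\in\mathcal R$; (5) recognize, via the reformulation already given in the text (``translation invariance amounts to insisting that $((\vec 1_m,0),(\vec 1_n,0))$ is a member of the relation''), that $(\vec 1_n,0)\in\mathcal R$ is exactly translation invariance for the state. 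Chaining these equivalences gives (i) $\iff$ (ii).

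The main obstacle is bookkeeping of conventions rather than any genuine difficulty: one must track (a) the sign flip in the last coordinate built into the passage between relations and states in $\categoryname{LagRel}_F$, (b) the exact form of the symplectic matrix $J$ and of $J^{-1} = -J = J^T$, and (c) whether Kirchhoff's law for a state reads $p^T\vec 1_n = 0$ or $p^T \vec 1_n = p'^T\vec 1_m$ with $m=0$ --- these must be reconciled so that the functional picked out is genuinely $(0 \mid \vec 1_n)$ and its symplectic partner is genuinely $(\vec 1_n \mid 0)$. I would double-check step (3) explicitly by hand on a $2\times 2$ block computation, and note that because we only ever ask membership of a vector in a \emph{subspace}, every stray overall sign can be discarded, which makes the conventions harmless in the end. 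An alternative, more computational proof would instead take a standard-form parity-check matrix $H = \begin{pmatrix} Y & 0 & 1 & A^T \\ -A & 1 & 0 & 0\end{pmatrix}\sigma_S$ from Theorem \ref{standard form theorem}, write Kirchhoff's law as $\vec 1^T \in$ (row span conditions) and translation invariance as $H(\vec 1_n, 0)^T = 0$, and check these coincide; but the conceptual Noether-style argument above is shorter and I would present that as the main proof, perhaps remarking on the matrix version.
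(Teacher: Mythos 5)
Your proposal is correct and follows essentially the same route as the paper: both arguments reduce translation invariance to the single membership $(\vec 1_n,0)\in\mathcal R$, identify Kirchhoff's current law with the vanishing of the symplectic pairing $\langle u,(\vec 1_n,0)\rangle = -p^T\vec 1$ on all of $\mathcal R$, and then close the loop using the Lagrangian condition $\mathcal R = \mathcal R'$. The paper just performs your steps (1)--(3) as one explicit computation $u^TJ\epsilon_n=-p^T\vec 1$ rather than phrasing them through the annihilator $\mathcal R^{\circ}=J\mathcal R'$.
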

\begin{proof}
Let $\mathcal R$ be specified by a generator matrix $G$ and a parity-check matrix $H$ which satisfy  $G=HJ$.
\begin{description}
\item{{\em (ii)} $\Rightarrow$ {\em (i)}:}
If $\mathcal{R}$ is translationally invariant then if $u \in \mathcal R$ this implies $(u+\lambda \epsilon)\in \mathcal{R}$ where 
$u:=\begin{pmatrix} q\\\ p \end{pmatrix}$ and $\epsilon_n:=\begin{pmatrix} \vec{1}_n \\ 0 \end{pmatrix}$.
This implies that $H(u + \lambda \epsilon_n) = H(u) + \lambda H(\epsilon)=0$ as $H(u)=0$ this means, in turn that $H(\epsilon_n) = 0$, and so $\epsilon_n \in \mathcal R$ by substituting $\lambda=1$. 
Now the symplectic product of $\epsilon_n$ with any $u \in \mathcal{R}$ must be zero so 
\[ 0 = \langle u, \epsilon_n \rangle = u^T J \epsilon = \begin{pmatrix} q^T & p^T \end{pmatrix} \begin{pmatrix} 0 & 1 \\ -1 & 0 \end{pmatrix} \begin{pmatrix} \vec{1}\\0 \end{pmatrix} = -p^T \vec{1} \]
implying that $p^T \vec{1}=0$ which is Kirchhoff current law.
\item{{\em (i)} $\Rightarrow$ {\em (ii):}}
Conversely suppose one has conservation of momentum then we have $0=p^T \vec{1}=u^T J \epsilon_n$ $\forall u\in \mathcal R$ however $u^T J \epsilon =0$ implies $\epsilon \in \mathcal{R}$ which in turn implies translational invariance, thereby concluding the proof.
\end{description}
\end{proof}

\begin{definition}
A Lagrangian relation satisfying either of the above (equivalent) conditions, translation invariance or the Kirchhoff current law, is a \textbf{Kirchhoff relation}.
\end{definition}

A linear relation $\mathcal L: m \to n$ in $\categoryname{LinRel}_R$ is a \textbf{quasi-stochastic} whenever it  relates $\vec{1}_m$ to $\vec{1}_n$.  Similarly, a matrix 
$A: m \to n$ in $\categoryname{Mat}_R$ is \textbf{quasi-stochastic} if $A \vec{1}_m = \vec{1}_n$.
Thus, for a quasi-stochastic relation, the most disordered input quasi-probabilistic distribution must be related to the most disordered output distribution. 
\begin{proposition}\label{quasi-lemma}
A Lagrangian relation is a Kirchhoff relation if and only if every standard parity-check matrix, as determined by $(Y,A,\sigma)$,
has $A$ quasi-stochastic and $Y \vec{1} = 0$.
\end{proposition}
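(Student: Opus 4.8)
The plan is to reduce the Kirchhoff condition to the single membership statement $\epsilon_n \in \mathcal{R}$ via Proposition~\ref{noether}, and then read off what that membership says once the parity-check matrix is put into the standard form of Theorem~\ref{standard form theorem}. Since both of those results are stated for states, I would phrase everything in terms of the associated Lagrangian state, exactly as is done there; the passage through Lagrangian cups is just bookkeeping.

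First I would recall, from the proof of Proposition~\ref{noether}, that $\mathcal{R}$ is Kirchhoff precisely when the translation generator $\epsilon_n := \begin{pmatrix}\vec{1}_n\\ 0\end{pmatrix}$ (all voltage coordinates $1$, all current coordinates $0$) lies in $\mathcal{R}$. For any parity-check matrix $H = \begin{pmatrix} H_q & H_p\end{pmatrix}$ of $\mathcal{R}$, the condition $\epsilon_n \in \mathcal{R}$ is $H\epsilon_n = 0$, and since the current block of $\epsilon_n$ vanishes this is exactly $H_q\vec{1}_n = 0$.

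Next I would take $H$ in the standard form~\eqref{standard-form-lagrangian}, writing $\sigma_S = \begin{pmatrix}\sigma & 0\\ 0 & \sigma\end{pmatrix}$. The first $n$ columns read off $H_q = \begin{pmatrix} Y & 0\\ -A & 1_{n_q}\end{pmatrix}\sigma$, and since $\sigma$ is a permutation matrix it fixes $\vec{1}_n$, so
\[ H_q\vec{1}_n = \begin{pmatrix} Y & 0\\ -A & 1_{n_q}\end{pmatrix}\begin{pmatrix}\vec{1}_{n_p}\\ \vec{1}_{n_q}\end{pmatrix} = \begin{pmatrix} Y\vec{1}_{n_p}\\ \vec{1}_{n_q} - A\vec{1}_{n_p}\end{pmatrix}. \]
This is zero iff $Y\vec{1} = 0$ and $A\vec{1}_{n_p} = \vec{1}_{n_q}$, i.e.\ $A$ is quasi-stochastic. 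Assembling: if $\mathcal{R}$ is Kirchhoff then $H_q\vec{1}_n = 0$ for every parity-check matrix, hence $Y\vec{1}=0$ and $A$ quasi-stochastic in every standard form; conversely, if these hold for even one standard form then $H_q\vec{1}_n = 0$, so $\epsilon_n\in\mathcal{R}$ and $\mathcal{R}$ is Kirchhoff, and then the forward implication gives the property for all standard forms.

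I do not expect a serious obstacle — the computation is routine linear algebra. The one point to get right is that the column permutation in~\eqref{standard-form-lagrangian} is \emph{symplectic}, i.e.\ block-diagonal with the same block $\sigma$ on both grades; it is precisely this that makes $\sigma$ fix $\vec{1}_n$ and so makes the conditions ``$A$ quasi-stochastic'' and ``$Y\vec{1}=0$'' genuinely independent of the choice of standard form. I would also double-check the orientation of the voltage/current grading, to be sure the translation generator is $\begin{pmatrix}\vec{1}_n\\0\end{pmatrix}$ and hence that $H_q$, not $H_p$, is the block that must annihilate $\vec{1}_n$.
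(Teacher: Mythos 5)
Your proof is correct and follows essentially the same route as the paper: reduce the Kirchhoff condition to $\epsilon_n \in \mathcal{R}$ via Proposition~\ref{noether} and evaluate $H\epsilon_n$ on the standard form. If anything you are slightly more careful than the paper's own argument, since you explicitly justify that the symplectic permutation $\sigma_S$ fixes $\epsilon_n$ and you handle the ``one standard form vs.\ all standard forms'' quantifier, and your conclusion $A\vec{1}_{n_p}=\vec{1}_{n_q}$ has the index bookkeeping right where the paper's displayed computation contains a small typo.
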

\begin{proof}~
\begin{description}
\item{($\Rightarrow$)~}
As $\epsilon_n \in \mathcal{R}$ we have 
$0 = H_{\mathcal{R}} \epsilon_n = \begin{pmatrix} Y & 0 & 1 & A^T \\ -A & 1 & 0 & 0 \end{pmatrix}  \begin{pmatrix} \vec{1}_n \\ 0 \end{pmatrix} = \begin{pmatrix} Y \vec{1} \\ -A\vec{1}_{n_q} + \vec{1}_{n_p}  \end{pmatrix}.$
So $A^T \vec{1}_{n_q} = \vec{1}_{n_p}$ (thus $A^T$ is quasi-stochastic) and $Y \vec{1} = 0$.
\item{($\Leftarrow$)~} Conversely, if $A$ is quasi-stochastic and $Y \vec{1} = 0$ then  $H_{\mathcal{R}} \epsilon_n = 0$ and $\epsilon_n \in {\cal R}$ showing translation invariance.
\end{description}
\end{proof}
We now show that the category of Kirchhoff relations forms a subcategory of the prop of Lagrangian relations:
\begin{lemma}
The category of all Kirchhoff relations forms a prop which we denote \categoryname{KirRel}.
\end{lemma}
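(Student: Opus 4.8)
The plan is to verify the prop axioms for $\categoryname{KirRel}$ as a subcategory of $\categoryname{LagRel}_F$. Since $\categoryname{LagRel}_F$ is already known to be a prop, it suffices to check that (a) all identities are Kirchhoff, (b) the Kirchhoff relations are closed under relational composition, and (c) they are closed under the monoidal product (direct sum of relations). Throughout I would use the ``translation invariance'' formulation from the definition, since it is manifestly the easiest to manipulate: a Lagrangian relation $\mathcal{R}: m \to n$ is Kirchhoff exactly when $((\vec{1}_m,0),(\vec{1}_n,0)) \in \mathcal{R}$ (equivalently, when the generator matrix contains the vector $\epsilon$ pairing the all-ones voltage vectors with zero currents, as used in Proposition~\ref{noether}).

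For (a), the identity relation $1_n: n \to n$ is $\{((q,p),(q,p))\}$, which obviously contains $((\vec{1}_n,0),(\vec{1}_n,0))$, so it is Kirchhoff. For (c), if $\mathcal{R}_1: m_1 \to n_1$ and $\mathcal{R}_2: m_2 \to n_2$ are Kirchhoff, then $\mathcal{R}_1 \oplus \mathcal{R}_2: m_1+m_2 \to n_1+n_2$ contains $((\vec{1}_{m_1},0),(\vec{1}_{n_1},0)) \oplus ((\vec{1}_{m_2},0),(\vec{1}_{n_2},0)) = ((\vec{1}_{m_1+m_2},0),(\vec{1}_{n_1+n_2},0))$; one also notes that the graded direct sum of Lagrangian relations is Lagrangian (already part of $\categoryname{LagRel}_F$ being a prop), so the result lands in $\categoryname{KirRel}$. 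For (b), suppose $\mathcal{R}_1: \ell \to m$ and $\mathcal{R}_2: m \to n$ are Kirchhoff. Their relational composite $\mathcal{R}_2 \circ \mathcal{R}_1$ is Lagrangian since $\categoryname{LagRel}_F$ is a prop, so I only need the Kirchhoff condition. But $((\vec{1}_\ell,0),(\vec{1}_m,0)) \in \mathcal{R}_1$ and $((\vec{1}_m,0),(\vec{1}_n,0)) \in \mathcal{R}_2$ share the common ``middle'' element $(\vec{1}_m,0)$, so by definition of relational composition $((\vec{1}_\ell,0),(\vec{1}_n,0)) \in \mathcal{R}_2 \circ \mathcal{R}_1$, which is precisely translation invariance for the composite. (A small bookkeeping point: one must be careful with the sign convention in the $\categoryname{LagRel}_F$ presentation, where a relation $m\to n$ corresponds to a Lagrangian state via $\mathcal{L} = \{((p,q),(p',-q')) \mid \ldots\}$; since the current coordinates of $\epsilon$ are zero, the negation is harmless and the argument goes through verbatim.)

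The main obstacle — really the only place any thought is needed — is ensuring that relational composition of Kirchhoff relations does not accidentally leave the Lagrangian world or fail to remain Kirchhoff for a subtler reason than the naive set-theoretic one; but since closure under composition and tensor within $\categoryname{LagRel}_F$ is already established in the excerpt (citing \cite{fong2016algebra, comfort2021graphical}), and the Kirchhoff condition is simply membership of a single distinguished vector which is visibly preserved by the relational operations, there is no real difficulty. One could alternatively phrase the proof via parity-check matrices using Proposition~\ref{quasi-lemma} (composition of quasi-stochastic $A$-blocks remains quasi-stochastic), but the translation-invariance argument is cleaner and avoids the somewhat involved composition algorithm for parity-check matrices described in the preceding subsection. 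I would therefore present the translation-invariance proof as the main line and perhaps remark that the Kirchhoff current law formulation gives an equivalent, dual-flavoured argument via conservation of $p^T\vec{1}$ under composition.
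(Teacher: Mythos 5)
Your proof is correct, but it takes a different (dual) route from the paper's. The paper verifies closure under $\oplus$ and composition by checking the Kirchhoff current law directly on every element of the resulting relation: for the tensor it adds the two equations $p_a^T\vec{1}=p_b^T\vec{1}$ and $p_c^T\vec{1}=p_d^T\vec{1}$, and for composition it chains $p_a^T\vec{1}=p_b^T\vec{1}=p_c^T\vec{1}$ through the shared middle element. You instead use the translation-invariance characterization — membership of the single distinguished vector $\epsilon=((\vec{1}_m,0),(\vec{1}_n,0))$ — and observe that this vector is visibly preserved by juxtaposition and by relational composition (the two copies of $\epsilon$ share the middle element $(\vec{1}_m,0)$). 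Both arguments are complete; yours is marginally slicker for composition since it tracks one element rather than a universally quantified equation, and you additionally check identities, which the paper omits. The trade-off is that your version leans on Proposition~\ref{noether} (and on the composite being a linear subspace, so that containing $\epsilon$ really does give invariance under all $\lambda\epsilon$), whereas the paper's KCL computation is self-contained. One small inaccuracy in your parenthetical: the cup/state convention in $\categoryname{LagRel}_F$ negates the \emph{output position} coordinates $q'$, not the currents, and the position coordinates of $\epsilon$ are $\vec{1}$, not zero — so your stated reason for the sign being harmless is wrong. It is harmless, but for a different reason: your entire argument is carried out on the relations themselves (where the paper itself says translation invariance means $((\vec{1}_m,0),(\vec{1}_n,0))\in\mathcal{R}$), and the state-form negation never enters the composition or tensor computations.
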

\begin{proof}
To show that the category of Kirchhoff relations forms a prop we need to prove they are closed to direct sums ($\_\oplus\_$) and composition.  

Let $\mathcal R_1: a \to b$ and let $\mathcal R_2: c \to d$ be Lagrangian relations which satisfy Kircchhoff's current law (KCL) then 
$\mathcal R_2 \oplus \mathcal R_1: a+c \to b+d$ also satisfies KCL, as, if 
\[((q_{a},p_{a}),(q_{b},p_{b}))\in \mathcal{R}_{1}~~~\mbox{and}~~~ ((q_{c},p_{c}),(q_{d},p_{d}))\in \mathcal{R}_{2}\]
 then $((q_a,q_c,p_a,p_c),(q_b,q_d,p_b,p_d))\in \mathcal{R}_{1}\oplus \mathcal R_{2}$. Then $\mathcal R_{1}\oplus \mathcal R_{2}$ satisfies KCL as:
\[ \begin{pmatrix} p_a^T &p_c^T \end{pmatrix} \vec{1}_{a+c} = p_a^T \vec{1} + p_c^T \vec{1} = p_b^T \vec{1} + p_d^T \vec{1} = \begin{pmatrix} p_b^T & p_d^T  \end{pmatrix} \vec{1}_{b+d}\]
where the middle step follows as the individual relations satisfy KCL.

To show closure under composition, let $\mathcal{R}_1: a \to b$ and $\mathcal{R}_2: b \to c$ be Lagrangian relations which satisfy KCL.  
$\mathcal{R}_2 \circ \mathcal{R}_1: a \to c$ satisfies KCL as, suppose $(({q}_a,{p}_a), ({q}_b, {p}_b)) \in \mathcal R_1$ and $(({q}_b,{p}_b),({q}_c, {p}_c)) \in \mathcal R_2$ we need to show that the composite $(({q}_a,{p}_a), ({q}_c, {p}_c)) \in \mathcal R_2 \circ \mathcal R_1$ obeys KCL. To show this first note that since $\mathcal R_{1}$ satisfies KCL we have $p_a^T \vec{1}= p_b^T \vec{1}$  and since $\mathcal R_{2}$ satisfies KCL we have $p_b^T \vec{1}={p}_c^T \vec{1}$ so that 
${p}_a^T \vec{1}  ={p}_c^T \vec{1}$ which shows that $\mathcal R_{2} \circ \mathcal R_{2}$ satisfies KCL.
\end{proof}


\subsection{Deterministic Relations }
\label{detkir}

We wish to consider Lagrangian relations which have a parity-check matrix determined by $(Y,A,\sigma)$ such that $A$ is not only quasi-stochastic but also is \textbf{deterministic}, in the sense that each row of $A$ has only one non-zero entry (which must be $1$).   Of course, this is not a well-defined notion unless, when one parity-check matrix has this form, all will have this form: 

\begin{lemma}
If a Lagrangian relation $\mathcal{R}$ has a standard parity-check matrix, $(Y,A,\sigma)$, with $A$ deterministic then every standard parity-check matrix of $\mathcal{R}$,  $(Y',A',\sigma')$ has $A'$ deterministic.
\end{lemma}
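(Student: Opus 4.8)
The plan is to recognise that ``$A$ is deterministic'' is secretly a coordinate-free property of $\mathcal R$, and hence cannot depend on the choice of standard form. For a Lagrangian subspace $\mathcal R\subseteq(F^n)^2$ (treating the relation as a state, as the statement of Theorem~\ref{standard form theorem} requires) let $V_q:=\{q\in F^n\mid \exists\, p,\ (q,p)\in\mathcal R\}$ be the image of $\mathcal R$ under the projection onto the $q$-grade; this subspace depends only on $\mathcal R$. I would prove three things. \textbf{(a)} If $(Y,A,\sigma)$ is a standard parity-check matrix, with $\sigma$ splitting the $n$ wires into a ``free'' set $P$ of size $n_p$ and a ``dependent'' set $Q$ of size $n_q$, then reading off the standard form exactly as in the proof of Theorem~\ref{standard form theorem} (the second row block forces $q_Q=A\,q_P$, the first merely fixes $p_P$, and a $p$ with $(q,p)\in\mathcal R$ exists for any such $q$) gives $V_q=\{q\mid q_Q=A\,q_P\}$, of dimension $n_p$. \textbf{(b)} If moreover $A$ is deterministic, then $V_q$ is a \emph{partition subspace}: there is a partition $\Pi$ of $\{1,\dots,n\}$ into $n_p$ blocks with $V_q=\{q\mid q_i=q_j\text{ whenever }i,j\text{ lie in the same block}\}$. \textbf{(c)} Conversely, if $V_q$ is a partition subspace for some partition $\Pi$, then every standard parity-check matrix $(Y',A',\sigma')$ of $\mathcal R$ has $A'$ deterministic. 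Applying (a)+(b) to the given standard form and then (c) to an arbitrary one proves the lemma.

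For (b) one unpacks $q_Q=A\,q_P$ when $A$ has entries in $\{0,1\}$ and a single $1$ in each row: it says $q_w=q_{f(w)}$ for a function $f:Q\to P$. Regarding the $n$ wires as vertices and the pairs $\{w,f(w)\}$ as edges, every $Q$-vertex has degree $1$, so the graph is a forest in which each of its $n_p$ trees contains exactly one $P$-vertex; the trees are the blocks of $\Pi$, and ``$q$ constant on each tree'' is precisely what cuts out $V_q$. For (c), the key sub-claim is that the partition $(P',Q')$ attached to any standard form of $\mathcal R$ must have $P'$ a transversal of $\Pi$. Indeed $P'$ cannot miss a block $B$, since the common value of $q$ on $B$ is a free parameter of $\mathcal R$ (as $V_q$ consists of all block-constant functions), producing two elements of $\mathcal R$ that agree on $q_{P'}$ but disagree on $q_w$ for $w\in B\subseteq Q'$, contradicting $q_{Q'}=A'\,q_{P'}$; and $P'$ cannot meet a block twice, since then $\pi_{q_{P'}}(\mathcal R)$ would lie in a proper subspace of $F^{P'}$, contradicting freeness of $q_{P'}$. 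As $|P'|=n_{p'}=\dim V_q=n_p$ equals the number of blocks, $P'$ meets every block exactly once, and then $A'$ is forced to send $q_{P'}$ to the tuple ``value at the unique $P'$-wire in my block'', which is deterministic.

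The genuinely routine pieces are (a), which is just a re-reading of the standard form already produced by Theorem~\ref{standard form theorem}, and the final composition of the three claims. The part I expect to cost the most care is (c): pinning down exactly which symplectic permutations $\sigma'$ admit a standard form for a given $\mathcal R$, and showing these correspond precisely to transversals of $\Pi$. This hinges on the observation that $q_{P'}$ and $p_{Q'}$ being free parameters makes the projection of $\mathcal R$ onto those coordinates a bijection, which forces $\dim V_q=n_{p'}$ and hence $n_{p'}=n_p$; the rest is the forest/partition bookkeeping. Note finally that quasi-stochasticity and the symmetry of $Y$ are irrelevant to this particular claim, since a matrix with entries in $\{0,1\}$ and one $1$ per row is automatically quasi-stochastic, so the statement is really about Lagrangian relations in general.
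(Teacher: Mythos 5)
Your proof is correct, but it takes a genuinely different route from the paper's. The paper works entirely inside the parity-check formalism: it observes that when $A$ is deterministic the block $\begin{pmatrix} -A & 1_{n_q} \end{pmatrix}$ has exactly one $+1$ and one $-1$ in each row, and then argues by a case analysis on Gaussian elimination (permuting rows, negating a row, eliminating an entry) that this sign pattern survives the column permutation relating two standard forms followed by re-standardization; the delicate step is the elimination case, where the possibility of the two off-pivot entries landing in the same column is excluded by a rank argument. You instead extract a coordinate-free invariant of $\mathcal R$ --- the projection $V_q$ onto the position grade --- and show that $A$ being deterministic in \emph{some} standard form is equivalent to $V_q$ being a partition subspace, after which any other admissible free set $P'$ is forced to be a transversal of the partition and $A'$ is forced to be the corresponding ``look up the value on my block'' matrix, hence deterministic. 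This is essentially the position-partitioned characterization that the paper only develops \emph{after} this lemma (its proposition that a Lagrangian relation is deterministic iff it is position-partitioned); you are front-loading that equivalence and obtaining well-definedness as a corollary. What your approach buys is conceptual robustness: it sidesteps the somewhat sketchy claim that re-standardization proceeds by exactly the three listed operations while preserving the sign pattern, and it makes clear the property depends only on $\mathcal R$ and not on $Y$ or quasi-stochasticity. What the paper's approach buys is self-containedness: it needs no auxiliary notion beyond the matrices themselves. Your steps (a)--(c) are all sound as sketched, so I see no gap.
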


\begin{proof}
Consider the submatrix $\begin{pmatrix} -A  & 1_{n_{q}} \end{pmatrix}$ 
of the standard parity-check matrix which is deterministic.  This matrix has the property that each row has exactly one $1$ and one $-1$ to complete the proof we need to show that this property is invariant under the action of any permutation $\sigma$ which produces an alternative standard parity-check matrix.   The action of a $\sigma$ on this matrix maintains the property as $\sigma$ only permutes the columns. However, we need to show that when we put such a shuffled parity check-matrix back into standard form using Gaussian elimination that this property is still maintained.  To show this consider the following cases of Gaussian elimination:
\begin{description}
    \item{Permute rows:} In this case the property is maintained since changing the rows does not change the entries in the rows at all.
    \item{Negating a row:} This switches the $-1$ to a $1$ and a $1$ to a $-1$ and hence the property is still maintained.
    \item{Eliminating an entry:} This replaces a row by adding another rows to it with $1$, $-1$ in the same column. In this case $P$ is maintained only if the off-elimination 
    non-zero entries are in different columns.  However, notice if these off-elimination values are in the same column the rows are linearly dependent but this is not the case 
    since the matrix has full rank.
\end{description}
\end{proof}
This shows that Lagrangian relations which have a parity-check matrix $(Y,A,\sigma)$, in which $A$ is deterministic is a well-defined notion so we can make the definition:

\begin{definition}
A Lagrangian relation is \textbf{deterministic} in case any or all of its standard parity-check matrices, determined by $(Y,A,\sigma)$, has $A$ deterministic.
\end{definition}

Our objective, in this section is to explore these deterministic relations.  To do so we start with an alternative rather different description of them which allows us to develop their properties.

\begin{definition}
A Lagrangian relation $\mathcal{R}: n \to m$ is \textbf{position-partitioned} if there is an equivalence relation $\_\sim\_$ on $\underline{n+m} = \{ 1,...,n+m\}$ such that
\begin{itemize}
    \item For every $\begin{pmatrix} q \\ p \end{pmatrix} \in \mathcal{R}$ if $i \sim j$ then  $q_i = q_j$ (the relation is position-constant over equivalence classes);
    \item For each $i \in \underline{n+m}$, there is an $\begin{pmatrix} q \\ p \end{pmatrix} \in \mathcal{R}$ with $q_{j}=1$ when $i \sim j$  and $q_j = 0$ when $j \not\sim i$ (thus, the equivalence classes are position-separated).
\end{itemize}
\end{definition}

Observe that a Lagrangian relation cannot be position-partitioned in two different ways as, on the one hand, it must be position-constant over equivalence classes and,
on the other hand, position-separated on non-equivalent components.

Our first observation on position-paritioned relations is:

\begin{proposition}  
Position-partitioned relations form a subprop of $\categoryname{LagRel}$
\end{proposition}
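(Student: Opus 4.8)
The plan is to verify directly that the class of position-partitioned relations is closed under identities, composition, and monoidal product (direct sum), since these are exactly the requirements for being a subprop of $\categoryname{LagRel}$. First I would check the identity $\mathrm{id}_n : n \to n$: take the equivalence relation on $\underline{n+n}$ generated by $i \sim (n+i)$ for $1 \le i \le n$ (pairing each input wire with the corresponding output wire). The identity relation $\{((q,p),(q,p))\}$ is position-constant over these classes because input and output positions agree, and it is position-separated because for each wire we can take $q$ to be the indicator vector of the relevant class (with $p=0$, which is forced to lie in the Lagrangian identity). This establishes the identities.

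Next I would handle the direct sum $\mathcal{R}_1 \oplus \mathcal{R}_2$ where $\mathcal{R}_i : n_i \to m_i$ is position-partitioned by $\sim_i$. Here the natural candidate is the disjoint union of the two equivalence relations on $\underline{n_1+m_1} \sqcup \underline{n_2+m_2} \cong \underline{(n_1+n_2)+(m_1+m_2)}$ (being slightly careful about how the interleaving of inputs and outputs is indexed). Position-constancy is immediate since each block inherits it; position-separatedness follows because the separating vector for a class in one block can be extended by zeros on the other block, and that extended vector lies in $\mathcal{R}_1 \oplus \mathcal{R}_2$ since $0$ lies in each $\mathcal{R}_i$ (each being a linear subspace). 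So $\oplus$-closure is routine.

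The main work — and the step I expect to be the real obstacle — is closure under composition. Given $\mathcal{R}_1 : n \to k$ partitioned by $\sim_1$ on $\underline{n+k}$ and $\mathcal{R}_2 : k \to m$ partitioned by $\sim_2$ on $\underline{k+m}$, I would define an equivalence relation on $\underline{n+m}$ by first forming the relation on $\underline{n} \sqcup \underline{k} \sqcup \underline{m}$ generated by $\sim_1$ (on $\underline{n}\sqcup\underline{k}$), $\sim_2$ (on $\underline{k}\sqcup\underline{m}$), and the "gluing" identifications of the shared $\underline{k}$, then restricting the resulting equivalence relation to the outer wires $\underline{n}\sqcup\underline{m}$. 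The plan is then to show: (a) position-constancy of $\mathcal{R}_2 \circ \mathcal{R}_1$ over these merged classes — this follows by a chain argument, since any two outer wires in the same merged class are linked by a zig-zag path through intermediate $\underline{k}$-wires along which the position values are forced equal by position-constancy of $\mathcal{R}_1$ and $\mathcal{R}_2$ alternately, and the shared-wire values match because composition identifies the $B$-component of the two relations; and (b) position-separatedness — given an outer wire, one must produce a vector in the composite whose position is the indicator of its merged class. The idea is to take the separating vectors from $\mathcal{R}_1$ and $\mathcal{R}_2$ for the corresponding classes, check that their values on the shared $\underline{k}$-wires agree (so that they glue to an element of the pullback, hence project to an element of $\mathcal{R}_2 \circ \mathcal{R}_1$), and observe that the merged class is precisely the union of the pieces so glued; the momentum components can be chosen freely enough (using that the relations are linear and nonempty) for the glued pair to compose. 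The delicate point is ensuring the indicator-vector condition is preserved exactly — that wires \emph{not} equivalent to the chosen wire really do get position $0$ in the glued solution, rather than just \emph{some} position — which requires using position-separatedness of each factor simultaneously for \emph{all} classes and taking an appropriate linear combination, or equivalently arguing at the level of the partition of the shared wires induced by each side. I would also remark that the composite is automatically Lagrangian (hence lives in $\categoryname{LagRel}$) since $\categoryname{LagRel}$ is itself a prop, so only the position-partition structure needs checking.
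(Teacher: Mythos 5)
Your proposal follows essentially the same route as the paper: the disjoint union of partitions for $\oplus$-closure with zero-padding of separators, and for composition the equivalence relation generated by zig-zag chains through the middle wires, with position-constancy by chaining and position-separation obtained by summing the separators of all the classes amalgamated into a merged class. The ``delicate point'' you flag (getting an exact indicator vector, and matching data on the shared wires so the pieces glue) is precisely what the paper resolves with its one-line ``sum of the separators for all the partitions amalgamated'' step, so your treatment is, if anything, slightly more explicit than the paper's.
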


\begin{proof}
We must show these relations are closed to composition and to direct sums.   For the latter, observe that when one takes a direct sum one takes the disjoint union of the entries: 
this means the partitions induced by the equivalence relation on each component induce a partition on the disjoint union.  This is clearly position-constant on partitions and can 
be separated by pairing the separating elements in each component with a zero in the other component.

For the composite $\mathcal{R}_2 \circ \mathcal{R}_1$  of two position-partitioned relations consider the composition of jointly epic cospans induced by the equivalence relations.   
Two entries are related, $i_1 \sim i_n$ if and only if there is a sequence $i_i \sim_{\delta_1} ... \sim_{\delta_{n-1}} i_n$ where $\delta_k = 1~\mbox{or}~2$.  If $\sim_1$ and $sim_2$ are position-constant then for each $\begin{pmatrix} q \\ p \end{pmatrix} \in \mathcal{R}_2 \circ  \mathcal{R}_1$ we have $q_j = q_{j+1}$  so $q_1 = q_n$ and $\mathcal{R}_2 \circ \mathcal{R}_1$ is position-constant over $\_\sim\_$.

Separation is given by taking the sum of the separators for all the partitions amalgamated by the composite relation.
\end{proof}

Using this definition one can prove the following proposition:

\begin{proposition}
A Lagrangian relation ${\cal R}$ is deterministic if and only if it is position-partitioned.
\end{proposition}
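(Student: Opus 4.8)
The plan is to prove the equivalence in both directions by relating the two descriptions of determinism through the structure of a standard parity-check matrix.

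For the direction ``deterministic $\Rightarrow$ position-partitioned'', I would start from a standard parity-check matrix $H = \begin{pmatrix} Y & 0 & 1_{n_p} & A^T \\ -A & 1_{n_q} & 0 & 0 \end{pmatrix}\sigma_S$ with $A$ deterministic. The key observation is that the submatrix $\begin{pmatrix} -A & 1_{n_q}\end{pmatrix}$, read off as the bottom block of $H_q$ (after undoing $\sigma$), is the incidence-type matrix of a graph: each row contains exactly one $+1$ and one $-1$, so the equations $-A q_{?} + q_{?} = 0$ assert equalities $q_i = q_j$ between pairs of position coordinates. I would define $\_\sim\_$ on $\underline{n+m}$ to be the equivalence relation generated by these pairs. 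Position-constancy is then immediate: any $\begin{pmatrix} q \\ p\end{pmatrix} \in \mathcal{R}$ satisfies $Hq\text{-part} = 0$, hence satisfies all the asserted equalities, hence $q_i = q_j$ whenever $i \sim j$. Position-separation requires exhibiting, for each equivalence class, a vector in $\mathcal{R}$ whose $q$-part is the indicator of that class; I would construct this using the generator matrix $G = HJ$, noting that the column structure of $G$ (which has $G_q = H_p$, $G_p = -H_q$) together with the quasi-stochasticity of $A$ from Proposition \ref{quasi-lemma} — wait, quasi-stochasticity is only the Kirchhoff condition, so instead I should argue directly that because the ``free'' position coordinates in standard form (those indexed by the $1_{n_q}$ block) can be set independently, and each determines exactly one equivalence class by the deterministic structure, the indicator vectors of classes lie in the projection of $\mathcal{R}$ onto position coordinates, and these extend to full vectors in $\mathcal{R}$ since $\mathcal{R}$ is Lagrangian of dimension $n$.

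For the converse ``position-partitioned $\Rightarrow$ deterministic'', suppose $\mathcal{R}$ carries an equivalence relation $\_\sim\_$ with the two stated properties. Position-constancy means every $\begin{pmatrix} q\\ p\end{pmatrix} \in \mathcal{R}$ lies in the subspace cut out by the equations $q_i = q_j$ for $i \sim j$; choosing a spanning set of these equations (a spanning forest of each class) gives rows of the form $(\ldots, 1, \ldots, -1, \ldots \mid 0)$, i.e. a deterministic incidence block, as part of a parity-check matrix for $\mathcal{R}$. Position-separation guarantees that these are all the position constraints — the projection of $\mathcal{R}$ to position space is exactly the span of the class-indicators — so no further purely-$q$ equations appear. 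Then I would put this parity-check matrix into standard form and check that the $A$-block one extracts is deterministic: the point is that the Gaussian-elimination steps used in the previous lemma (the one showing determinism of $A$ is well-defined) preserve the ``one $+1$, one $-1$ per row'' property, so starting from an incidence block one ends with a deterministic $A$.

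The main obstacle I expect is the position-separation half of the first direction: converting the abstract requirement ``the equivalence classes are position-separated'' into a clean statement about the standard form, and conversely making sure that in the second direction the position-separation hypothesis is exactly what rules out ``extra'' $q$-only rows that would spoil determinism. Concretely, I need a bookkeeping argument that the number of equivalence classes equals $n_q$ (the number of independent free position coordinates), so that the incidence block has full row rank $n_p$ and the standard form genuinely has the shape $\begin{pmatrix} -A & 1_{n_q}\end{pmatrix}$ with $A$ deterministic — rather than something of smaller rank embedded in a larger $H_q$. I would handle this by a dimension count: position-separation forces $\dim(\pi_q(\mathcal{R})) = $ number of classes, and Lagrangian-ness plus the structure of $H$ pins down $n_q$ accordingly. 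The remaining steps — verifying $Y = Y^T$, $G = HJ$, and that row operations preserve the incidence property — are routine given the earlier lemmas and can be cited rather than redone.
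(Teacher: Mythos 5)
Your proposal is correct and follows essentially the same route as the paper: both directions are read off the standard parity-check form, with the rows of $\begin{pmatrix} -A & 1_{n_q}\end{pmatrix}$ interpreted as the equalities $q_i=q_j$ generating the partition, the free choice of $q_0$ and $p_1$ supplying the separating vectors, and, conversely, position-separation forcing each column of $A$ to be a class indicator. The only cosmetic difference is in the converse, where the paper starts from an already-standardized $H$ and argues $A$ must be deterministic, whereas you build an incidence block from the partition and re-standardize, which is why you need the extra rank bookkeeping you flag.
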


\begin{proof}~
\begin{description}
\item[$(\Rightarrow)$]  If one parity check matrix has $A$ deterministic then every parity check matrix has $A$ is deterministic. 
    Consider $H$ to be a parity check matrix for a deterministic Kirchhoff relation recall:
    \begin{equation*}
    \begin{pmatrix}
Y & 0 &  1 & A^T\\
-A & 1 & 0 & 0\\
\end{pmatrix}
\begin{pmatrix}
q_{0}\\
q_{1}\\
p_{0}\\
p_{1}\\
\end{pmatrix}
=
\begin{pmatrix}
Yq_{0}+p_{0}+A^{T}p\\
-Aq_{0}+q_{1}\\
\end{pmatrix}
= 0
\end{equation*}
which happens if and only if 
 $q_1 = Aq_{0}$ and $p_0 = -Yq_{0}-A^{T}p_{1}$.  But this means $q_0$ (and $p_1$) can be freely chosen.

As $A$ is deterministic, there is a surjective function $f:n_{q_{1}}\to n_{q_{0}}$ where $f:n_{q}\to n_{q_{0}}$  where $n_{q_{1}}+n_{q_{0}}=n$ such that $q_{i}=q_{f(i)}$ this defines an equivalence relation defined by $i\sim f(i)$ and $i \sim i'$ when $f(i)=f(i')$ for which the relation is position-constant.   The equation $Aq_{0}=q_{1}$ allows for separation of the equivalence, as one can then choose $p_{1}$ freely  to determine $p_{0}$.   Thus, we have position-separation as required.
\item[$(\Leftarrow)$] Assuming the relation is position-partitoned, and that we have a parity-check matrix for the relation $H=\begin{pmatrix}
Y & 0 & 1 & A^{T}\\
-A & 1 & 0 & 0\\
\end{pmatrix}$. Then, by assumption exists an equivalence relation $\_\sim\_$ (on $n$), for which  $\mathcal R$ is position-constant and separating.  Notice that $q_{0}$ can be arbitrary but this means each component of $q^{i}_{0}$ is in a separate equivalence class. The characteristic relation of the equivalence class determined by $q^{i}_{0}$ provides a column in $A$. The fact that we have an equivalence relation now guarantees each row has at most one $1$ and so $A$ is deterministic.
\end{description}
\end{proof}

In Lagrangian relations, whenever one has a position oriented notion, one expects a corresponding momentum oriented notion.  We describe this for states:

\begin{definition}
A Lagrangian state $\mathcal{R}: 0 \to n$ is \textbf{momentum-grouped} if there is an equivalence relation, $\_\sim\_$~, on $\underline{n}$ such that:
\begin{itemize}
    \item For every $S \subseteq \underline{n}$, which is $\sim$-closed (that is if $i \in S$ and $i \sim j$ then $j \in S$),  every $\begin{pmatrix} 0 \\ p \end{pmatrix} \in \mathcal{R}$  
    has $\sum_{i \in S} p_i =0$ (i.e. has zero average momentum at position zero on every set closed to the equivalence);
    \item For each $S \subseteq \underline{n}$ for which there are $i,j \in \underline{n+m}$ with $i \sim j$ and $i \in S$ and $j \not\in S$ there is a $\begin{pmatrix} 0 \\ p \end{pmatrix} \in \mathcal{R}$ such that $\sum_{j \in S} p_j \neq 0$ (i.e. has non-zero average momentum at position zero on every set which is not closed to the equivalence).
\end{itemize}
\end{definition}

\begin{proposition}
A Langrangian state is position-partitioned if and only if it is momentum-grouped.
\end{proposition}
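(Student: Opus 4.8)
The plan is to relate the two equivalence relations on $\underline{n}$ — the position-partitioning one and the momentum-grouping one — by showing they must coincide, and then to translate each defining property of one into the corresponding property of the other using the duality between position-constraints and momentum-constraints built into the Lagrangian structure. The key technical tool is the symplectic pairing: a vector $\binom{0}{p}$ lies in a Lagrangian state $\mathcal{R}$ if and only if it is symplectically orthogonal to every $\binom{q}{p'} \in \mathcal{R}$, i.e. $p^T q = 0$ for all such $q$ (since $\langle \binom{q}{p'},\binom{0}{p}\rangle = -q^T p$). Thus the set of $p$'s appearing as $\binom{0}{p} \in \mathcal{R}$ is exactly the orthogonal complement of the space $Q := \{ q : \binom{q}{p} \in \mathcal{R} \text{ for some } p\}$ of ``achievable positions''.

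First I would make the following observation: for a position-partitioned relation with equivalence $\sim$, the position-constant property says $Q \subseteq \{q : q_i = q_j \text{ whenever } i \sim j\}$, and the position-separated property says that for each equivalence class $C$ the indicator vector $\vec{1}_C$ lies in $Q$; together with linearity these force $Q = \mathrm{span}\{\vec{1}_C : C \text{ an equivalence class}\}$ exactly. Then $Q^\perp = \{ p : \sum_{i \in C} p_i = 0 \text{ for every equivalence class } C\}$, and since a subset $S$ is $\sim$-closed precisely when it is a union of equivalence classes, $\sum_{i\in S} p_i = 0$ for all $\sim$-closed $S$ iff $p \in Q^\perp$; moreover if $S$ is not $\sim$-closed it meets some class $C$ in a proper nonempty subset, and one checks that the constraint $\sum_{i\in C} p_i = 0$ does not imply $\sum_{i\in S} p_i = 0$, so some $p \in Q^\perp$ has $\sum_{i\in S} p_i \neq 0$. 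This is exactly the momentum-grouped condition with the same $\sim$. The converse direction runs the same argument backwards: a momentum-grouped relation determines $Q^\perp$ as the span of a certain family of vectors, hence $Q$ as its orthogonal complement, and the two momentum conditions dualize to the position-constant and position-separated conditions.

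I would organize the write-up as: (1) recall that $\{p : \binom{0}{p}\in\mathcal{R}\} = Q^\perp$ where $Q$ is the projection of $\mathcal{R}$ onto position coordinates, using that $\mathcal{R}$ is Lagrangian; (2) show position-partitioned $\Leftrightarrow$ $Q = \mathrm{span}\{\vec{1}_C\}$ for the partition's classes; (3) show momentum-grouped $\Leftrightarrow$ $Q^\perp$ is cut out exactly by the equations $\sum_{i\in C} p_i = 0$, equivalently $Q = \mathrm{span}\{\vec 1_C\}$; (4) combine, noting uniqueness of the partition (remarked in the excerpt) so the two $\sim$'s agree. The main obstacle I anticipate is step (3), specifically the separation clause: one must verify carefully that non-closed $S$ genuinely admits a witness $p \in Q^\perp$ with nonzero $S$-sum, which requires knowing that the only linear relations among the coordinate-sum functionals $p \mapsto \sum_{i\in S} p_i$ on $Q^\perp$ are those forced by the class-sum equations — this is a small linear-algebra lemma about the pairing between $Q$ and the quotient $F^n/Q$, and care is needed because the field $F$ has positive characteristic (though the argument only uses that indicator vectors of disjoint sets are linearly independent, which holds over any field).
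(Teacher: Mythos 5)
Your reduction of both properties to statements about the position projection $Q=\{q \mid (q,p)\in\mathcal{R}\ \text{for some }p\}$ and the fibre $\{p \mid (0,p)\in\mathcal{R}\}=Q^{\perp}$ is a genuinely different route from the paper's, which works through the standard-form parity-check matrix and the equivalence with $A$ being deterministic. Your step (2) is correct, and the direction position-partitioned $\Rightarrow$ momentum-grouped goes through exactly as you sketch; in fact the separation clause you flag as the main obstacle is the easy part: if $S$ is not $\sim$-closed then $\vec{1}_S$ is not constant on classes, so $\vec{1}_S\notin\mathrm{span}\{\vec{1}_C\}=Q=(Q^{\perp})^{\perp}$, which is precisely the existence of a witness $p\in Q^{\perp}$ with $\vec{1}_S^{T}p\neq 0$. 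No further linear-algebra lemma is needed there.

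The genuine gap is in the converse half of your step (3). Unwinding the momentum-grouped axioms gives only (a) $Q^{\perp}\subseteq\mathrm{span}\{\vec{1}_C\}^{\perp}$, i.e.\ $\mathrm{span}\{\vec{1}_C\}\subseteq Q$, and (b) $\vec{1}_S\notin Q$ for every non-closed $S$. These constrain $Q$ only through which \emph{indicator} vectors it contains, and that does not force $Q=\mathrm{span}\{\vec{1}_C\}$. Concretely, over $F=\mathbb{Z}_5$ with $n=3$ take $Q=\mathrm{span}\{(1,1,1),(0,1,2)\}$ and $\mathcal{R}=Q\oplus Q^{\perp}$, a Lagrangian state with $Q^{\perp}=\mathrm{span}\{(1,-2,1)\}$. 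With $\sim$ the single-class equivalence both momentum-grouped conditions hold: the full sum is $1-2+1=0$, while the sums of $(1,-2,1)$ over the proper nonempty subsets are $1,3,1,4,2,4$ modulo $5$, all nonzero. Yet $Q$ is $2$-dimensional and is not the span of the indicators of any partition of $\{1,2,3\}$ (its standard form has $A=(-1\;\;2)$, quasi-stochastic but not deterministic), so $\mathcal{R}$ is not position-partitioned. Thus the biconditional asserted in your step (3) fails in exactly the direction needed for momentum-grouped $\Rightarrow$ position-partitioned. This is the same point the paper's own proof elides when it asserts that ``asking for $A$ to be deterministic is precisely to demand that $\mathcal{R}$ is momentum grouped.'' To close the argument one needs a stronger separation axiom for momentum-grouping --- e.g.\ that $Q^{\perp}$ contains \emph{every} $p$ whose class-sums all vanish --- at which point your duality argument does finish the proof immediately.
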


\begin{proof}
It suffices show that for any parity-check matrix, determined by $(Y,A,\sigma)$, of a relation $\mathcal{R}$ which is momentum grouped has $A$ is deterministic.  The parity-check matrix 
implies the equation $Y q_0 + p_0 + A^T p_1 = 0$.  However,  we are interested in the case when $q_0 = 0$ so we are left with the equation $p_0 = - A^T p_1$.   Asking for $A$ to be deterministic is precisely to demand that $\mathcal{R}$ is momentum grouped. Note that $p_1$ can be arbitrarily chosen, when $A$ is deterministic composition with $A^T$ causes equivalent entries of $q_1$ to be summed to determine the entry of $p_0$ (which is an entry in the same equivalence class).  Notice that all the entries of $q_0$ are, therefore, in distinct  equivalence classes.  This guarantees the first condition of being momentum grouped.  The second condition is guaranteed as one may choose $q_1$ freely.
\end{proof}

This shows that the deterministic Lagrangian relations form a prop: we shall denote it by $\categoryname{DetLagRel}_{F}$.   It is now not hard to see that the Lagrangian relations satisfying the Kirchhoff current law  form a subprop, $\categoryname{KirRel}_F \subseteq \categoryname{LagRel}_F$.  Furthermore, deterministic Kirchhoff relations form a further subprop, $\categoryname{ResRel}_F \subseteq \categoryname{KirRel}_F$.  It is not hard to identified $\categoryname{ResRel}_F$ as precisely the prop of resistor circuits studied in \cite{fong2016algebra,baez2017props,baez2015compositional}.  Significantly, while Kirchhoff relations include resistor circuits, they also allow additional new components: namely ideal current dividers.


\subsection{Power input}
\label{powersec}


The power input, \cite{baez2015compositional, fong2016algebra,baez2017props}, of an electrical circuit is determined by the product of voltage (difference) times the current.  Similarly the product of momentum and velocity determines the energy of a physical system.   Power input is usually used to distinguish between {\em passive\/} and {\em active\/} components in electrical networks by their positive or negative power input requirements.  In the generalization to finite fields, the distinction between passive and active can no longer be made.   However, we can restrict to the {\em lossless\/} case when the power input is zero.   This give a series of subprops given by intersecting with lossless Lagrangian relations.
\begin{center}
\begin{tikzcd}
\sf{LagRel}_{F}                 &  & \sf{LossLagRel} \arrow[ll, hook]                       \\
\sf{KirRel}_{F} \arrow[u, hook] &  & \sf{LossKirRel} \arrow[ll, hook] \arrow[u, hook']      \\
\sf{ResRel}_{F} \arrow[u, hook] &  & \sf{Spiders} \arrow[u, hook] \arrow[ll, hook]
\end{tikzcd}
\end{center}
The purpose of the section is to brief explore these subprops.  

\begin{definition}
Given any Lagrangian relation, $\mathcal{R}: m \to n$, the associated \textbf{power input} is a function 
\[ P_{\mathcal R}:  \mathcal{R} \subseteq (F^m)^2 \otimes  (F^n)^2  \to F; ((q,p),(q',p')) \mapsto  \sum_{j=1}^n q_j p_j - \sum_{k=1}^m q'_k p'_k.\] 
A relation, $\mathcal{R}$ is said to be \textbf{lossless} in case $P_{\mathcal R}$ is everywhere zero.
\end{definition}

Spiders, symplectic permutations, and (ideal) current dividers are all lossless.  Notice that using spiders we can convert any relation to a state: because the cup changes the sign of output positions, a relation in $\categoryname{LagRel}$ has the same power input function as its corresponding state.

An explicit expression for the power input for the state $\mathcal R$ can be calculated as follows:
Consider the standard form for a parity-check matrix for a Lagrangian state:
\begin{equation*}
\begin{pmatrix}
Y & 0 & I & A^T\\
-A & I & 0 & 0\\
\end{pmatrix}
\begin{pmatrix}
q_{0}\\
q_{1}\\
p_{0}\\
p_{1}\\
\end{pmatrix}
=0
\end{equation*}
This gives the following relations:
\begin{equation}
q_{1}=A q_{0}~~~\mbox{and}~~~
p_{0}=-Y{q}_{0}-A^T{p}_{1} \label{conditionforll}
\end{equation}
The expression for power input using the above relations is then:
\[
   q_{0}^{T}p_{0}+q_{1}^{T}p_{1}= -q^{T}_{0}(Yq_{0}+A^Tp_{1})+(A q_{0})^{T}p_{1}=-q^{T}_{0}Yq_{0}
\]
For $\mathcal{R}$ to be lossless $q^{T}_{0}Yq_{0}$ must vanish. We now have the following lemma:
\begin{lemma}
In any field $F$ with $char(F) \neq 2$ when $Y=Y^T$ then  $q^T Y q = 0$ for every  $q$ if and only if $Y=0$.
\end{lemma}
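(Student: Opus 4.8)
The plan is to prove the nontrivial ``only if'' direction by polarization: the quadratic form $q \mapsto q^T Y q$ determines every entry of the symmetric matrix $Y$ once one is permitted to divide by $2$, and the hypothesis $\mathrm{char}(F) \neq 2$ is precisely what makes that division legitimate. The ``if'' direction is immediate.

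First I would dispose of the easy direction: if $Y = 0$ then $q^T Y q = 0$ for all $q$. For the converse, assume $q^T Y q = 0$ for every $q \in F^n$ and let $e_1, \dots, e_n$ be the standard basis. Evaluating at $q = e_i$ gives $Y_{ii} = e_i^T Y e_i = 0$, so every diagonal entry of $Y$ vanishes. Next, for $i \neq j$, evaluate at $q = e_i + e_j$: expanding $(e_i+e_j)^T Y (e_i+e_j) = Y_{ii} + Y_{jj} + Y_{ij} + Y_{ji}$ and using that the diagonal entries are already zero yields $Y_{ij} + Y_{ji} = 0$. Since $Y = Y^T$ we have $Y_{ij} = Y_{ji}$, hence $2 Y_{ij} = 0$; as $2$ is invertible in $F$ by the characteristic hypothesis, $Y_{ij} = 0$. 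Combining the two cases, all entries of $Y$ vanish, so $Y = 0$.

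There is no real obstacle here --- this is just the classical correspondence between symmetric bilinear forms and quadratic forms in characteristic not two --- so the only care needed is the bookkeeping of the cross terms and making the use of $\mathrm{char}(F) \neq 2$ explicit. If one prefers a basis-free phrasing, the identity $2\,x^T Y y = (x+y)^T Y (x+y) - x^T Y x - y^T Y y$ shows the bilinear form $(x,y) \mapsto x^T Y y$ is identically zero, and then taking $x = e_i$, $y = e_j$ recovers $Y_{ij} = 0$; but the direct basis computation is the shortest route, and it makes transparent where characteristic two would break the argument (there, $q^T Y q = 0$ for all $q$ forces only the diagonal of $Y$ to vanish, not $Y$ itself).
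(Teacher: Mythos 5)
Your proof is correct and follows essentially the same route as the paper's: evaluate the quadratic form at the standard basis vectors to kill the diagonal, then at $e_i+e_j$ to get $Y_{ij}+Y_{ji}=2Y_{ij}=0$, and divide by $2$ using $\mathrm{char}(F)\neq 2$. The only difference is cosmetic --- your closing remark on the polarization identity and the failure in characteristic two is a nice addition but not needed.
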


\begin{proof}~
\begin{description}
\item{($\Rightarrow$)~}  Use the standard basis $e_i$ then $0 = e_i^T Y e_i = Y_{i,i}$ this shows that the diagonal entries of $Y$ are all zero. Now consider: 
\begin{equation*}
0 = (e_i+ e_j)^T Y (e_i+ e_j)=e_i^T Ye_i +e_i^T Y e_j  + e_j^T Ye_i +e_j^T Y e_j = Y_{i,i} + Y_{i,j}+Y_{j,i} + Y_{j,j} = Y_{i,j} + Y_{j,i}=2 Y_{i,j}
\end{equation*}
where the last two steps are  because the diagonal elements $Y_{i,i}$ and $Y_{j,j}$ are zero and then using the fact that $Y$ is self-transpose.  This gives $0=2 Y_{i,j}$. Since the characteristic of the field is not equal to $2$ we can divide by $2$ which implies that $Y_{i,j} = 0$.
\item{($\Leftarrow$)~} For the other direction we have If $Y = 0$ this implies that $q^{T}Yq=0$, for all $q$. This completes the proof.
\end{description}
\end{proof}
This implies that if $\mathcal R$ is lossless then $Y$ must be equal to zero, this gives us the following result:

\begin{proposition} Any state in \categoryname{LosLagRel} over $F$, can be described by a parity-check matrix in standard Lagrangian form with $Y=0$.
\end{proposition}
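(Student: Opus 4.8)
The plan is to combine the explicit computation of the power input that immediately precedes the statement with the lemma on quadratic forms just proved. First I would recall that, by Theorem \ref{standard form theorem}, every state in $\categoryname{LagRel}_F$ has a parity-check matrix in the standard Lagrangian form determined by a triple $(Y,A,\sigma)$ with $Y=Y^T$. Fixing such a standard form for a given lossless state $\mathcal{R}$, the computation leading to equation \eqref{conditionforll} shows that for every $((0,q_0,q_1),(p_0,p_1)) \in \mathcal{R}$ the power input equals $q_0^T p_0 + q_1^T p_1 = -q_0^T Y q_0$, and moreover $q_0$ ranges freely over $F^{n_p}$ (since $q_0$ and $p_1$ can be chosen arbitrarily, with $q_1$ and $p_0$ then determined). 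Since $\mathcal{R}$ is lossless, $P_{\mathcal R}$ is identically zero, so $q_0^T Y q_0 = 0$ for all $q_0 \in F^{n_p}$.

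Next I would invoke the preceding lemma: since $\mathrm{char}(F) \neq 2$ (we are in $F = \mathbb{Z}/p$ with $p > 2$) and $Y = Y^T$, the condition $q^T Y q = 0$ for all $q$ forces $Y = 0$. Hence the chosen standard parity-check matrix for $\mathcal{R}$ already has $Y = 0$, which is exactly the assertion: $\mathcal{R}$ can be described by a parity-check matrix in standard Lagrangian form with $Y = 0$. One subtle point worth spelling out is that the argument is carried out for a \emph{state}; for a general lossless relation one first converts it to a state using the paired black/white (Lagrangian) cups, noting as remarked in Section \ref{powersec} that a relation in $\categoryname{LagRel}$ and its corresponding state have the same power input function, so losslessness transfers and we may apply the state version.

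I do not anticipate a serious obstacle here — the heavy lifting (the standard form theorem and the computation of power input in terms of $Y$) has already been done before the statement. The one place requiring a little care is the claim that $q_0$ can be chosen freely: this needs the observation from the standard form that the submatrix structure $\begin{pmatrix} -A & 1_{n_q} & 0 & 0 \end{pmatrix}$ and $\begin{pmatrix} Y & 0 & 1_{n_p} & A^T \end{pmatrix}$ makes $q_1$ and $p_0$ dependent variables while $q_0$ and $p_1$ are the free parameters of the Lagrangian subspace, so that every value of $q_0 \in F^{n_p}$ genuinely occurs in some element of $\mathcal{R}$. Once that is granted, the vanishing of $q_0^T Y q_0$ for all $q_0$ is legitimate and the lemma closes the argument.
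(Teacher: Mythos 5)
Your proposal is correct and follows essentially the same route as the paper: compute the power input of a state in standard Lagrangian form as $-q_0^T Y q_0$ using the relations $q_1 = A q_0$ and $p_0 = -Yq_0 - A^T p_1$, observe that $q_0$ is a free parameter so losslessness forces $q^T Y q = 0$ for all $q$, and then apply the lemma that for $\mathrm{char}(F)\neq 2$ a symmetric $Y$ with vanishing quadratic form is zero. Your added remarks --- that $q_0$ genuinely ranges over all of $F^{n_p}$ and that losslessness transfers from a relation to its associated state via the Lagrangian cups --- make explicit two points the paper leaves implicit, but do not change the argument.
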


The kernel of a parity-check matrix with $Y=0$  
\begin{equation*}
0 = \begin{pmatrix} 0 & 0 & 1 & A^T \\-A & 1& 0 & 0 \end{pmatrix} \begin{pmatrix} q_0 \\ q_1 \\ p_0 \\ p_1 \end{pmatrix}  = \begin{pmatrix} p_0 + A^T p_1 \\ -A q_0 + q_1 \end{pmatrix}
\end{equation*} 
has an alternate description by the equatons $p_0=  -A^T  p_1$ and $q_1 =  A q_0$.  This gives:

\begin{proposition}
The category of Lossless Lagrangian Relations is isomorphic to the subcategory of linear relations determined by $L:\categoryname{LinRel}_F \to \categoryname{LagRel}_F$.
\end{proposition}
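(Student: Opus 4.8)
The functor $L\colon \categoryname{LinRel}_F \to \categoryname{LagRel}_F$ is the identity on objects and, by \cite{comfort2021graphical} (or by the direct checks $(\mathcal R_1\oplus\mathcal R_2)^\perp=\mathcal R_1^\perp\oplus\mathcal R_2^\perp$ together with preservation of relational composition), a morphism of props. My plan is to prove three things: \textbf{(a)} $L$ is faithful; \textbf{(b)} every relation in the image of $L$ is lossless; \textbf{(c)} every lossless Lagrangian relation lies in the image of $L$. Then \textbf{(b)} and \textbf{(c)} will identify the image of $L$ with $\categoryname{LosLagRel}_F$ --- which is automatically a sub-prop, being the image of a prop morphism --- and \textbf{(a)} will promote the corestriction of $L$ to an isomorphism of props $\categoryname{LinRel}_F \xrightarrow{\ \sim\ } \categoryname{LosLagRel}_F$. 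Since $L$ is a prop morphism it respects cups and caps, and by the remark of Section~\ref{powersec} a relation is lossless exactly when its associated state is; so in all three parts it suffices to argue at the level of states.

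For \textbf{(a)}, I would use that on a state $\mathcal R$ the functor $L$ is computed by the pair $G_{L(\mathcal R)}=\begin{pmatrix} G & 0\\ 0 & -H\end{pmatrix}$, $H_{L(\mathcal R)}=\begin{pmatrix} H & 0\\ 0 & G\end{pmatrix}$, so that $L(\mathcal R)=\{(q,p)\mid q\in\mathcal R,\ p\in\mathcal R^\perp\}$. As $0\in\mathcal R^\perp$, projecting $L(\mathcal R)$ onto its position coordinates returns $\mathcal R$; hence $L$ is injective on states, and therefore on every hom-set. For \textbf{(b)}, a generic element of $L(\mathcal R)$ viewed as a state is $(q,p)=(G^T t,-H^T s)$, so its power input at this element is $q^T p=-t^T(GH^T)s=0$, using $GH^T=(HG^T)^T=0$ from the exactness of a generator/parity-check pair.

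For \textbf{(c)} --- the step I expect to demand the most bookkeeping --- let $\mathcal R\subseteq(F^n)^2$ be a lossless Lagrangian state. By the preceding proposition it has a standard parity-check matrix $H_{\mathcal R}=\begin{pmatrix} 0 & 0 & 1_{n_p} & A^T\\ -A & 1_{n_q} & 0 & 0\end{pmatrix}\sigma_S$ with $n_p+n_q=n$ and $\sigma_S=\begin{pmatrix}\sigma&0\\0&\sigma\end{pmatrix}$. I would then take $\mathcal S$ to be the linear state on $F^n$ with parity-check matrix $H_{\mathcal S}=\begin{pmatrix}-A & 1_{n_q}\end{pmatrix}\sigma$ and generator matrix $G_{\mathcal S}=\begin{pmatrix}1_{n_p} & A^T\end{pmatrix}\sigma$; using $\sigma\sigma^T=1$ one has $H_{\mathcal S}G_{\mathcal S}^T=-A+A=0$, and the ranks $n_q$ and $n_p$ sum to $n$, so this is an exact pair and $\mathcal S$ is a well-defined linear relation of dimension $n_p$. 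Then
\[
H_{L(\mathcal S)}=\begin{pmatrix} H_{\mathcal S} & 0\\ 0 & G_{\mathcal S}\end{pmatrix}
=\begin{pmatrix} \begin{pmatrix}-A & 1_{n_q}\end{pmatrix}\sigma & 0\\ 0 & \begin{pmatrix}1_{n_p} & A^T\end{pmatrix}\sigma\end{pmatrix},
\]
and multiplying out $\sigma_S=\mathrm{diag}(\sigma,\sigma)$ one checks that $H_{\mathcal R}$ is obtained from $H_{L(\mathcal S)}$ by interchanging its block of $n_q$ rows with its block of $n_p$ rows. Equal row spaces give equal kernels, so $L(\mathcal S)=\mathcal R$, proving \textbf{(c)}.

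The only genuine difficulty is organisational: keeping straight which column blocks of the Lagrangian standard form are positions and which are momenta when matching them against the block structure of $H_{L(\mathcal S)}$, carrying the symplectic permutation $\sigma_S$ through correctly (absorbed above directly into the linear data of $\mathcal S$ rather than treated separately), and the reduction of the general relation case to the state case. It is also worth making explicit, if only by citing \cite{comfort2021graphical}, that $L$ preserves $\oplus$ and composition, since that is what makes $\categoryname{LosLagRel}_F$ a sub-prop and ``isomorphism'' meaningful in the prop sense.
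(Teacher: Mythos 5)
Your proposal is correct and follows essentially the same route as the paper: one direction is the power-input computation showing $L$-images are lossless, and the other uses the $Y=0$ standard form to exhibit every lossless state as $L$ of a linear relation (your explicit preimage construction in (c) is just a fleshed-out version of the paper's terse ``one projects onto the $q$ coordinate''). Your generator-matrix computation $q^Tp=-t^TGH^Ts=0$ in (b) is a slightly cleaner and more general form of the paper's check, and your explicit treatment of faithfulness fills in a step the paper merely asserts.
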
\label{lossprop}
\begin{proof}
Consider the triangle shown below:
\begin{equation}
\begin{tikzcd}
\sf{LossLagRel} \arrow[rr, "\subset"] \arrow[rrd, "(2)", shift left=2] &  & \sf{LagRel}                                    \\
                                                                       &  & \sf{LinRel} \arrow[llu, "(1)"] \arrow[u, "L"']
\end{tikzcd}
\end{equation}
Recall that a functor $L:\categoryname{LinRel}\to \categoryname{LagRel}$ is faithful and bijective on objects.
We first show the direction $(1)$ by showing that $L(\mathcal L)$ is lossless. To show this note that because of (\ref{standard-formH}):
\begin{equation*}
\mathcal L=\{(q,q')~|~q=Aq'\}
\end{equation*}
Now by applying the functor $L$ to this relation we get:
\begin{equation*}
L(\mathcal L)=\{(q,p)~(q',p')~|~ q'=Aq,~p=A^{T}p'\}
\end{equation*}
but this is lossless as:
\begin{equation*}
q^{T}p-q'^{T}p'^{T}=q^{T}A^{T}p'-(Aq)^{T}p'=q^{T}A^{T}p'-q^{T}A^{T}p'=0
\end{equation*}
For direction $(2)$ one projects onto the $q$ coordinate.
\end{proof}

It follows that graphical linear algebra \cite{Bonchi2019GraphicalAA, pawelblog,zanasi,bonchi2017interacting} is a universal, sound and complete graphical calculus for \categoryname{LosLagRel}. 

\begin{corollary}
A lossless deterministic Kirchhoff relation is a spider.
\end{corollary}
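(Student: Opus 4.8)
The plan is to combine the three characterizations already in hand: lossless means (by the proposition just proved) that the state can be described by a standard parity-check matrix with $Y = 0$; deterministic means $A$ has exactly one $1$ per row; and Kirchhoff means (by Proposition \ref{quasi-lemma}) that $A$ is quasi-stochastic and $Y\vec{1} = 0$. First I would observe that losslessness already forces $Y = 0$, so the Kirchhoff condition reduces to $A$ being quasi-stochastic, i.e. $A\vec{1} = \vec{1}$. Then I would note that a deterministic matrix (one $1$ per row, all other entries $0$) is automatically quasi-stochastic: the row sum of a $0/1$ matrix with a single $1$ per row is exactly $\vec{1}$, so the Kirchhoff condition is vacuous once we know the relation is deterministic and lossless. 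Hence a lossless deterministic Kirchhoff relation is simply a lossless deterministic Lagrangian relation, determined by a parity-check matrix in standard form with $Y = 0$ and $A$ deterministic.

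Next I would identify what such a relation actually is. With $Y = 0$, the kernel equations are $p_0 = -A^T p_1$ and $q_1 = A q_0$ (as displayed just before Proposition \ref{lossprop}), so by that proposition the relation lies in the image of $L:\categoryname{LinRel}_F \to \categoryname{LagRel}_F$, and in fact $L(\mathcal{L})$ where $\mathcal{L} = \{(q_0, q_1)\mid q_1 = A q_0\}$ with $A$ deterministic. A deterministic matrix $A$ is precisely the matrix of a function on indices $f$ realized as a span (copying and merging of wires): the relation $\{(q_0, Aq_0)\}$ for a $0/1$ matrix with one $1$ per output row is exactly the relation built by a black (copy) spider structure — each input coordinate $q_0^{(j)}$ is copied to all output coordinates $i$ with $f(i) = j$, and unused inputs are discarded. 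Lifting through $L$, and recalling that the Frobenius structure on $\categoryname{LagRel}$ is given by pairing black and white spiders, this $L(\mathcal{L})$ is a spider (in the hypergraph-categorical sense of the paired black/white Frobenius structure of $\categoryname{LagRel}_F$).

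Concretely, I would argue: since $A$ is deterministic, the equivalence relation $\_\sim\_$ from the position-partitioned description groups all coordinates (inputs and outputs) into classes on which $q$ is constant, and within each class the momenta are constrained only by summing to zero ($p_0 = -A^T p_1$ says the input momentum at a class is minus the sum of the output momenta in that class). That is exactly the defining relation of a spider on the wires in each equivalence class — a single connected Frobenius-structure node identifying all positions in the class and enforcing conservation of the associated currents — tensored over the classes. So the relation decomposes as a direct sum of spiders, one per equivalence class, which is itself a spider (a disjoint union of spiders, up to the symplectic permutation $\sigma_S$ reorganizing wires).

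The main obstacle I expect is bookkeeping rather than conceptual: making precise that the linear relation cut out by a deterministic $0/1$ matrix $A$ (with the companion equation on momenta) is literally the relation denoted by a spider in this prop, including correctly handling the symplectic permutation $\sigma_S$, the sign conventions coming from the Lagrangian cups used to turn states back into relations, and isolated wires (empty equivalence classes on the output side, i.e. zero columns of $A$, which correspond to "dangling" spider legs — a zero-output or the $\textit{cap}$/$\textit{cup}$ degenerate spiders). Once the translation dictionary between deterministic $A$ and spider diagrams is pinned down, the corollary follows immediately from the preceding propositions with essentially no further computation.
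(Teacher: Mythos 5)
Your proof is correct, and it takes a genuinely different (and more self-contained) route than the paper's. The paper's own proof is a one-liner phrased in terms of the electrical generators: deterministic Kirchhoff forces $A$ to be deterministic and hence ``contains no current dividers,'' lossless forces $Y=0$ and hence ``no resistors,'' so only spiders remain. That argument implicitly leans on the universality decomposition (graph state of resistors composed with $L(M)$ built from dividers and spiders) which is only established later, in Section~4, so as written it is informal and forward-referencing. You instead argue directly from the standard parity-check form: losslessness gives $Y=0$, determinism gives a $0/1$ matrix $A$ with one $1$ per row (which, as you correctly note, makes the quasi-stochasticity condition of Proposition~\ref{quasi-lemma} automatic), the kernel equations $q_1=Aq_0$, $p_0=-A^Tp_1$ exhibit the state as $L(\mathcal{L})$ for the linear relation $\mathcal{L}=\{(q_0,Aq_0)\}$, and a deterministic matrix is exactly a morphism of the copy-spider subprop of $\categoryname{LinRel}$, so its image under $L$ lies in $\categoryname{Spiders}$ by construction of that subprop. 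This buys rigour and independence from the universality theorem; what it costs is the extra bookkeeping you flag (zero columns of $A$ giving counit legs, the symplectic permutation $\sigma_S$, and the cup sign conventions), all of which are routine and correctly anticipated. The only caveat worth making explicit is that ``is a spider'' should be read as ``is a morphism of $\categoryname{Spiders}$,'' i.e.\ a tensor of single spiders, one per equivalence class, as you already observe.
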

\begin{proof}
If the relation is deterministic Kirchhoff this constrains $A$ to be deterministic and hence the relation contains no current dividers, furthermore the relation is lossless, this implies that no resistors are allowed. This only leaves \categoryname{Spiders} as elements of \categoryname{LosKirRel} if it is deterministic, thereby completeing the proof.
\end{proof}
\begin{remark}
From this result, it follows that \categoryname{GLA} forms a universal, sound and complete graphical calculus for \categoryname{LosLagRel}. This is a corresponds to a fragment of the graphical calculus for lagrangian relations, generated by $L(\text{copy spiders})$, $L(\text{addition spiders})$, and $L(\text{multiply by }k)$, with the same equations as those of GLA.
\end{remark}
\begin{remark}
The definition of power input makes sense for an arbitrary Lagrangian relation. So one can also define the categories of passive Lagrangian relations over $\mathbb R$, denoted as \categoryname{PasLagRel} which is defined as those Lagrangian relations $\mathcal R$ for which the power input $P_{\mathcal R}({q},{p})\geq 0$ for $({q},{p}) \in \mathcal R$. Similarly one can define the prop of \textbf{passive Kirchhoff relations} over $\mathbb R$, denoted as $\categoryname{PasKirRel(R)}$, are those relations $\mathcal R$ for which the power input $P_{\mathcal R}({q},{p})\geq 0$ for $({q},{p}) \in \mathcal R$. Physically, these correspond to elements that dissipate or conserve, but do not generate energy.

Resistors with negative resistances are not present in the prop of passive Kirchhoff relations over $\mathbb R$, but they are present in  Kirchhoff relations over $\mathbb R$.
\end{remark}

\subsection{Graph States}
\label{GS}
Although standard forms for both linear relations and  Lagrangian relations were defined, these standard forms are not unique. However, there is a special class of Kirchhoff relations, for which one can define a canonical form, and this will be useful in demonstrating universality. 

\begin{definition}
Any Kirchhoff relation $\mathcal R$ with $k_p=k$ (i.e ``no extra wires") is known as a {\bf graph state}.
\end{definition}

A graph state is specified by a parity-check matrix of the following form:
\[
{H} = \begin{pmatrix}
{Y}  & 1_{k\times k} 
\end{pmatrix}\sigma_{S}
\]
The following theorem will show that the $\sigma_{S}$ can be removed making the parity-check matrix form in the case of graph states unique:

\begin{theorem}\label{graphstatethm}
A graph state is uniquely specified by a parity-check matrix of the following canonical form:
\begin{equation*}
{H} = \begin{pmatrix}
{Y}  & 1_{k\times k} 
\end{pmatrix}
\end{equation*}
where the matrix ${Y}$ satisfies: ${Y}={Y}^T$ and ${Y} \vec{1}=0$.\label{thm:canonical}
\end{theorem}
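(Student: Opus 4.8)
The plan is to establish two things: first, that every graph state admits a parity-check matrix of the claimed form with $\sigma_S$ removed (existence of the canonical form), and second, that the matrix $Y$ is uniquely determined by the relation (uniqueness). For existence, I would start from the standard form of Theorem~\ref{standard form theorem}. A graph state has $k_p = k$, which forces $n_q = 0$, so the standard parity-check matrix collapses to $H = \begin{pmatrix} Y & 1_{k\times k} \end{pmatrix}\sigma_S$ where $\sigma_S$ is a symplectic permutation. The symmetry condition $Y = Y^T$ is inherited directly from Theorem~\ref{standard form theorem}, and the condition $Y\vec{1} = 0$ follows from Proposition~\ref{quasi-lemma} applied to this degenerate case (with $A$ vacuous, the only surviving constraint is $Y\vec{1} = 0$). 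The remaining work for existence is to show $\sigma_S$ can be absorbed: since $\sigma_S = \begin{pmatrix} \sigma & 0 \\ 0 & \sigma \end{pmatrix}$ acts by simultaneously permuting the $q$-columns and the $p$-columns, the relation $\{(q,p) \mid Yq + p = 0\}$ transported by $\sigma_S$ is $\{(q,p) \mid Y\sigma^{-1}q + \sigma^{-1}p = 0\} = \{(q,p) \mid \sigma Y \sigma^{-1} q + p = 0\}$ after left-multiplying by $\sigma$. So the same relation is presented by $\begin{pmatrix} \sigma Y \sigma^T & 1 \end{pmatrix}$, and $\sigma Y \sigma^T$ is again symmetric with $\sigma Y\sigma^T \vec{1} = \sigma Y \vec{1} = 0$ (using $\sigma^T\vec{1} = \vec{1}$ since $\sigma$ is a permutation). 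Thus $\sigma_S$ can always be removed at the cost of conjugating $Y$.

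For uniqueness, suppose $\begin{pmatrix} Y_1 & 1 \end{pmatrix}$ and $\begin{pmatrix} Y_2 & 1 \end{pmatrix}$ are both parity-check matrices for the same graph state $\mathcal{R}$. Both have rank $k$, so by the universal-coequalizer property of parity-check matrices recalled in the excerpt, there is an invertible $\beta$ with $\begin{pmatrix} Y_2 & 1 \end{pmatrix} = \beta \begin{pmatrix} Y_1 & 1 \end{pmatrix}$. Comparing the right-hand blocks gives $\beta \cdot 1 = 1$, so $\beta = 1$, and then comparing the left-hand blocks gives $Y_2 = Y_1$. This is the cleanest route and avoids any appeal to generator matrices.

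I do not expect a serious obstacle here; the only point requiring care is making sure the degenerate case $n_q = 0$ of Theorem~\ref{standard form theorem} and of Proposition~\ref{quasi-lemma} is handled correctly — in particular that "no extra wires" ($k_p = k$) really does force the lower block row of the standard form to be empty, so that $A$ genuinely disappears rather than merely being constrained. Once that bookkeeping is pinned down, both existence and uniqueness are short. One should also remark, as the surrounding text does, that this is the canonical form for a \emph{state}; the corresponding statement for a general Kirchhoff relation is obtained by bending wires with Lagrangian cups, and is not claimed to be canonical in the same sense.
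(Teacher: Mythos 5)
Your proposal is correct, and for the existence half it follows essentially the same route as the paper: specialize the standard form of Theorem~\ref{standard form theorem} to the case where the $q$-block is empty, extract $Y\vec{1}=0$ from the Kirchhoff condition (the paper applies $H\begin{pmatrix}\vec{1}\\0\end{pmatrix}=0$ directly, which is exactly the degenerate instance of Proposition~\ref{quasi-lemma} you invoke), and then absorb $\sigma_S$ by left-multiplying by $\sigma^{-1}$, replacing $Y$ by a conjugate $\sigma^T Y\sigma$ that still satisfies $Y=Y^T$ and $Y\vec{1}=0$ (your $\sigma$ versus the paper's $\sigma^{-1}$ in the conjugation is an immaterial convention difference). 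Where you genuinely diverge is uniqueness: the paper argues somewhat informally that the various $Y$'s arising from different standard forms differ only by a relabeling of vertices, and concludes that the graph state is determined by the associated weighted graph; you instead derive uniqueness of the canonical form directly from the comparison-isomorphism property stated earlier in the paper --- if $\beta\begin{pmatrix}Y_1 & 1\end{pmatrix}=\begin{pmatrix}Y_2 & 1\end{pmatrix}$ then the identity blocks force $\beta=1$ and hence $Y_1=Y_2$. Your argument is tighter and arguably what the theorem statement actually needs: it shows that once the permutation is normalized to the identity, the matrix $Y$ itself (not merely its conjugacy class under permutations) is determined by the subspace, whereas the paper's relabeling remark really concerns the passage from a standard form with non-trivial $\sigma_S$ to the canonical one, not two competing canonical forms. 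Your closing cautions --- checking that $k_p=k$ genuinely empties the lower block row so that $A$ disappears, and that the canonical form is a statement about states --- are exactly the bookkeeping points the paper also relies on, and they go through.
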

\begin{proof}
First, observe that any symplectic permutation leaves $\begin{pmatrix} \vec{1} \\ 0 \end{pmatrix}$ invariant: $$\sigma_S \begin{pmatrix} \vec{1} \\ 0 \end{pmatrix} = \begin{pmatrix} \vec{1} \\ 0 \end{pmatrix} $$

Let us apply the constraint $${H}\begin{pmatrix} \vec{1} \\ 0 \end{pmatrix}=0,$$ to a parity-check matrix ${H}$ that is in standard form for a Lagrangian relation. (This is a necessary and sufficient condition for the relation defined by ${H}$ to be Kirchhoff.) By assumption, $k_q=0$. Then ${Y}\vec{1}=0$, and so one has:
\begin{equation*}
    {H} = \begin{pmatrix}
    {Y} & 1_{k\times k} 
    \end{pmatrix}\begin{pmatrix} \sigma & 0 \\  0 & \sigma \end{pmatrix}
    \label{kirchhoff-normal-form1}.
\end{equation*}

Note that one can choose $\sigma= 1_{k\times k}$. Absorbing $\sigma_S$ into the normal form for a Lagrangian relation one has the following has:
\begin{equation*}
    {H} = \begin{pmatrix}
    {Y}\sigma & \sigma. 
    \end{pmatrix}
\end{equation*}
By row operations alone, one can convert $\sigma$ into $ 1_{k\times k}$. This is equivalent to left-multiplying by $\sigma^{-1}=\sigma^T$. The final expression is: 
\begin{equation*}
    {H} = \begin{pmatrix}
    \sigma^T{Y}\sigma & 1_{k\times k} 
    \end{pmatrix}.
\end{equation*}
Note that $\tilde{{Y}} = \sigma^T{Y}\sigma$ satisfies the same conditions as ${Y}$. These conditions allow us to interpret the off-diagonal parts of ${Y}$ and $\tilde{{Y}}$ as adjacency matrices of weighted graphs, with weights in $F$. (The diagonal parts are determined by ${Y}\vec{1}=0$.) The weighted graphs described by ${Y}$ and $\tilde{{Y}}$ differ only by a relabeling of vertices. 
Hence the graph state is uniquely determined by the weighted graph whose adjacency matrix is the off-diagonal part of ${Y}$.
\end{proof}

Using that standard form for parity-check matrices in the equation ${H}{u}=0$, one has:
\begin{equation*}
{p}=-{Y} {q}
\end{equation*}
Thus any graph relation defines a function from the space of positions to momenta. Since it is a well-known fact that canonical forms exist for linear functions, this gives another explanation of Theorem \ref{thm:canonical}.

Electrical Engineers would call $Y$ an admittance matrix:  the condition that $Y = Y^T$ is exactly the condition required for an electrical circuit to satisfy {\em reciprocity}, \cite{sadiku}.   Notice that reciprocity holds more generally for an arbitrary Lagrangian relation.

\section{Universality for Kirchhoff Relations}

In this section, we give universal sets of generators for several of the subcategories discussed in the previous chapter, these universality proofs provide the interpretation of different subcategories of Kirchhoff relations as categories of electrical networks. The generators for Kirchhoff relations are built up in a series of steps. First the ``spiders" which give the basic hyper-graphical structure are discussed, to these resistors are added, and then it is shown that graph states can be obtained by a mesh of resistors. Finally current dividers are added to give a set of generators for an arbitrary map in $\categoryname{KirRel}$. Also considered are the generators of $\categoryname{LosKirRel}$ and $\categoryname{ResRel}$.
 
\subsection{A Universal Set of Generators for \categoryname{Spiders}}
The generators for the category of spiders \categoryname{Spider} are shown in Table \ref{Kirchhoff-spider-generator-table}. These are obtained from applying the functor $L:\categoryname{LinRel_{F}}\to \categoryname{LagRel_{F}}$ to the generators of \categoryname{LinRel} associated with the copy spider.

\begin{center}
\begin{table}
\begin{tabular}{|P{4cm}|P{3cm}|P{3.5cm}|P{3cm}|}
\hline
 Name & Generator & Definition in terms of \categoryname{GAA} & Relation: $p_i,~q_i \in F$ s.t.\\
\hline
\begin{gather*}
\text{Unit}
\end{gather*} &
\begin{center}
\begin{tikzpicture}[scale=0.4]
	\begin{pgfonlayer}{nodelayer}
		\node [style={red, very thick}] (0) at (-11, 4) {};
		\node [style=none] (1) at (-11, 6) {};
		\node [style=none] (2) at (-11, 6.75) {};
	\end{pgfonlayer}
	\begin{pgfonlayer}{edgelayer}
		\draw [style=boldedge] (1.center) to (0.center);
	\end{pgfonlayer}
\end{tikzpicture}
\end{center}
&
\begin{center}
\begin{tikzpicture}[scale=0.4]
	\begin{pgfonlayer}{nodelayer}
		\node [style=new style 0] (0) at (1.5, 4) {};
		\node [style=none] (1) at (1.5, 6) {};
		\node [style=white] (2) at (0.5, 4) {};
		\node [style=none] (3) at (0.5, 6) {};
		\node [style=none] (6) at (0.5, 6.75) {$p_{1}$};
		\node [style=none] (7) at (1.5, 6.75) {$q_{1}$};
	\end{pgfonlayer}
	\begin{pgfonlayer}{edgelayer}
		\draw (0.center) to (1);
		\draw (2.center) to (3);
	\end{pgfonlayer}
\end{tikzpicture}
\end{center}
&
\begin{gather*}
p_1=0
\end{gather*}
\\
\hline
\begin{gather*}
\text{Counit}
\end{gather*} & 
\begin{center}
\begin{tikzpicture}[scale=0.4]
	\begin{pgfonlayer}{nodelayer}
		\node [style={red, very thick}] (0) at (-9, 7) {};
		\node [style=none] (1) at (-9, 5) {};
		\node [style=none] (2) at (-9, 4.25) {};
		\node [style=none] (3) at (-9, 8) {};
	\end{pgfonlayer}
	\begin{pgfonlayer}{edgelayer}
		\draw [style=boldedge] (0) to (1.center);
	\end{pgfonlayer}
\end{tikzpicture}
\end{center}
&
\begin{center}
\begin{tikzpicture}[scale=0.4]
	\begin{pgfonlayer}{nodelayer}
		\node [style=new style 0] (0) at (1.5, 4.75) {};
		\node [style=none] (1) at (1.5, 2.75) {};
		\node [style=white] (2) at (0.5, 4.75) {};
		\node [style=none] (3) at (0.5, 2.75) {};
		\node [style=none] (4) at (0.5, 1.75) {$p_{1}$};
		\node [style=none] (5) at (1.5, 1.75) {$q_{1}$};
		\node [style=none] (6) at (0.5, 5.75) {};
		\node [style=none] (7) at (1.5, 5.75) {};
	\end{pgfonlayer}
	\begin{pgfonlayer}{edgelayer}
		\draw (1.center) to (0);
		\draw (3.center) to (2);
	\end{pgfonlayer}
\end{tikzpicture}
\end{center}
&
\begin{gather*}
p_1=0
\end{gather*}\\
\hline
\begin{gather*}
\text{Monoid}
\end{gather*}
&
\begin{center}
\begin{tikzpicture}[scale=0.4]
	\begin{pgfonlayer}{nodelayer}
		\node [style={red, very thick}] (0) at (1.5, 5) {};
		\node [style=none] (1) at (1.5, 6.5) {};
		\node [style=none] (2) at (0.5, 3.5) {};
		\node [style=none] (3) at (2.5, 3.5) {};
		\node [style=none] (4) at (0.5, 2.75) {};
		\node [style=none] (5) at (2.5, 2.75) {};
		\node [style=none] (6) at (1.5, 7.25) {};
	\end{pgfonlayer}
	\begin{pgfonlayer}{edgelayer}
		\draw [style=boldedge] (2.center) to (0.center);
		\draw [style=boldedge] (3.center) to (0.center);
		\draw [style=boldedge] (0.center) to (1.center);
	\end{pgfonlayer}
\end{tikzpicture}

\end{center}
&
\begin{center}
\begin{tikzpicture}[scale=0.4]
	\begin{pgfonlayer}{nodelayer}
		\node [style=white] (0) at (1.5, 5) {};
		\node [style=none] (1) at (0.5, 3.5) {};
		\node [style=none] (2) at (2.5, 3.5) {};
		\node [style=none] (3) at (1.5, 6.5) {};
		\node [style=none] (4) at (1.5, 3.5) {};
		\node [style=none] (5) at (3.5, 3.5) {};
		\node [style=new style 0] (6) at (2.5, 5) {};
		\node [style=none] (7) at (2.5, 6.5) {};
		\node [style=none] (8) at (0.5, 2.5) {$p_1$};
		\node [style=none] (9) at (1.5, 2.5) {$q_1$};
		\node [style=none] (10) at (2.5, 2.5) {$p_{2}$};
		\node [style=none] (11) at (3.5, 2.5) {$q_{2}$};
		\node [style=none] (12) at (1.5, 7.5) {$p_{3}$};
		\node [style=none] (13) at (2.5, 7.5) {$q_{3}$};
	\end{pgfonlayer}
	\begin{pgfonlayer}{edgelayer}
		\draw (3.center) to (0);
		\draw (0) to (1.center);
		\draw (0) to (2.center);
		\draw (7.center) to (6);
		\draw (6) to (4.center);
		\draw (6) to (5.center);
	\end{pgfonlayer}
\end{tikzpicture}

\end{center}
&
\begin{gather*}
p_3 = p_1+p_2 \\ q_3=q_1=q_2
\end{gather*}
\\
\hline
\begin{gather*}
\text{Comonoid}
\end{gather*}
&
\begin{center}
\begin{tikzpicture}[scale=0.4]
	\begin{pgfonlayer}{nodelayer}
		\node [style={red, very thick}] (0) at (1.5, -5) {};
		\node [style=none] (1) at (1.5, -6.5) {};
		\node [style=none] (2) at (0.5, -3.5) {};
		\node [style=none] (3) at (2.5, -3.5) {};
		\node [style=none] (4) at (0.5, -2.75) {};
		\node [style=none] (5) at (2.5, -2.75) {};
		\node [style=none] (6) at (1.5, -7.25) {};
	\end{pgfonlayer}
	\begin{pgfonlayer}{edgelayer}
		\draw [style=boldedge] (3.center) to (0);
		\draw [style=boldedge] (0) to (2.center);
		\draw [style=boldedge] (0) to (1.center);
	\end{pgfonlayer}
\end{tikzpicture}
\end{center}
&
\begin{center}
\begin{tikzpicture}[scale=0.4]
	\begin{pgfonlayer}{nodelayer}
		\node [style=white] (0) at (1.5, -5) {};
		\node [style=none] (1) at (0.5, -3.5) {};
		\node [style=none] (2) at (2.5, -3.5) {};
		\node [style=none] (3) at (1.5, -6.5) {};
		\node [style=none] (4) at (1.5, -3.5) {};
		\node [style=none] (5) at (3.5, -3.5) {};
		\node [style=new style 0] (6) at (2.5, -5) {};
		\node [style=none] (7) at (2.5, -6.5) {};
		\node [style=none] (8) at (0.5, -2.5) {$p_3$};
		\node [style=none] (9) at (1.5, -2.5) {$q_3$};
		\node [style=none] (10) at (2.5, -2.5) {$p_{3}$};
		\node [style=none] (11) at (3.5, -2.5) {$q_{3}$};
		\node [style=none] (12) at (1.5, -7.5) {$p_{1}$};
		\node [style=none] (13) at (2.5, -7.5) {$q_{1}$};
	\end{pgfonlayer}
	\begin{pgfonlayer}{edgelayer}
		\draw (3.center) to (0);
		\draw (0) to (1.center);
		\draw (0) to (2.center);
		\draw (7.center) to (6);
		\draw (6) to (4.center);
		\draw (6) to (5.center);
	\end{pgfonlayer}
\end{tikzpicture}
\end{center}
&
\begin{gather*}
p_1 = p_2+p_3 \\ q_3=q_1=q_2
\end{gather*}
\\
\hline
\end{tabular}
\caption{These are a universal set of generators for \categoryname{Spiders}. \label{Kirchhoff-spider-generator-table} 
}
\end{table}
\end{center}
Note the category of Lagrangian relations contains two inequivalent spiders, L(copy spider) and L(addition spider). Only one of these spiders is in the category of Kirchhoff relations, L(copy spider) -- which copies positions and adds momenta. We represent it by a red spider as shown in Table \ref{Kirchhoff-spider-generator-table}. That these generators are universal follows immediately from the fact that $L$ is a functor and copy spiders form a universal set of generators for the sub-prop \categoryname{copy} of \categoryname{LinRel} where the \categoryname{copy} is the category generated by the copy spiders in $\categoryname{LinRel}_{F}$.
One can use these spiders to define cups and caps for the category of Kirchhoff relations, this allows relation in Kirchhoff relations to  be converted into a state (or an effect). Borrowing terminology from electrical engineering, we sometimes refer to any relation $\mathcal R: F^{2n} \to F^{2m}$ as an $(n+m)$-terminal device.

\subsection{A Universal Set of Generators for \categoryname{KirRel}}
For Kirchhoff relations, we supplemented the generators in Tables \ref{Kirchhoff-spider-generator-table} with resistors shown in Table \ref{tab:resistor}, and ideal current dividers as shown in Table \ref{lossless-divider-table}.

\begin{table}
    \centering

\begin{tabular}{|P{2cm}|P{3.5cm}|P{3.5cm}|P{5cm}|}
\hline 
Name & Generator & Definition in terms of \categoryname{GAA} & Relation: $p_i,~q_i \in F$ s.t.\\
\hline
\begin{gather*}
\text{Resistor}
\end{gather*}
&
\begin{center}
\begin{tikzpicture}[scale=0.5]
	\begin{pgfonlayer}{nodelayer}
		\node [style={green,very thick}] (0) at (0, 3.75) {$y$};
		\node [style=none] (1) at (-1, 3.75) {};
		\node [style=none] (2) at (0, 6.25) {};
		\node [style=none] (3) at (0, 1) {};
		\node [style=none] (4) at (0, 0.25) {};
		\node [style=none] (5) at (0, 7.25) {};
	\end{pgfonlayer}
	\begin{pgfonlayer}{edgelayer}
		\draw [style=boldedge] (2.center) to (0);
		\draw [style=boldedge] (0) to (3.center);
	\end{pgfonlayer}
\end{tikzpicture}
\end{center}
&
\begin{center}
\begin{tikzpicture}[scale=0.5]
	\begin{pgfonlayer}{nodelayer}
		\node [style=white] (0) at (-3, 2.25) {};
		\node [style=scalarop] (1) at (-0.75, 3.25) {$y$};
		\node [style=new style 0] (2) at (-3, 2.25) {};
		\node [style=new style 4] (3) at (1.25, 4.25) {};
		\node [style=none] (4) at (-3, 6) {};
		\node [style=none] (5) at (-3, 0) {};
		\node [style=none] (6) at (1.25, 6) {};
		\node [style=none] (7) at (1.25, 0) {};
		\node [style=none] (8) at (-3, -1.25) {$p_{1}$};
		\node [style=none] (9) at (1.25, -1.25) {$q_{1}$};
		\node [style=none] (10) at (-3, 7) {$p_{2}$};
		\node [style=none] (11) at (1.25, 7) {$q_{2}$};
	\end{pgfonlayer}
	\begin{pgfonlayer}{edgelayer}
		\draw [in=30, out=-105] (1) to (0);
		\draw (6.center) to (3);
		\draw (3) to (7.center);
		\draw (2) to (5.center);
		\draw (2) to (4.center);
		\draw [in=-165, out=90] (1) to (3);
	\end{pgfonlayer}
\end{tikzpicture}
\end{center}
&
\begin{gather*}
p_{1}=p_{2}\\
p_{1}=y(q_{2}-q_1)\\
\end{gather*}\\ \hline
    \end{tabular}
    \caption{\textbf{ Resistor}: In addition to the generators for \categoryname{Spiders}, we must include \textit{resistors}, defined above, to obtain a universal set of generators for \categoryname{ResRel}.}
    \label{tab:resistor}
\end{table}
\begin{center}
\begin{table}
\begin{tabular}{|P{3.0cm}|P{8.0cm}|P{4.5cm}|}
\hline
Generator & Definition in terms of \categoryname{GAA} generators & Interpretation as a Kirchhoff Relation \\
\hline
\begin{center}
\begin{tikzpicture}[scale=0.4]
	\begin{pgfonlayer}{nodelayer}
		\node [style=white] (4) at (4, 4.25) {};
		\node [style=none] (5) at (4, 7.5) {};
		\node [style=none] (6) at (7, -0.5) {};
		\node [style=none] (7) at (0.5, -0.5) {};
		\node [style=green, very thick] (17) at (4, 4.25) {$~\mathbf{\langle{}\rangle_{w}}$};
	\end{pgfonlayer}
	\begin{pgfonlayer}{edgelayer}
		\draw [style=boldedge, bend left] (7.center) to (17);
		\draw [style=boldedge, bend left] (17) to (6.center);
		\draw [style=boldedge] (5.center) to (17);
	\end{pgfonlayer}
\end{tikzpicture}
\end{center}
&
\begin{center}
\begin{tikzpicture}[scale=0.5]
	\begin{pgfonlayer}{nodelayer}
		\node [style=white] (1) at (17.25, 5) {};
		\node [style=none] (2) at (17.25, 7.5) {};
		\node [style=none] (3) at (13, -1) {};
		\node [style=white] (4) at (22.5, 5.25) {};
		\node [style=none] (5) at (22.5, 7.5) {};
		\node [style=none] (6) at (26.5, -0.75) {};
		\node [style=none] (7) at (18.75, -1) {};
		\node [style=none] (8) at (13, -2.25) {$V_{1}$};
		\node [style=none] (9) at (18.75, -2.5) {$I_{1}$};
		\node [style=none] (10) at (21.5, -2.5) {$V_{2}$};
		\node [style=none] (11) at (26.75, -2.5) {$I_{2}$};
		\node [style=new style 0] (12) at (22.5, 5.25) {};
		\node [style=none] (13) at (17.25, 8.5) {$V_{3}$};
		\node [style=none] (14) at (22.5, 8.5) {$I_{3}$};
		\node [style=scalar] (15) at (13, 1.25) {$(1-w)$};
		\node [style=scalar] (16) at (18.75, 1.25) {$(1-w)^{-1}$};
		\node [style=scalar] (17) at (26.5, 1.25) {$w^{-1}$};
		\node [style=white] (18) at (17.25, 5) {};
		\node [style=none] (19) at (21.5, -1) {};
		\node [style=scalar] (20) at (21.5, 1.25) {$w$};
	\end{pgfonlayer}
	\begin{pgfonlayer}{edgelayer}
		\draw (1) to (2.center);
		\draw (4) to (5.center);
		\draw [bend right] (1) to (15);
		\draw (15) to (3.center);
		\draw (16) to (7.center);
		\draw [bend left=45, looseness=1.25] (16) to (12);
		\draw (17) to (6.center);
		\draw [bend left=45, looseness=1.25] (12) to (17);
		\draw [bend left] (18) to (20);
		\draw (20) to (19.center);
	\end{pgfonlayer}
\end{tikzpicture}
\end{center}
&
\begin{gather*}
V_{3}=V_{1}(1-w)+V_{2}w\\
I_{3}=(1-w)^{-1}I_{1}=w^{-1}I_{2}\\
\end{gather*}
\\
\hline
\begin{center}
\begin{tikzpicture}[scale=0.4]
	\begin{pgfonlayer}{nodelayer}
		\node [style=white] (4) at (0.5, 2.25) {};
		\node [style=none] (5) at (0.5, -1) {};
		\node [style=none] (6) at (3.5, 7) {};
		\node [style=none] (7) at (-3, 7) {};
		\node [style=green, very thick] (17) at (0.5, 2.25) {$~\mathbf{\langle{}\rangle^{w}}$};
	\end{pgfonlayer}
	\begin{pgfonlayer}{edgelayer}
		\draw [style=boldedge, bend right] (7.center) to (17);
		\draw [style=boldedge, bend right] (17) to (6.center);
		\draw [style=boldedge] (5.center) to (17);
	\end{pgfonlayer}
\end{tikzpicture}

\end{center}
&
\begin{center}
\begin{tikzpicture}[scale=0.5]
	\begin{pgfonlayer}{nodelayer}
		\node [style=white] (1) at (17.25, 1) {};
		\node [style=none] (2) at (17.25, -1.5) {};
		\node [style=none] (3) at (13, 7) {};
		\node [style=white] (4) at (22.5, 0.75) {};
		\node [style=none] (5) at (22.5, -1.5) {};
		\node [style=none] (6) at (26.5, 6.75) {};
		\node [style=none] (7) at (18.75, 7) {};
		\node [style=none] (8) at (13, 8.25) {$V_{1}$};
		\node [style=none] (9) at (18.75, 8.5) {$I_{1}$};
		\node [style=none] (10) at (21.5, 8.5) {$V_{2}$};
		\node [style=none] (11) at (26.75, 8.5) {$I_{2}$};
		\node [style=new style 0] (12) at (22.5, 0.75) {};
		\node [style=none] (13) at (17.25, -2.5) {$V_{3}$};
		\node [style=none] (14) at (22.5, -2.5) {$I_{3}$};
		\node [style=scalar] (15) at (13, 4.75) {$(1-w)^{-1}$};
		\node [style=scalar] (16) at (18.75, 4.75) {$(1-w)$};
		\node [style=scalar] (17) at (26.5, 4.75) {$w$};
		\node [style=white] (18) at (17.25, 1) {};
		\node [style=none] (19) at (21.5, 7) {};
		\node [style=scalar] (20) at (21.5, 4.75) {$w^{-1}$};
	\end{pgfonlayer}
	\begin{pgfonlayer}{edgelayer}
		\draw (1) to (2.center);
		\draw (4) to (5.center);
		\draw [bend left] (1) to (15);
		\draw (15) to (3.center);
		\draw (16) to (7.center);
		\draw [bend right=45, looseness=1.25] (16) to (12);
		\draw (17) to (6.center);
		\draw [bend right=45, looseness=1.25] (12) to (17);
		\draw [bend right] (18) to (20);
		\draw (20) to (19.center);
	\end{pgfonlayer}
\end{tikzpicture}
\end{center}
&
\begin{gather*}
V_{1}(1-w)^{-1}+V_{2}w^{-1}=V_{3}\\
I_{3}=I_{1}(1-w)=I_{2}w
\end{gather*}\\
\hline
\end{tabular}
\caption{\textbf{Ideal Current Dividers} A universal set of generators for \categoryname{LosKirRel} require, in addition to the spiders in Table \ref{Kirchhoff-spider-generator-table}, the above ``ideal current divider'' as defined in this table. \label{lossless-divider-table}
}
\end{table}
\end{center}
A resistor defines the following relationship between current and voltage:
\begin{equation*}
V_2-V_1=I_1 R,~I_1=I_2
\end{equation*}
This is sometimes known as the Ohm's law.

The power input of a resistor is given by:
\begin{equation*}
P=I^{2}R
\end{equation*}
A positive resistor (in the case $F=\mathbb{R}$) is usually considered to be a power dissipating (passive) electrical element, but if the resistor takes on negative values then it is an active element. Sometimes its more convenient to use \textbf{conductance} instead of resistance the conductance is just the inverse of resistance, that is:
\begin{equation*}
y=\frac{1}{R}.
\end{equation*}
Note that when the resistance across a wire is $0$ it is said to be ``short circuited", in that case the conductance is not defined, although we can informally regard it as $y=\infty$. In the case when conductance $y=0$ then the value of resistance is undefined, although we can be informally regard it as  $R=\infty$, and the relation is a disconnection. In electrical engineering this is said to be an ``open circuit": $I_1=I_2=0$, with $V_1$ and $V_2$ arbitrary.
\subsection{Ideal Current Dividers}
While the resistors and junctions of ideal wires are familiar from electrical engineering, the ideal current divider may seem unfamiliar. Let us discuss the physical interpretation of this generator.

One can view an ideal current divider as a limiting case of the following relation in Kirchhoff Relations over $\mathbb R$. Consider a three-terminal device, constructed from three resistors, $R_{12}=-\frac{1}{w} R$, $R_{13}=R$, and $R_{23}=\frac{1-w}{w} R$. Then, in the limit $R \to 0$, one can check that this describes the relation:
\begin{eqnarray}
I_1 & = & w I_3 \nonumber \\
I_2 & = & (1-w) I_3 \nonumber \\
V_3 & = & w V_1+(1-w) V_2 \nonumber
\end{eqnarray}
which is that of a ideal current divider. Thus in practice current dividers are constructed using resistors, note at least one of these resistances must be negative, so in practice, a ideal divider requires active components to be constructed; although the power input can be made arbitrarily close to zero. 

By virtue of the above construction, we could choose not to regard the ideal current divider as an independent generator for the theory of Kirchhoff relations over $\mathbb R$.  However, for Kirchhoff relations over an arbitrary, possibly-finite field, the above limiting procedure does not make sense, so we have no choice but to consider the ideal current divider as an independent generator.

\subsection{A Universal Set of Generators for \categoryname{LosKirRel}}
The category of $\categoryname{LosKirRel}$ generators include the generators of \categoryname{Spiders} in Table \ref{Kirchhoff-spider-generator-table}, supplemented with \textit{ideal current dividers} as shown in Table \ref{lossless-divider-table}.

\subsection{A universal set of generators for \categoryname{AffKirRel}}

To obtain the category of Affine Kirchhoff relations, we need two additional generators, shown in Table \ref{Kirchhoff-affine-generator-table}.

\begin{center}
\begin{table}
\begin{tabular}{|P{3cm}|P{3.5cm}|P{3.5cm}|P{3cm}|}
\hline
Name & Generator & Definition in terms of \categoryname{GAA} & Relation: $p_i,~q_i \in F$ s.t.\\
\hline
\begin{gather*}
\text{Voltage Source:}
\end{gather*} &
\begin{center}
\begin{tikzpicture}[scale=0.4]
	\begin{pgfonlayer}{nodelayer}
		\node [style={white, very thick}] (0) at (-7.5, -3.25) {$\mathbf \pm$};
		\node [style=none] (1) at (-7.5, 0) {};
		\node [style=none] (2) at (-7.5, -6.75) {};
		\node [style=none] (3) at (-9, -3.25) {$V$};
		\node [style=none] (17) at (-7.5, -8.25) {$1$};
		\node [style=none] (18) at (-7.5, 1.25) {$2$};

	\end{pgfonlayer}
	\begin{pgfonlayer}{edgelayer}
		\draw [style=boldedge] (1.center) to (0);
		\draw [style=boldedge] (0) to (2.center);
	\end{pgfonlayer}
\end{tikzpicture}
\end{center}
&
\begin{center}
\begin{tikzpicture}[scale=0.4]
	\begin{pgfonlayer}{nodelayer}
		\node [style=white] (0) at (3.25, -3) {};
		\node [style=none] (1) at (3.25, 0) {};
		\node [style=none] (2) at (3.25, -6.75) {};
		\node [style=none] (9) at (1.25, -6.75) {};
		\node [style=none] (10) at (4.25, 0) {};
		\node [style=none] (11) at (4.25, -7) {};
		\node [style=none] (12) at (3.25, 1.25) {$q_{2}$};
		\node [style=none] (13) at (4.25, 1.25) {$p_{2}$};
		\node [style=none] (14) at (3.25, -8.25) {$q_1$};
		\node [style=none] (15) at (4.25, -8.25) {$p_1$};
		\node [style=none] (16) at (0.75, -6.75) {};
		\node [style=none] (17) at (1.75, -6.75) {};
		\node [style=scalar] (18) at (1.25, -5) {$V$};
	\end{pgfonlayer}
	\begin{pgfonlayer}{edgelayer}
		\draw (1.center) to (0);
		\draw (0) to (2.center);
		\draw (10.center) to (11.center);
		\draw (16.center) to (17.center);
		\draw [bend right] (0) to (18);
		\draw (18) to (9.center);
	\end{pgfonlayer}
\end{tikzpicture}
\end{center}
&
\begin{gather*}
q_{2}=q_1+V\\
p_{2}=p_1\\
\end{gather*}
\\
\hline
\begin{gather*}
\text{Current Source:}
\end{gather*} & 
\begin{center}
\begin{tikzpicture}[scale=0.5]
	\begin{pgfonlayer}{nodelayer}
		\node [style={white, very thick}] (0) at (-7.5, -3.25) {$\mathbf \uparrow$};
		\node [style=none] (1) at (-7.5, 0) {};
		\node [style=none] (2) at (-7.5, -6.75) {};
		\node [style=none] (3) at (-9, -3.25) {$I$};
		\node [style=none] (17) at (-7.5, -8.25) {$1$};
		\node [style=none] (18) at (-7.5, 1.25) {$2$};
	\end{pgfonlayer}
	\begin{pgfonlayer}{edgelayer}
		\draw [style=boldedge] (1.center) to (0);
		\draw [style=boldedge] (0) to (2.center);
	\end{pgfonlayer}
\end{tikzpicture}

\end{center}
&
\begin{center}
\begin{tikzpicture}[scale=0.4]
	\begin{pgfonlayer}{nodelayer}
		\node [style=white] (0) at (1, -3) {};
		\node [style=none] (1) at (1, 0) {};
		\node [style=none] (2) at (1, -6.75) {};
		\node [style=none] (9) at (3, -6.75) {};
		\node [style=none] (10) at (-2.25, 0) {};
		\node [style=none] (11) at (-2.25, -7) {};
		\node [style=none] (12) at (1, 1.25) {$p_{2}$};
		\node [style=none] (13) at (-2.25, 1.25) {$q_{2}$};
		\node [style=none] (14) at (1, -8.25) {$p_1$};
		\node [style=none] (15) at (-2.25, -8.25) {$q_1$};
		\node [style=new style 0] (16) at (-2.25, -4) {};
		\node [style=new style 0] (17) at (-2.25, -2.5) {};
		\node [style=none] (18) at (2.5, -6.75) {};
		\node [style=none] (19) at (3.5, -6.75) {};
		\node [style=scalar] (20) at (3, -4.5) {$I$};
	\end{pgfonlayer}
	\begin{pgfonlayer}{edgelayer}
		\draw (1.center) to (0);
		\draw (0) to (2.center);
		\draw (10.center) to (17);
		\draw (16) to (11.center);
		\draw (18.center) to (9.center);
		\draw (9.center) to (19.center);
		\draw [bend left, looseness=1.25] (0) to (20);
		\draw (20) to (9.center);
	\end{pgfonlayer}
\end{tikzpicture}
\end{center}
&
\begin{gather*}
p_{2}=I+p_{1}
\end{gather*}\\
\hline
\end{tabular}
\caption{\textbf{Affine Kirchhoff Relations}: In addition to the generators of \categoryname{KirRel} we must add the voltage and current source generators to obtain a universal set of generators for \categoryname{AffKirRel} \label{Kirchhoff-affine-generator-table}}
\end{table}
\end{center}
The Tables \ref{Kirchhoff-spider-generator-table}, \ref{tab:resistor}, \ref{lossless-divider-table} and \ref{Kirchhoff-affine-generator-table}  give the generators for affine Kirchhoff relations $\categoryname{AffKirRel}$. This is discussed briefly in section \ref{aff}.
\subsection{Voltage Sources and Current Sources}
The current source ``forces'' the current to take on a certain value, while the voltage source, forces a specific voltage difference across two terminals. 

Any affine shift in position and momenta that obeys KCL can clearly be written in terms of these generators as shown in section \ref{aff}. Therefore, if one has a universal set of generators for \categoryname{KirRel}, then one can obtain any relation in \categoryname{AffKirRel}, by precomposing or post-composing with a direct sum of current sources, and a direct sum of voltage sources.

\subsection{Universality for Graph States}
We now argue that the resistor and the junction, allow one to obtain any graph state as defined in section \ref{GS}.
Observe that a resistor is not ``directional", in the sense that the diagrams of lemma \ref{reshor} are equal and so can be thought of as a weighted connection between two wires: we refer to the component as being ``horizontal". The following lemma shows that a resistor is horizontal in $\categoryname{KirRel}$ and discusses the matrix representation:

\begin{lemma}\label{reshor}
\[
\begin{tikzpicture}[scale=0.5]
	\begin{pgfonlayer}{nodelayer}
		\node [style=red, very thick] (0) at (-10, 4) {};
		\node [style=red, very thick] (1) at (-6, 5) {};
		\node [style=none] (2) at (-10, 7) {};
		\node [style=none] (3) at (-6, 7) {};
		\node [style=none] (4) at (-10, 1) {};
		\node [style=none] (5) at (-6, 1) {};
		\node [style=green, very thick] (6) at (-8, 4.5) {$y$};
		\node [style=none] (7) at (-8, 5.5) {};
		\node [style=none] (8) at (-10, -1) {};
		\node [style=none] (9) at (-6, -1) {};
		\node [style=none] (10) at (-10, 9) {};
		\node [style=none] (11) at (-6, 9) {};
		\node [style=none] (12) at (-2, 4.5) {$=$};
		\node [style=red, very thick] (13) at (1, 5) {};
		\node [style=red, very thick] (14) at (5, 4) {};
		\node [style=none] (15) at (1, 7) {};
		\node [style=none] (16) at (5, 7) {};
		\node [style=none] (17) at (1, 1) {};
		\node [style=none] (18) at (5, 1) {};
		\node [style=green, very thick] (19) at (3, 4.5) {$y$};
		\node [style=none] (20) at (3, 5.5) {};
		\node [style=none] (21) at (1, -1) {};
		\node [style=none] (22) at (5, -1) {};
		\node [style=none] (23) at (1, 9) {};
		\node [style=none] (24) at (5, 9) {};
	\end{pgfonlayer}
	\begin{pgfonlayer}{edgelayer}
		\draw (2.center) to (0);
		\draw (0) to (4.center);
		\draw (1) to (5.center);
		\draw (1) to (3.center);
		\draw (0) to (6);
		\draw (6) to (1);
		\draw (15.center) to (13);
		\draw (13) to (17.center);
		\draw (14) to (18.center);
		\draw (14) to (16.center);
		\draw (13) to (19);
		\draw (19) to (14);
	\end{pgfonlayer}
\end{tikzpicture}
\]
\label{horizontalres}
\end{lemma}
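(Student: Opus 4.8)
The plan is to verify the equality by computing both string diagrams as linear subspaces and checking that they coincide; the ``matrix representation'' mentioned in the statement then drops out of the same computation. First I would coordinatize the external wires of the left-hand diagram, writing $(q,p)$ for position and momentum on each wire. Each red spider is $L(\text{copy spider})$, so it forces the positions on all of its legs (the two external wires on that side together with the adjacent terminal of the resistor) to agree, and forces the corresponding momenta to sum to zero with the appropriate signs. I then substitute the resistor constraints $p_{\mathrm{left}}=p_{\mathrm{right}}$ and $p_{\mathrm{left}}=y(q_{\mathrm{right}}-q_{\mathrm{left}})$ from Table~\ref{tab:resistor}, eliminate the two internal coordinates, and read off a parity-check matrix for the resulting Kirchhoff relation.

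The crux is the observation that this relation is invariant under the symplectic permutation that exchanges the left pair of wires with the right pair. Unwinding the conventions, this reduces to the statement that the resistor relation $\{((q_1,p_1),(q_2,p_2))\mid p_1=p_2,\ p_1=y(q_2-q_1)\}$, viewed inside $\categoryname{LagRel}$ (where passing from a relation to its defining state negates the momentum in the last coordinate), is symmetric in its two terminals: the only apparent asymmetry, $y(q_2-q_1)$ versus $y(q_1-q_2)$, is exactly absorbed by that momentum sign flip together with $p_1=p_2$. Granting this, the equality of the two diagrams is then formal: the red spiders form a commutative (Frobenius) structure, so the two diagrams differ only by the interchange of the resistor's two terminals, which is precisely the symmetry just established.

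For the matrix representation I would put the parity-check matrix into the graph-state normal form of Theorem~\ref{graphstatethm}, namely $\begin{pmatrix} Y & 1\end{pmatrix}$, where $Y$ is the admittance matrix of a single resistor of conductance $y$ between the two ports: the symmetric matrix with $-y$ in the off-diagonal positions and diagonal fixed by $Y\vec{1}=0$. That $Y=Y^{T}$ is exactly horizontality, $Y\vec{1}=0$ confirms that the relation is Kirchhoff, and the generator matrix is recovered as $G=HJ$.

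The step I expect to be the main obstacle is the sign bookkeeping: keeping straight the negation in the last coordinate of Lagrangian relations (state versus relation) and the momentum-sign conventions at the red spiders, so that the two diagrams produce literally the same subspace rather than two subspaces differing by a sign. A careless computation is precisely where this argument would go wrong.
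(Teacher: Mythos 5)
Your proposal is correct, but it is organized differently from the paper's argument. The paper proves the lemma by a single direct calculation: it writes down the constraints imposed by the whole four-wire composite, namely $V_1=V_3$, $V_2=V_4$, $I_3=I_1+y(V_1-V_2)$ and $I_4=I_2-y(V_1-V_2)$, and records them as the functional isomorphism $M(y)=\bigl(\begin{smallmatrix}1&0\\Y&1\end{smallmatrix}\bigr)$ with $Y=\bigl(\begin{smallmatrix}y&-y\\-y&y\end{smallmatrix}\bigr)$; the equality of the two diagrams is then immediate because this system is visibly unchanged under interchanging the left and right columns of terminals. You instead factor the argument into a conceptual step --- the resistor is its own transpose with respect to the Kirchhoff cups and caps, because the sign introduced by bending a wire exactly converts $y(q_2-q_1)$ into $y(q_1-q_2)$ once $p_1=p_2$ is imposed --- plus a formal step, namely that the red spiders form a commutative Frobenius structure, so the two diagrams differ only by that transpose. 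Your sign check is right, and you correctly identify it as the only place the argument could break. What the paper's route buys is the explicit matrix $M(y)$, which is reused essentially verbatim as the layer matrix $C_{ij}(y)$ in the proof of Theorem~\ref{Kirchhoff-graph-state-theorem}; what your route buys is an identification of the actual content of the lemma (self-transposedness of the two-terminal resistor relation), which would apply unchanged to any two-terminal component whose defining relation has the corresponding symmetry, and which makes the role of the spider fusion laws explicit rather than implicit in the composite calculation.
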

\begin{proof}The proof is by direct calculation, as shown below the constraints of the horizontal resistor are given as follows:
\[
I_{1}+yV_{1}-yV_{2}-I_{3}=0~~~-yV_{1}+I_{2}+yV_{2}=I_{4}
\]
\[
V_{1}=V_{3}~~~V_{2}=V_{4}
\]
This matrix can be written functionally as shown below, note that this relation is actually an isomorphism $M(y): (F^{2})^{2}\to (F^{2})^{2} $: 
\begin{equation*}
\begin{pmatrix}
1 & 0 & 0 & 0\\
0 & 1 & 0 & 0\\
y & -y & 1 & 0\\
-y & y & 0 & 1\\
\end{pmatrix}
\begin{pmatrix}
V_{1}\\
V_{1}\\
I_{1}\\
I_{2}\\
\end{pmatrix}
=
\begin{pmatrix}
V_{3}\\
V_{4}\\
I_{3}\\
I_{4}\\
\end{pmatrix}
\end{equation*}
\end{proof}
The concept of a ``mesh of resistors" is defined as follows:
\begin{definition}
A \textbf{mesh of resistors} is defined to be a weighted non-reflexive undirected graph where the weights are given by the values of the conductance $y\in F$ between pair of nodes. Each node is connected by convention to a unique output. This makes a mesh a state.
\end{definition}
An example for a mesh with $3$ conductance is given below:
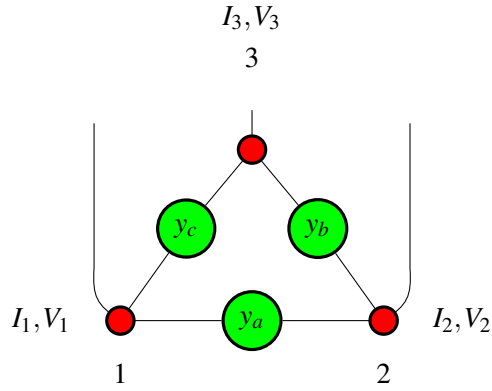
\begin{figure}
\centering
\begin{tikzpicture}[scale=0.7]
	\begin{pgfonlayer}{nodelayer}
		\node [style=none] (0) at (2, 0.25) {};
		\node [style=none] (1) at (7, 0.25) {};
		\node [style=none] (2) at (4.5, 3.5) {};
		\node [style=green, very thick] (3) at (3.25, 2) {$y_c$};
		\node [style=green, very thick] (4) at (5.75, 2) {$y_b$};
		\node [style=green, very thick] (5) at (4.5, 0.25) {$y_a$};
		\node [style=none] (6) at (1.5, 1.25) {};
		\node [style=none] (7) at (7.5, 1.25) {};
		\node [style=none] (8) at (4.5, 4.25) {};
		\node [style=none] (9) at (0.5, 0.25) {$I_{1}, V_{1}$};
		\node [style=none] (10) at (8.5, 0.25) {$I_{2}, V_{2}$};
		\node [style=none] (11) at (4.5, 6) {$I_{3}, V_{3}$};
		\node [style=none] (12) at (2, -0.75) {$1$};
		\node [style=none] (13) at (7, -0.75) {$2$};
		\node [style=none] (14) at (4.5, 5.25) {$3$};
		\node [style=red, very thick] (15) at (2, 0.25) {};
		\node [style=red, very thick] (16) at (7, 0.25) {};
		\node [style=red, very thick] (17) at (4.5, 3.5) {};
		\node [style=none] (18) at (1.5, 4) {};
		\node [style=none] (19) at (7.5, 4) {};
	\end{pgfonlayer}
	\begin{pgfonlayer}{edgelayer}
		\draw (0.center) to (5);
		\draw (5) to (1.center);
		\draw (2.center) to (4);
		\draw (4) to (1.center);
		\draw (0.center) to (3);
		\draw (3) to (2.center);
		\draw [bend right, looseness=1.25] (6.center) to (0.center);
		\draw [bend left, looseness=1.25] (7.center) to (1.center);
		\draw (2.center) to (8.center);
		\draw (18.center) to (6.center);
		\draw (19.center) to (7.center);
	\end{pgfonlayer}
\end{tikzpicture}
\caption{A mesh with $3$ resistors.}
\end{figure}

\begin{theorem} \label{Kirchhoff-graph-state-theorem}
A graph state with the following parity-check matrix:
\begin{equation*}
\begin{pmatrix}
{Y} & I\\
\end{pmatrix}
\begin{pmatrix}
{V}
_{in}\\
{I}_{in}\\
\end{pmatrix}
=
\begin{pmatrix}
0\\
0\\
\end{pmatrix}
\end{equation*}
can be realized by a mesh of resistors.
\end{theorem}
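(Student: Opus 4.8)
The plan is to build the mesh explicitly from the target matrix $Y$ and then read off the relation it realises. Recall from Theorem \ref{graphstatethm} that the canonical parity-check matrix $\begin{pmatrix} Y & I \end{pmatrix}$ of a graph state satisfies $Y = Y^T$ and $Y\vec{1} = 0$, and that the kernel of this parity-check matrix is the graph of the linear function $I_{in} = -Y V_{in}$. Since a mesh of resistors is a state whose relation is itself a linear function from node voltages to node currents, it suffices to exhibit a mesh whose associated matrix equals $Y$ (up to the fixed sign dictated by the orientation conventions for currents in a state). The key structural fact is that the matrix of a resistor mesh is always a weighted graph Laplacian, and the Laplacians of $k$-node weighted graphs are exactly the symmetric matrices with zero row sums — precisely the constraints on $Y$.

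First I would fix the mesh from $Y$: for each unordered pair $\{i,j\}$ with $1 \le i < j \le k$ put one resistor of conductance $y_{ij} := Y_{ij}$ (with the sign to be pinned down in the verification) between output wires $i$ and $j$, omitting the edge when $Y_{ij}=0$ (equivalently, leaving an open circuit there). At each node $i$ all the resistor terminals incident to node $i$, together with the single external wire of that node, are joined by a red spider, i.e. $L(\text{copy})$ from Table \ref{Kirchhoff-spider-generator-table}. By closure of Kirchhoff relations under composition and direct sum this diagram is a Kirchhoff state, so only its relation remains to be computed.

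Next I would compute that relation. The red spider at node $i$ forces every wire meeting it to carry one common voltage $V_i$ (the copy part) and forces the currents on those wires to sum to the external current $I_i$ (the addition part, which is Kirchhoff's current law). By Lemma \ref{reshor} — which says a resistor is genuinely horizontal in $\categoryname{KirRel}$ — the resistor on edge $\{i,j\}$ carries current $y_{ij}(V_i - V_j)$ from $i$ towards $j$, consistently with $y_{ij}=y_{ji}$. Summing at node $i$ gives $I_i = \sum_{j \neq i} y_{ij}(V_i - V_j)$, that is $I_{in} = L V_{in}$ with $L$ the weighted Laplacian $L_{ii} = \sum_{j \neq i} y_{ij}$, $L_{ij} = -y_{ij}$. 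Then $L = L^T$ and $L\vec{1} = 0$ are immediate, and with $y_{ij} = Y_{ij}$ the off-diagonal entries of $L$ are $-Y_{ij}$ while the diagonal is forced to agree with that of $-Y$ by the identity $Y_{ii} = -\sum_{j\neq i} Y_{ij}$ coming from $Y\vec{1}=0$; hence $L = -Y$. Choosing the sign of the conductances that matches the resistor relation of Table \ref{tab:resistor}, the mesh realises exactly $I_{in} = -Y V_{in}$, i.e. the prescribed graph state; conversely every admissible $Y$ (symmetric, zero row sums — $\binom{k}{2}$ free off-diagonal parameters) is obtained, matched by the $\binom{k}{2}$ free conductances of the mesh.

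The step I expect to be the real obstacle is the sign-and-orientation bookkeeping: the resistor in Table \ref{tab:resistor} is stated with a distinguished ``input'' terminal, whereas in the mesh every terminal of every resistor is glued symmetrically into a red spider, so one must use Lemma \ref{reshor} carefully to check that the net current extracted at each node is genuinely the Laplacian expression and that all signs (including the sign flip introduced when turning the relation into a state) line up so that $L$ becomes $-Y$ rather than $+Y$. The counting argument guarantees surjectivity onto graph states once this composition is done correctly, but the honest work is in the spider-and-resistor composite, not in the combinatorics.
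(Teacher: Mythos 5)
Your construction is the same as the paper's -- one resistor of suitably signed conductance for every unordered pair of nodes, glued at each node by a red spider, with the target matrix realized as a weighted graph Laplacian, which exhausts the symmetric zero-row-sum matrices -- so the proposal is correct; the difference lies only in how the composite relation is verified. The paper sequentializes the mesh into a chain of commuting ``horizontal resistor'' isomorphisms $C_{ij}(y)=\left(\begin{smallmatrix}1&0\\Y_{ij}(y)&1\end{smallmatrix}\right)$, multiplies these block-lower-triangular matrices (a small lemma shows the lower blocks simply add), and then caps with units to read off the parity-check matrix $\begin{pmatrix}Y&1\end{pmatrix}$; you instead compute the relation in one step by writing Kirchhoff's current law at each node-spider and invoking Lemma \ref{reshor} for each edge. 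Your route is shorter and makes the Laplacian structure more transparent, while the paper's layer decomposition has the advantage of exhibiting the mesh explicitly as a composite of the declared generators (each $C_{ij}$ is literally one resistor generator whiskered by identities), which is what a universality statement formally requires; your node-equation argument implicitly assumes the semantics of the glued diagram, which is justified but not itself derived from the generator presentation. Your deferral of the sign of the conductances is harmless: with the paper's conventions the mesh with conductances $y_{ij}$ yields the parity-check matrix whose off-diagonal entries are $-y_{ij}$, so the correct choice is $y_{ij}=-Y_{ij}$ rather than $+Y_{ij}$, and since the class of symmetric zero-row-sum matrices is closed under negation this does not affect surjectivity.
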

\begin{proof}
To prove this theorem consider a mesh of $n$ resistors.
\medskip
For each layer in the mesh we define a function $C_{ij}(y):F^{2n}\to F^{2n}$, and any graph state can be then drawn as a composition of resistors. For the case of $3$ resistors this is shown below:
\begin{figure}[ht]
    \centering
\begin{tikzpicture}[scale=0.5]
	\begin{pgfonlayer}{nodelayer}
		\node [style=red, very thick] (0) at (-4, -5.5) {};
		\node [style=red, very thick] (1) at (0, -5.5) {};
		\node [style=red, very thick] (2) at (4, -5.5) {};
		\node [style=none] (3) at (-4, 5.5) {};
		\node [style=none] (4) at (0, 5.5) {};
		\node [style=none] (5) at (4, 5.5) {};
		\node [style=none] (6) at (-4, 2.75) {};
		\node [style=none] (7) at (0, 2.75) {};
		\node [style=none] (8) at (0, 0.5) {};
		\node [style=none] (9) at (4, 0.5) {};
		\node [style=none] (10) at (-4, -1.75) {};
		\node [style=none] (11) at (4, -1.75) {};
		\node [style=green, very thick] (12) at (-0.75, -1.75) {$y_{c}$};
		\node [style=none] (13) at (0, -2.5) {};
		\node [style=none] (14) at (0, -2.5) {};
		\node [style=green, very thick] (15) at (-2, 2.75) {$y_{a}$};
		\node [style=green, very thick] (16) at (2, 0.5) {$y_{b}$};
		\node [style=none] (17) at (-4, 7) {$1$};
		\node [style=none] (18) at (0, 7) {$2$};
		\node [style=none] (19) at (4, 7) {$3$};
		\node [style=none] (20) at (-6, 2) {};
		\node [style=none] (21) at (-6, 3.5) {};
		\node [style=none] (22) at (5, 3.5) {};
		\node [style=none] (23) at (5, 2) {};
		\node [style=none] (24) at (-6, -0.25) {};
		\node [style=none] (25) at (-6, 1) {};
		\node [style=none] (26) at (5, -0.25) {};
		\node [style=none] (27) at (5, 1) {};
		\node [style=none] (28) at (-6, -2.5) {};
		\node [style=none] (29) at (-6, -1) {};
		\node [style=none] (30) at (5, -2.5) {};
		\node [style=none] (31) at (5, -1) {};
		\node [style=none] (32) at (6.25, -1.75) {$C_{13}(y_{a})$};
		\node [style=none] (33) at (6.25, 0.5) {$C_{23}(y_{b})$};
		\node [style=none] (34) at (6.25, 2.75) {$C_{12}(y_{c})$};
		\node [style=red] (35) at (-4, -1.75) {};
		\node [style=red] (36) at (4, -1.75) {};
		\node [style=red] (37) at (4, 0.5) {};
		\node [style=red] (38) at (0, 0.5) {};
		\node [style=red] (39) at (0, 2.75) {};
		\node [style=red] (40) at (-4, 2.75) {};
	\end{pgfonlayer}
	\begin{pgfonlayer}{edgelayer}
		\draw (0) to (3.center);
		\draw (2) to (5.center);
		\draw (1) to (13.center);
		\draw (14.center) to (8.center);
		\draw (8.center) to (7.center);
		\draw (7.center) to (4.center);
		\draw (10.center) to (12);
		\draw (12) to (11.center);
		\draw [style=new edge style 0] (8.center) to (16);
		\draw [style=new edge style 0] (16) to (9.center);
		\draw [style=new edge style 0, in=-180, out=0] (6.center) to (15);
		\draw [style=new edge style 0] (15) to (7.center);
		\draw [style=not edge] (20.center) to (21.center);
		\draw [style=not edge] (21.center) to (22.center);
		\draw [style=not edge] (20.center) to (23.center);
		\draw [style=not edge] (23.center) to (22.center);
		\draw [style=not edge] (24.center) to (26.center);
		\draw [style=not edge] (26.center) to (27.center);
		\draw [style=not edge] (24.center) to (25.center);
		\draw [style=not edge] (25.center) to (27.center);
		\draw [style=not edge] (28.center) to (30.center);
		\draw [style=not edge] (30.center) to (31.center);
		\draw [style=not edge] (28.center) to (29.center);
		\draw [style=not edge] (29.center) to (31.center);
	\end{pgfonlayer}
\end{tikzpicture}
    \caption{A graph state expressed as a composition of resistors}

\end{figure}
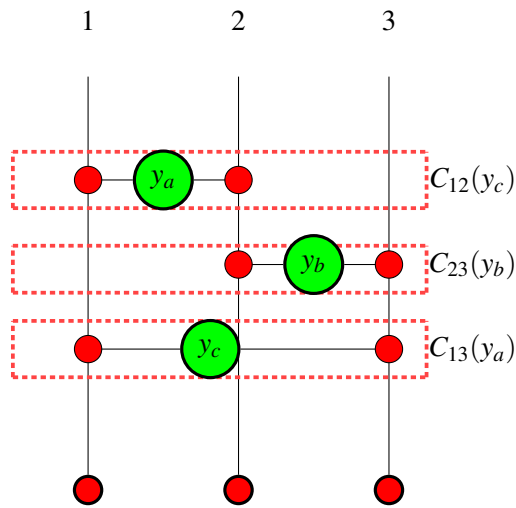

To define $C_{ij}(y)$ we first define the matrix $Y_{ij}(y)$ to be an $n\times{n}$ matrix whose entries are all zeroes except:
\[
(Y_{ij}(y))_{ii}=y~~(Y_{ij}(y))_{ij}=-y=(Y_{ij}(y))_{ji}~~(Y_{ij}(y))_{jj}=y
\]
Using this structure one can write down $C_{ij}(y)$ as follows:
\begin{equation*}
C_{ij}(y)=\begin{pmatrix}
1 & 0\\
Y_{ij}(y) & 1\\
\end{pmatrix}
\end{equation*}
The properties of $C_{ij}(y)$ are captured in the following lemma:
\begin{lemma}
The matrices $C_{ij}(y)$ satisfy the following properties:
\begin{enumerate}[(i)]
    \item $C_{ij}(y_{1})C_{kl}(y_{2})=C_{kl}(y_{2})C_{ij}(y_{1})$, these matrices commute.
    \item $Y_{ij}(y)$ is a symmetric matrix and $Y^{T}_{ij}(y)\vec{1}=0$, that is the rows sum to zero
    \item $\prod_{i\neq j}C_{ij}(y)=\begin{pmatrix}
     1 & 0\\
     \sum_{i\neq j} Y_{ij}(y) & 1\\
    \end{pmatrix}$
    \item $(\sum_{i\neq j}Y_{ij}(y))$ is symmetric and $(\sum_{i\neq j}Y_{ij}(y))^{T}\vec{1}=0$
\end{enumerate}
\end{lemma}
\begin{proof}~
\begin{description}
\item (i) To show that the $C$'s commutes one first writes the matrix structure explicitly as follows:
\[
C_{ij}(y_{1})=\begin{pmatrix}
 1 & 0\\
 Y_{ij}(y_{1}) & 1\\
\end{pmatrix}
~~~~~~
C_{kl}(y_{2})=\begin{pmatrix}
 1 & 0\\
 Y_{ij}(y_{2}) & 1\\
\end{pmatrix}
\]
The product $C_{ij}(y_{1})C_{kl}(y_{2})$ is: 
\[
C_{ij}(y_{1})C_{kl}(y_{2})=
\begin{pmatrix}
 1 & 0\\
 Y_{ij}(y_{1}) & 1\\
\end{pmatrix}
\begin{pmatrix}
 1 & 0\\
 Y_{ij}(y_{2}) & 1\\
\end{pmatrix}
=
\begin{pmatrix}
1 & 0\\
Y_{ij}(y_{1}) +Y_{kl}(y_{2})  & 1\\
\end{pmatrix}
= C_{kl}(y_{2})C_{ij}(y_{1})
\]
which completes the proof.
\item (ii) Recall, the structure of $Y_{ij}(y)$ is:
\[
(Y_{ij})(y))_{ii}=y~~(Y_{ij}(y))_{ij}=-y=(Y_{ij}(y))_{ji}~~(Y_{ij}(y))_{jj}=y
\]
and therefore its symmetric.
$Y_{ij}^{T}(y)\vec{1}$ sums the entries in each row and hence is always equal to zero given the structure defined above.
\item (iii) One can show this by direct calculation, this essentially relies on the fact that when lower triangular matrices of the following form are multiplied, the entries in the lower block add so we get the following result:
\[
\prod_{i\neq j}C_{ij}(y_{ij})=
\begin{pmatrix}
1 & 0\\
\sum_{i\neq j} Y_{ij}(y_{ij}) & 1\\
\end{pmatrix}
\]
\item (iv) $\sum_{i\neq j}Y_{ij}(y_{ij})$ is a symmetric matrix since its a sum of symmetric matrices. To show this $(\sum_{i\neq j}Y_{ij}(y))^{T}\vec{1}=0$, note that $(\sum_{i\neq j}Y_{ij}(y)))^{T}\vec{1}=\sum_{i\neq j}Y_{ij}^{T}(y)\vec{1}=0$, where the last equality follows from $(ii)$.
\end{description}
\end{proof}
The resistors for each layer are composed by multiplying the matrices $C_{ij}$, as shown in the lemma above $C_{ij}(y)$'s commute with each other and the product of $C_{ij}(y)$'s can then be written: 
\[
\prod_{i\neq j}C_{ij}(y_{ij})=
\begin{pmatrix}
1 & 0\\
\sum_{i\neq j} Y_{ij}(y_{ij}) & 1\\
\end{pmatrix}
\]
where $Y=\displaystyle{\sum_{i\neq j}} {Y}_{ij}(y_{ij})$ is a symmetric matrix, whose off-diagonal components are the adjacency matrix of a graph with weights defined by (minus) of the value of the conductance between wires $i$ and $j$, ${Y}_{ij}=-y_{ij}$, and whose diagonal components are defined via  ${Y}_{ii}=-\sum_{j\neq i}{Y}_{ij}(y_{ij})$.
The function encoded  the composition of the horizontal resistances, then has the following form:
\begin{equation*}
\begin{pmatrix}
1 & 0\\
{Y} & 1\\
\end{pmatrix}
\begin{pmatrix}
{V}_{in}\\
{I}_{in}\\
\end{pmatrix}
=
\begin{pmatrix}
{V}_{out}\\
{I}_{out}\\
\end{pmatrix}
\end{equation*}
The last layer consists of the effect, to ${V}_{out}$ and ${I}_{out}$.
The resulting relation allows each component of ${V}_{out}$ can be anything although each component of ${I}_{out}$ must be zero.
The parity-check matrix can then be written in the following form:
\begin{equation*}
\begin{pmatrix}
{Y} & 1\\
\end{pmatrix}
\begin{pmatrix}
{V}
_{in}\\
{I}_{in}\\
\end{pmatrix}
=
\begin{pmatrix}
0\\
0\\
\end{pmatrix}
\end{equation*}
Notice that every graph state can be obtained from a mesh of resistors by choosing appropriate conductance values and hence this completes the proof.
\end{proof}
To give an intuitive understanding of the steps used in the proof above, consider the following example:
\begin{example}
Consider a weighted graph, with weights in $F$. We convert this into a relation by associating each vertex of the graph with a spider, and each edge of the graph with a resistor, with non-zero resistance. The conductance associated with each resistance is the weight of the edge $y_{ij}$, which could be zero. An example, for a graph $3$ vertices is shown in Figure \ref{graphstate}.

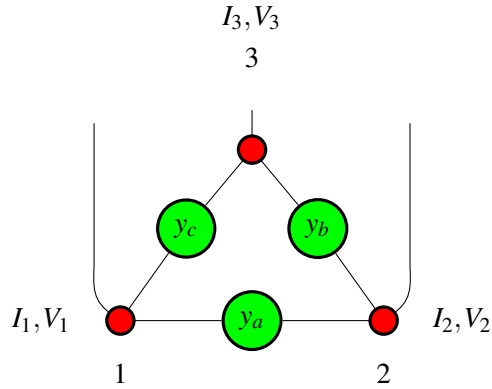
\begin{figure}
\centering
\begin{tikzpicture}[scale=0.7]
	\begin{pgfonlayer}{nodelayer}
		\node [style=none] (0) at (2, 0.25) {};
		\node [style=none] (1) at (7, 0.25) {};
		\node [style=none] (2) at (4.5, 3.5) {};
		\node [style=green, very thick] (3) at (3.25, 2) {$y_c$};
		\node [style=green, very thick] (4) at (5.75, 2) {$y_b$};
		\node [style=green, very thick] (5) at (4.5, 0.25) {$y_a$};
		\node [style=none] (6) at (1.5, 1.25) {};
		\node [style=none] (7) at (7.5, 1.25) {};
		\node [style=none] (8) at (4.5, 4.25) {};
		\node [style=none] (9) at (0.5, 0.25) {$I_{1}, V_{1}$};
		\node [style=none] (10) at (8.5, 0.25) {$I_{2}, V_{2}$};
		\node [style=none] (11) at (4.5, 6) {$I_{3}, V_{3}$};
		\node [style=none] (12) at (2, -0.75) {$1$};
		\node [style=none] (13) at (7, -0.75) {$2$};
		\node [style=none] (14) at (4.5, 5.25) {$3$};
		\node [style=red, very thick] (15) at (2, 0.25) {};
		\node [style=red, very thick] (16) at (7, 0.25) {};
		\node [style=red, very thick] (17) at (4.5, 3.5) {};
		\node [style=none] (18) at (1.5, 4) {};
		\node [style=none] (19) at (7.5, 4) {};
	\end{pgfonlayer}
	\begin{pgfonlayer}{edgelayer}
		\draw (0.center) to (5);
		\draw (5) to (1.center);
		\draw (2.center) to (4);
		\draw (4) to (1.center);
		\draw (0.center) to (3);
		\draw (3) to (2.center);
		\draw [bend right, looseness=1.25] (6.center) to (0.center);
		\draw [bend left, looseness=1.25] (7.center) to (1.center);
		\draw (2.center) to (8.center);
		\draw (18.center) to (6.center);
		\draw (19.center) to (7.center);
	\end{pgfonlayer}
\end{tikzpicture}
\caption{Graph State with 3 resistors. \label{graphstate}}
\end{figure}

This graph state, can be redrawn as a sequence of isomorphisms, each consisting of a single horizontal resistor acting on each pair of wires. For the example the Figure \ref{graphstate} can be redrawn as Figure \ref{horizontalgraphstate}.
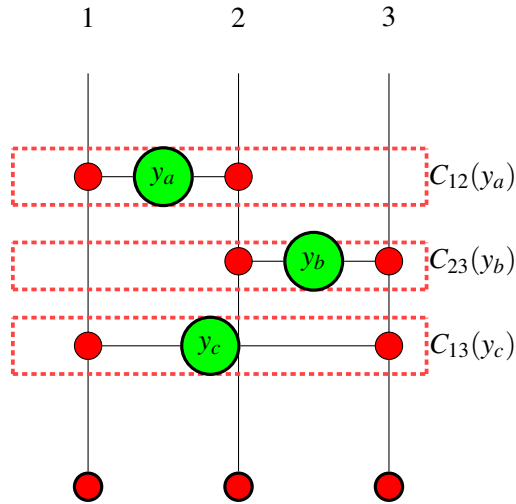
\begin{figure}[ht]
    \centering
\begin{tikzpicture}[scale=0.5]
	\begin{pgfonlayer}{nodelayer}
		\node [style=red, very thick] (0) at (-4, -5.5) {};
		\node [style=red, very thick] (1) at (0, -5.5) {};
		\node [style=red, very thick] (2) at (4, -5.5) {};
		\node [style=none] (3) at (-4, 5.5) {};
		\node [style=none] (4) at (0, 5.5) {};
		\node [style=none] (5) at (4, 5.5) {};
		\node [style=none] (6) at (-4, 2.75) {};
		\node [style=none] (7) at (0, 2.75) {};
		\node [style=none] (8) at (0, 0.5) {};
		\node [style=none] (9) at (4, 0.5) {};
		\node [style=none] (10) at (-4, -1.75) {};
		\node [style=none] (11) at (4, -1.75) {};
		\node [style=green, very thick] (12) at (-0.75, -1.75) {$y_{c}$};
		\node [style=none] (13) at (0, -2.5) {};
		\node [style=none] (14) at (0, -2.5) {};
		\node [style=green, very thick] (15) at (-2, 2.75) {$y_{a}$};
		\node [style=green, very thick] (16) at (2, 0.5) {$y_{b}$};
		\node [style=none] (17) at (-4, 7) {$1$};
		\node [style=none] (18) at (0, 7) {$2$};
		\node [style=none] (19) at (4, 7) {$3$};
		\node [style=none] (20) at (-6, 2) {};
		\node [style=none] (21) at (-6, 3.5) {};
		\node [style=none] (22) at (5, 3.5) {};
		\node [style=none] (23) at (5, 2) {};
		\node [style=none] (24) at (-6, -0.25) {};
		\node [style=none] (25) at (-6, 1) {};
		\node [style=none] (26) at (5, -0.25) {};
		\node [style=none] (27) at (5, 1) {};
		\node [style=none] (28) at (-6, -2.5) {};
		\node [style=none] (29) at (-6, -1) {};
		\node [style=none] (30) at (5, -2.5) {};
		\node [style=none] (31) at (5, -1) {};
		\node [style=none] (32) at (6.25, -1.75) {$C_{13}(y_{c})$};
		\node [style=none] (33) at (6.25, 0.5) {$C_{23}(y_{b})$};
		\node [style=none] (34) at (6.25, 2.75) {$C_{12}(y_{a})$};
		\node [style=red] (35) at (-4, -1.75) {};
		\node [style=red] (36) at (4, -1.75) {};
		\node [style=red] (37) at (4, 0.5) {};
		\node [style=red] (38) at (0, 0.5) {};
		\node [style=red] (39) at (0, 2.75) {};
		\node [style=red] (40) at (-4, 2.75) {};
	\end{pgfonlayer}
	\begin{pgfonlayer}{edgelayer}
		\draw (0) to (3.center);
		\draw (2) to (5.center);
		\draw (1) to (13.center);
		\draw (14.center) to (8.center);
		\draw (8.center) to (7.center);
		\draw (7.center) to (4.center);
		\draw (10.center) to (12);
		\draw (12) to (11.center);
		\draw [style=new edge style 0] (8.center) to (16);
		\draw [style=new edge style 0] (16) to (9.center);
		\draw [style=new edge style 0, in=-180, out=0] (6.center) to (15);
		\draw [style=new edge style 0] (15) to (7.center);
		\draw [style=not edge] (20.center) to (21.center);
		\draw [style=not edge] (21.center) to (22.center);
		\draw [style=not edge] (20.center) to (23.center);
		\draw [style=not edge] (23.center) to (22.center);
		\draw [style=not edge] (24.center) to (26.center);
		\draw [style=not edge] (26.center) to (27.center);
		\draw [style=not edge] (24.center) to (25.center);
		\draw [style=not edge] (25.center) to (27.center);
		\draw [style=not edge] (28.center) to (30.center);
		\draw [style=not edge] (30.center) to (31.center);
		\draw [style=not edge] (28.center) to (29.center);
		\draw [style=not edge] (29.center) to (31.center);
	\end{pgfonlayer}
\end{tikzpicture}
    \caption{The diagram of Figure \ref{graphstate} expressed as the composition of a series of resistors, followed by a direct sum of units.\label{horizontalgraphstate}}

\end{figure}

Each layer, highlighted in a red box in the example in Figure \ref{horizontalgraphstate}, defines a function $C_{ij}(y):F^{2n} \to F^{2n}$:
\begin{equation*}
    \begin{pmatrix} {V}' \\ {I}' \end{pmatrix} = 
    C_{ij}(y)   \begin{pmatrix} {V} \\ {I} \end{pmatrix}. 
\end{equation*}
where:
\begin{equation*}
C_{ij}(y)=
\begin{pmatrix}
1 & 0\\
{Y}_{ij}(y) & 1\\
\end{pmatrix}.
\end{equation*}
Explicitly, for the example in Figure \ref{graphstate}, $n=3$ and
\begin{equation*}
C_{12}=
\begin{pmatrix}
1 & 0 & 0 & 0 & 0 & 0\\
0 & 1 & 0 & 0 & 0 & 0\\
0 & 0 & 1 & 0 & 0 & 0\\
y_{a}& -y_{a} & 0 & 1 & 0 & 0\\
-y_{a} & y_{a} & 0 & 0 & 1 & 0\\
0 & 0 & 0 & 0 & 0 & 1\\
\end{pmatrix},
~~~~
C_{23}=
\begin{pmatrix}
1 & 0 & 0 & 0 & 0 & 0\\
0 & 1 & 0 & 0 & 0 & 0\\
0 & 0 & 1 & 0 & 0 & 0\\
0 & 0 & 0 & 1 & 0 & 0\\
0 & y_{b} & -y_{b} & 0 & 1 & 0\\
0 & -y_{b} & y_{b} & 0 & 0 & 1\\
\end{pmatrix},
\end{equation*}
and,
\begin{equation*}
C_{31}=
\begin{pmatrix}
1 & 0 & 0 & 0 & 0 & 0\\
0 & 1 & 0 & 0 & 0 & 0\\
0 & 0 & 1 & 0 & 0 & 0\\
y_{c} & 0 & -y_{c} & 1 & 0 & 0\\
0 & 0 & 0 & 0 & 1 & 0\\
-y_{c} & 0 & y_{c} & 0 & 0 & 1\\
\end{pmatrix}.
\end{equation*}
The product of the $C_{ij}$, can be written as 
\begin{equation*}
C_{12}C_{23}C_{31}  =
\begin{pmatrix}
1 & 0\\
{Y}_{12}(y_{A})+{Y}_{23}(y_{B})+{Y}_{31}(y_{C}) & 1\\
\end{pmatrix}= \begin{pmatrix}
1 & 0 & 0 & 0 & 0 & 0\\
0 & 1 & 0 & 0 & 0 & 0\\
0 & 0 & 1 & 0 & 0 & 0\\
y_{a}+y_{c} & -y_a & -y_{c} & 1 & 0 & 0\\
-y_a & y_a+y_b & -y_b & 0 & 1 & 0\\
-y_{c} & -y_b & y_b+y_{c} & 0 & 0 & 1\\
\end{pmatrix}
\end{equation*}
The above matrix can be written as a block matrix in the following way:
\begin{equation*}
\begin{pmatrix}
1 & 0\\
{Y} & 1\\
\end{pmatrix}
\begin{pmatrix}
{V}_{in}\\
{I}_{in}\\
\end{pmatrix}
=
\begin{pmatrix}
{V}_{out}\\
{I}_{out}\\
\end{pmatrix}
\end{equation*}
We now apply the unit to ${V}_{out}$ and ${I}_{out}$, as shown in the last layer of Figure \ref{horizontalgraphstate}, this results in the following parity-check matrix
\begin{equation*}
\begin{pmatrix}
{Y} & 1\\
\end{pmatrix}
\begin{pmatrix}
{V}
_{in}\\
{I}_{in}\\
\end{pmatrix}
=
\begin{pmatrix}
0\\
0\\
\end{pmatrix}
\end{equation*}
\end{example}

\subsection{Universality for Kirchhoff Relations}
In this section we wish to show that \textit{any} Kirchhoff relation can be specified as a composition or direct sum of the generators described in Tables \ref{Kirchhoff-spider-generator-table}, \ref{tab:resistor}, \ref{lossless-divider-table} and \ref{Kirchhoff-affine-generator-table}. Since any relation can be converted to a state using spiders, it suffices to show that any state can be expressed in terms of the generators listed above. Furthermore, the parity-check matrix of any Kirchhoff state can be written in standard form, so, to prove universality, we only need to provide an explicit construction of this standard form using our generators. 
\begin{theorem}\label{unilinkirrel}
Kirchhoff spiders, resistances and ideal current dividers form a universal set of generators for \categoryname{KirRel}.\end{theorem}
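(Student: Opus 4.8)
The plan is to reduce to states, put the parity-check matrix into standard Lagrangian form, and then read off a diagram built from the listed generators. First, since $\categoryname{KirRel}$ is a hypergraph category whose Frobenius structure is carried by the Kirchhoff spiders of Table~\ref{Kirchhoff-spider-generator-table}, one can bend any relation $\mathcal R\colon m\to n$ into a state on $m+n$ wires using spider-built cups and caps, so a presentation of that state by our generators immediately yields one for $\mathcal R$. Thus take an arbitrary Kirchhoff state $\mathcal R$. By Theorem~\ref{standard form theorem} its parity-check matrix has a standard form governed by a triple $(Y,A,\sigma)$ with $Y=Y^{T}$, and by Proposition~\ref{quasi-lemma} the Kirchhoff condition is exactly $Y\vec 1=0$ together with $A$ quasi-stochastic, $A\vec 1=\vec 1$. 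Solving $Hu=0$ with $u=(q_0,q_1,p_0,p_1)$, the state $\mathcal R$ consists of the vectors with $q_1=Aq_0$ and $p_0=-Yq_0-A^{T}p_1$, with $q_0$ and $p_1$ free, then with the $n=n_p+n_q$ terminals permuted by the symplectic permutation $\sigma_S$.

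Next I would realise this state explicitly. Introduce $n_p$ Kirchhoff-spider junctions $N_1,\dots,N_{n_p}$ carrying the (free) potentials $q_0^{1},\dots,q_0^{n_p}$, with the first $n_p$ output terminals hanging off them. On these junctions place a resistor mesh with admittance matrix $Y$: by Theorem~\ref{Kirchhoff-graph-state-theorem} every $Y$ with $Y=Y^{T}$ and $Y\vec 1=0$ arises as such a mesh of the resistors of Table~\ref{tab:resistor}, and the mesh injects the current $-Yq_0$ at the junctions. For each $j$, attach a binary tree of ideal current dividers (Table~\ref{lossless-divider-table}) whose leaves tap the potentials off the $N_i$ and whose root is the $(n_p+j)$-th terminal, choosing the divider parameters so that the root potential is the affine combination $\sum_i A_{ji}q_0^{i}=q_1^{j}$ (legitimate since row $j$ of $A$ sums to $1$); the divider relations then send the terminal current $p_1^{j}$ back down the tree, distributing it over the junctions with weights $A_{ji}$ and contributing $(A^{T}p_1)_i$ at $N_i$. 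Kirchhoff's current law at each $N_i$ now forces $p_0=-Yq_0-A^{T}p_1$, as required, and applying $\sigma_S$ — a symplectic permutation, hence a composite of wire swaps, which are part of the prop and hypergraph structure — yields $\mathcal R$; the remaining sign and normalisation bookkeeping is routine.

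The only genuinely new point, and the one I expect to be the main obstacle, is the claim used above: \emph{every quasi-stochastic affine combination $q\mapsto\sum_i a_i q_i$, $\sum_i a_i=1$, $a_i\in F$, is realised by a tree of ideal current dividers} — equivalently, spiders and dividers generate all quasi-stochastic linear maps, which is precisely universality for $\categoryname{LosKirRel}$. Over $\mathbb R$ one just peels off inputs one at a time, writing $\sum_i a_i q_i = a_1 q_1 + (1-a_1)\big(\sum_{i\ge 2}\tfrac{a_i}{1-a_1}q_i\big)$ and recursing on the renormalised tail, but over a finite field the denominator $1-a_1$, or a partial sum $\sum_{i\in S}a_i$ used as a divider weight, can vanish, so the grouping must be chosen with care. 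I would isolate this as a lemma and prove it by induction on the number of inputs, at each stage partitioning the indices so that every partial sum serving as a divider weight is nonzero; such a partition always exists because the whole weight vector sums to the nonzero element $1$ and, crucially, $\mathrm{char}(F)\neq 2$ (this is what permits, e.g., splitting an $\mathbb F_3$-combination $q_1+q_2+q_3+q_4$ into two inner averages with the weight $2=-1$ at the root). Granting this lemma, the construction of the previous paragraph assembles any Kirchhoff state from Kirchhoff spiders, resistors, and ideal current dividers, completing the proof.
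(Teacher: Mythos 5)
Your proof is correct and follows essentially the same route as the paper's: bend the relation into a state using spider-built cups, put its parity-check matrix into the standard Lagrangian form determined by $(Y,A,\sigma)$ via Theorem~\ref{standard form theorem} and Proposition~\ref{quasi-lemma}, realise the $Y$-block as a mesh of resistors via Theorem~\ref{Kirchhoff-graph-state-theorem}, and realise the quasi-stochastic block $A$ with spiders and ideal current dividers (the paper packages this last step as composing the graph state with $L(M)$, where $M$ is the identity stacked on top of $A$, and then checks the composite's parity-check matrix directly). The one place you go beyond the paper is worth keeping: the paper merely asserts that $L(M)$ ``can be realized using a layer of current-dividers composed with a layer of spiders,'' whereas you isolate the lemma that every quasi-stochastic combination over $F$ is a tree of binary dividers with all weights and coweights invertible, and your sketch of it is sound --- drop zero coefficients, split off a singleton $\{i\}$ whenever some $a_i\neq 1$, and otherwise (all $a_i=1$) split off a pair, whose weight $2$ is neither $0$ nor $1$ because $\mathrm{char}(F)\neq 2$; the renormalised sub-combinations again have nonzero coefficients summing to $1$, so the induction closes.
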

\begin{proof}
We first note that any Kirchhoff relation can be converted into a state using spiders. We showed that the parity-check matrix for any state can be put into the standard form described in the previous chapter. This depends on two matrices, the matrix ${Y}$ which satisfies ${Y}={Y}^T$ and ${Y}\vec{1}=\vec{1}$, and the matrix $A$ which is quasi-stochastic matrix. Our aim is to show one realizes a state with a parity-check matrix in this standard form using spiders, resistances and ideal current dividers.
Let us choose ${M}$ to be the following matrix:
\begin{equation*}
{M}=\begin{pmatrix}
{1}\\
{A}\\
\end{pmatrix}
\end{equation*}
where ${A}$ is the quasti-stochastic matrix appearing in the standard form of the state. Then ${M}$ is clearly also quasi-stochastic. The relation $L({M})$ can be realized using a layer of current-dividers composed with a layer of spiders.

Consider the composition of a graph state with $L(M)$ as shown in Figure \ref{fig:universality}. 
\begin{figure}[ht]
\centering
\begin{tikzpicture}[scale=0.5]
	\begin{pgfonlayer}{nodelayer}
		\node [style=none] (0) at (-4, 5) {};
		\node [style=none] (1) at (1, 5) {};
		\node [style=none] (2) at (-4, 3) {};
		\node [style=none] (3) at (1, 3) {};
		\node [style=none] (4) at (-1.5, 4) {$L({M})$};
		\node [style=none] (5) at (-3.5, 5) {};
		\node [style=none] (6) at (0.5, 5) {};
		\node [style=none] (7) at (-3.5, 3) {};
		\node [style=none] (8) at (0.5, 3) {};
		\node [style=none] (9) at (-3.5, 8) {};
		\node [style=none] (10) at (0.5, 8) {};
		\node [style=none] (11) at (0.5, 0) {};
		\node [style=none] (12) at (-3.5, 0) {};
		\node [style=none] (13) at (-6, -1) {};
		\node [style=none] (14) at (3, -1) {};
		\node [style=none] (15) at (-6, -3) {};
		\node [style=none] (16) at (3, -3) {};
		\node [style=none] (17) at (-3.5, -1) {};
		\node [style=none] (18) at (0.5, -1) {};
		\node [style=none] (19) at (-1.5, -2) {$\textit{Graph State}$};
		\node [style=none] (20) at (-1.5, 6) {$\ldots$};
		\node [style=none] (22) at (-1.5, 0) {$\ldots$};
		\node [style=none] (34) at (2.5, 0) {$({I}',{V}')$};
		\node [style=none] (36) at (2.5, 6) {$({I},{V})$};
	\end{pgfonlayer}
	\begin{pgfonlayer}{edgelayer}
		\draw (0.center) to (2.center);
		\draw (2.center) to (3.center);
		\draw (3.center) to (1.center);
		\draw (0.center) to (1.center);
		\draw (7.center) to (12.center);
		\draw (5.center) to (9.center);
		\draw (10.center) to (6.center);
		\draw (8.center) to (11.center);
		\draw (13.center) to (14.center);
		\draw (13.center) to (15.center);
		\draw (15.center) to (16.center);
		\draw (16.center) to (14.center);
		\draw (12.center) to (17.center);
		\draw (11.center) to (18.center);
	\end{pgfonlayer}
\end{tikzpicture}
    \caption{Any state in $\categoryname{KirRel}$ can be realized as a graph state, composed with a relation of the form $L(M)$ where $M$ is an appropriately chosen quasi-stochastic matrix.}
    \label{fig:universality}
\end{figure}
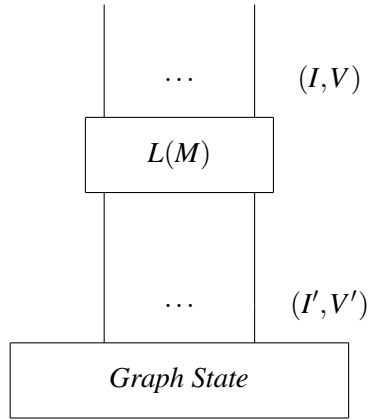
The relation encoded by the graph state is:
\begin{equation*}
\mathcal W=\{ (V',I')~|~
\begin{pmatrix}
{Y} & {1}\\
\end{pmatrix}
\begin{pmatrix}
{V}'\\
{I}'\\
\end{pmatrix}
=0
\}
=\{(V',I')~|~YV'=-I'\}
\end{equation*}
The relation $L(M)$ is the following:
\begin{equation*}
L(M)=\{((V,V')~(I,I'))~|~V=MV'~\textit{and}~I'=M^{T}I\}
\end{equation*}
We compose these relations as follows:
\begin{eqnarray*}
L(M)(1\oplus{\mathcal W})&=&\{(V,I)~|~\exists V',~I'~.~V=MV',~I'=M^{T}I,~YV'=-I'\}\\
&=&\{\begin{pmatrix}
V_{1}\\
V_{2}\\
\end{pmatrix},~\begin{pmatrix}
I_{1}\\
I_{2}\\
\end{pmatrix}~|~\exists~ V',~I'~.~ V_{1}=V',~V_{2}=AV',~I'=I_{1}+A^{T}I_{2},~YV'=-I'\}\\
&=&\{\begin{pmatrix}
V_{1}\\
V_{2}\\
\end{pmatrix},~\begin{pmatrix}
I_{1}\\
I_{2}\\
\end{pmatrix}~| V_{2}=AV_{1}, ~YV'=-I_{1}+A^{T}I_{2}\}\\
\end{eqnarray*}
Where the last step removes the existential quantification
We claim to have the following parity-check matrix for this relation:
\begin{equation*}
\begin{pmatrix}
{Y} & 0 & {M}^{T}\\
-{A} & {1} & 0\\
\end{pmatrix}
\end{equation*}
As for any $\left(\begin{pmatrix}
V_{1}\\
V_{2}\\
\end{pmatrix},~\begin{pmatrix}
I_{1}\\
I_{2}\\
\end{pmatrix}\right)\in L(M)(1\oplus{\mathcal W})$ we have:

\begin{equation*}
\begin{pmatrix}
{Y} & 0 & I & {A}^{T}\\
-{A} & {1} & 0 & 0\\
\end{pmatrix}
\begin{pmatrix}
V_{1}\\
V_{2}\\
I_{1}\\
I_{2}\\
\end{pmatrix}
=\begin{pmatrix}
YV_{1}+I_{1}-A^{T}I_{2}\\
-AV_{1}+V_{2}\\
\end{pmatrix}
=\begin{pmatrix}
0\\
0\\
\end{pmatrix}
\end{equation*}
This is the standard form for a parity-check matrix for a state in \categoryname{KirRel}. This concludes the proof.
\end{proof}

\begin{corollary}
If $A$ is deterministic as shown in section \ref{detkir} then only resistors and spiders are sufficient to generate all the maps in the category $\categoryname{ResRel}$.
\end{corollary}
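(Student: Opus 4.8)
The plan is to specialize the construction in the proof of Theorem~\ref{unilinkirrel} to the deterministic case and observe that the one place where current dividers entered simply disappears. Recall that in that proof an arbitrary Kirchhoff state was realized as a graph state composed with a relation of the form $L(M)$, where $M=\begin{pmatrix} 1 \\ A\end{pmatrix}$ and $A$ is the quasi-stochastic block of the standard parity-check matrix $(Y,A,\sigma)$; the graph-state factor is built entirely from resistors by Theorem~\ref{Kirchhoff-graph-state-theorem}, and $L(M)$ was realized ``using a layer of current dividers composed with a layer of spiders''. When the state lies in $\categoryname{ResRel}$ the block $A$ is deterministic, and by the lemma of Section~\ref{detkir} (showing this property is independent of the chosen standard form) the same construction applies with a deterministic $A$, hence with a deterministic $M$, its additional rows being identity rows, which are themselves deterministic.

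The key step is to check that $L(M)$, with $M$ deterministic and quasi-stochastic, is built from the spiders of Table~\ref{Kirchhoff-spider-generator-table} alone. A deterministic quasi-stochastic matrix --- a $0$--$1$ matrix with exactly one $1$ in each row --- is precisely the matrix of a function $g:\underline{n}\to\underline{n_p}$, namely $(Mq)_i = q_{g(i)}$. Hence, by Lemma~\ref{ortho-lemma} describing the effect of $L$ on a linear function, $L(M)=\{((q,p),(q',p'))\mid q'_i = q_{g(i)},\ p_j = \sum_{i\in g^{-1}(j)} p'_i\}$, which is exactly a direct sum over $j\in\underline{n_p}$ of copy (red) spiders, each with one position-leg on the input side and $|g^{-1}(j)|$ legs on the output side: it copies positions along $g$ and sums momenta over the fibres of $g$. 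Such arbitrary-arity copy spiders are obtained by composing the unit, counit, monoid and comonoid of Table~\ref{Kirchhoff-spider-generator-table}. Intuitively, in the general proof a current divider implemented a row of $A$ carrying several nonzero weights summing to $1$; once every such row collapses to a single $1$, the divider degenerates to a bare wire, so no divider is used anywhere.

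Putting the two factors together, every state of $\categoryname{ResRel}$ --- and, converting states back to relations with spider-built cups and caps, every map of $\categoryname{ResRel}$ --- is a composite of direct sums of resistors and spiders. For the reverse inclusion, resistors are graph states (their standard form has an empty $A$-block and $Y$ a symmetric weighted-graph Laplacian), hence deterministic Kirchhoff, and the copy spider is manifestly deterministic and obeys the Kirchhoff current law; since $\categoryname{ResRel}$ is a subprop of $\categoryname{LagRel}$, the prop generated by resistors and spiders is contained in, and therefore equal to, $\categoryname{ResRel}$.

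I expect the main obstacle to be the key step: making precise that a deterministic $M$ forces $L(M)$ to be a pure arrangement of copy spiders --- that a deterministic quasi-stochastic matrix is a function matrix, that $L$ sends it to the copy-positions/add-momenta relation, and that this relation decomposes into the four spider generators --- and hence that no current divider is required at any point of the construction. Each ingredient is routine, but the coordinate bookkeeping (which wires get copied into which fibre of $g$) needs care, and one relies on the fact, established in Section~\ref{detkir}, that being deterministic is a basis-independent property of the relation.
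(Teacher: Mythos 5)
Your proposal is correct and follows exactly the route the paper intends: the paper leaves this corollary unproved as an immediate specialization of Theorem \ref{unilinkirrel}, and your elaboration --- the graph-state factor is a mesh of resistors by Theorem \ref{Kirchhoff-graph-state-theorem}, while a deterministic quasi-stochastic $M$ is a function matrix so that $L(M)$ collapses to a direct sum of copy spiders and the current-divider layer disappears --- is precisely the intended argument. Your added reverse-inclusion check (resistors and copy spiders are themselves deterministic Kirchhoff) is a worthwhile detail the paper omits.
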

\begin{example}
Consider the relation encoded by Figure \ref{extrawires}. This is an example of a state in \categoryname{ResRel} and we will now study how to write a parity-check matrix for such a relation:
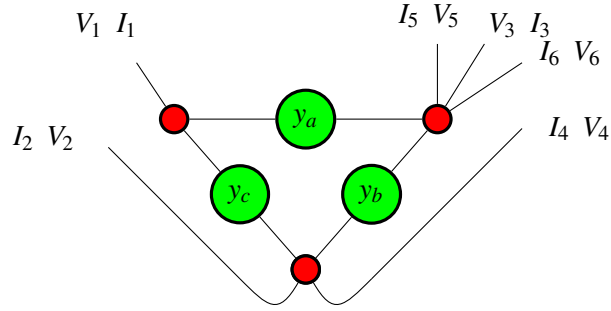
\begin{figure}
    \centering
\begin{tikzpicture}[scale=0.5]
	\begin{pgfonlayer}{nodelayer}
		\node [style=green, very thick] (0) at (-0.25, 3) {$y_a$};
		\node [style=red, very thick] (1) at (-3.75, 3) {};
		\node [style=red, very thick] (2) at (3.25, 3) {};
		\node [style=red, very thick] (3) at (-0.25, -1) {};
		\node [style=green, very thick] (4) at (-2, 1) {$y_c$};
		\node [style=green, very thick] (5) at (1.5, 1) {$y_b$};
		\node [style=none] (6) at (-4.75, 4.5) {};
		\node [style=none] (8) at (5.5, 4.5) {};
		\node [style=none] (11) at (-6, 5.5) {$V_{1}$};
		\node [style=none] (12) at (-5, 5.5) {$I_{1}$};
		\node [style=none] (13) at (5, 5.5) {$V_{3}$};
		\node [style=none] (14) at (6, 5.5) {$I_{3}$};
		\node [style=none] (17) at (-7.75, 2.5) {$I_{2}$};
		\node [style=none] (18) at (-6.75, 2.5) {$V_{2}$};
		\node [style=none] (19) at (3.5, 5.75) {$V_{5}$};
		\node [style=none] (20) at (2.5, 5.75) {$I_{5}$};
		\node [style=none] (25) at (3.25, 4) {};
		\node [style=none] (28) at (1.5, -1.25) {};
		\node [style=none] (29) at (3.25, 5) {};
		\node [style=none] (30) at (6.5, 2.75) {$I_{4}$};
		\node [style=none] (31) at (7.5, 2.75) {$V_{4}$};
		\node [style=none] (32) at (4.5, 5) {};
		\node [style=none] (33) at (7.25, 4.75) {$V_{6}$};
		\node [style=none] (34) at (6.25, 4.75) {$I_{6}$};
		\node [style=none] (37) at (5.5, 2.75) {};
		\node [style=red, very thick] (38) at (-0.25, -1) {};
		\node [style=none] (39) at (-2, -1.25) {};
		\node [style=none] (40) at (-5.5, 2.25) {};
	\end{pgfonlayer}
	\begin{pgfonlayer}{edgelayer}
		\draw (1) to (4);
		\draw (4) to (3);
		\draw (1) to (0);
		\draw (0) to (2);
		\draw (3) to (5);
		\draw (5) to (2);
		\draw (2) to (8.center);
		\draw (6.center) to (1);
		\draw [in=-135, out=-60, looseness=1.75] (3) to (28.center);
		\draw (2) to (29.center);
		\draw (2) to (32.center);
		\draw (28.center) to (37.center);
		\draw [in=-45, out=-120, looseness=1.75] (38) to (39.center);
		\draw (40.center) to (39.center);
	\end{pgfonlayer}
\end{tikzpicture}
    \caption{An example of a \textbf{state in \categoryname{ResRel}}. This state can be thought of as a graph state, composed with $L(M)$, where $M$ is a  deterministic matrix.
    \label{extrawires}}
\end{figure}
The relation is easily seen to be encoded by the following equations, relating $I_i$ and $V_i$ for $i=1$ to $6$:
\begin{equation*}
\begin{pmatrix}
Y & 1\\
\end{pmatrix}
\begin{pmatrix}
V_{1}\\
V_{2}\\
V_{3}\\
I_{1}\\
I_{2}+I_{4}\\
I_{3}+I_{5}+I_{6}\\
\end{pmatrix}
=\vec{0}
~~~V_{4}=V_{2}~~~V_{5}=V_{3}~~~V_{6}=V_{3}
\end{equation*}
We can encode these equations into the following parity-check matrix:
\begin{equation*}
\begin{pmatrix}
Y_{3\times 3} & 0_{3\times 3} & I_{3\times 3} & A_{3\times 3}\\
-A^{T}_{3 \times 3} & I_{3\times 3} & 0_{m\times{n}} & 0_{3\times{3}}\\
\end{pmatrix}
\begin{pmatrix}
{V}\\
{I}\\
\end{pmatrix}
\end{equation*}
where ${V}$ is the vector of $6$ voltages and ${I}$ is the vector of $6$ currents in the above figure \ref{extrawires}
where $A$ is the following matrix:
\begin{equation*}
A=
\begin{pmatrix}
0 & 0 & 0\\
1 & 0 & 0\\
0 & 1 & 1\\ 
\end{pmatrix}
~~~~
-A^{T}=
\begin{pmatrix}
0 & -1 & 0\\
0 & 0 & -1\\
0 & 0 & -1\\
\end{pmatrix}
\end{equation*}
$Y$ is adjacency matrix of the network shown in $\ref{extrawires}$.
\end{example}

\subsection{Kirchhoff Relations to Affine Kirchhoff Relations}\label{aff}
We first observe that any affine Kirchhoff transformation can be obtained from a linear Kirchhoff relation, by composing voltage and current sources to the inputs and outputs.
Assume without loss of generality that the linear relation is a state. Then we only need to show that any affine relation of the form
\begin{equation*} \forall \vec{q}, \vec{p} \in F^n, \quad
    \begin{pmatrix} {q} \\ {p} \end{pmatrix} \sim \begin{pmatrix} {q} \\ {p} \end{pmatrix} + \begin{pmatrix} {q}_0 \\ {p}_0 \end{pmatrix}, \quad \text{where }{q}_0,{p}_0 \in F^k,~\text{and }\vec{1}^T \cdot {p}_0 = 0
\end{equation*}
can be realized using our generators.

We first see that any shift in position can be realized using the voltage sources and any shift in currents that satisfies conservation of momentum can be obtained from current sources. This is true since the current and voltage sources obey KCL as shown in Table \ref{Kirchhoff-affine-generator-table}.

Notice that the current source can only realize shifts in momentum that satisfy conservation of momentum. Using these properties of current and voltage sources we have the following theorem:
\begin{theorem}\label{affkirrel}
Spiders, resistances, ideal current divders, voltage and current sources form a universal set of generators for $\categoryname{AffKirRel}$.
\end{theorem}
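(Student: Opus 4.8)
The plan is to bootstrap the affine case from the linear one already settled in Theorem~\ref{unilinkirrel}. Since spiders supply cups and caps in $\categoryname{KirRel}$ (hence in $\categoryname{AffKirRel}$), it is enough to generate every affine Kirchhoff \emph{state}. Such a state is a coset $\mathcal{R}_0 + (q_0,p_0)$ of a linear Kirchhoff state $\mathcal{R}_0$, and reading off the parity-check constraint shows that the shift must satisfy $\vec{1}^{T} p_0 = 0$ — this is precisely the affine form of Kirchhoff's current law, and it is the only restriction on $(q_0,p_0)$. By Theorem~\ref{unilinkirrel}, $\mathcal{R}_0$ is already a composite of spiders, resistors and ideal current dividers, so the entire problem collapses to realizing the pure translation $(q,p)\mapsto(q+q_0,\,p+p_0)$ with $\vec{1}^{T}p_0=0$ using the generators of Table~\ref{Kirchhoff-affine-generator-table}.

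Next I would build this translation from the two source generators. The position shift $q\mapsto q+q_0$ is the direct sum over wires $i$ of voltage sources with parameter $(q_0)_i$ (a voltage source realizes $q_2=q_1+V$, $p_2=p_1$), pre- or post-composed with the state. For the momentum shift $p\mapsto p+p_0$ I would exploit $\vec{1}^{T}p_0=0$ to write $p_0$ as a combination of zero-sum ``elementary'' shifts of the form $c\,(e_i-e_j)$, each of which is implemented by a single current source wired between wires $i$ and $j$ through red (copy) spiders; because current and voltage sources respect Kirchhoff's current law (Table~\ref{Kirchhoff-affine-generator-table}), the resulting gadget is again Kirchhoff and its net effect is exactly the admissible shift. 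Composing $\mathcal{R}_0$ from Theorem~\ref{unilinkirrel} with these two source layers — using the same composition bookkeeping as in the proof of Theorem~\ref{unilinkirrel} — produces an affine parity-check description in the standard affine form, which is the arbitrary element of $\categoryname{AffKirRel}$ we wanted.

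The main obstacle is the momentum-shift step: a single current source is a local $1\to1$ gadget, so recovering an \emph{arbitrary} $p_0$ subject only to $\vec{1}^{T}p_0=0$ requires the elementary-vector decomposition together with careful spider plumbing, and one must then verify that the composite genuinely yields the claimed affine relation rather than an over- or under-constrained one — essentially the argument in the proof of Theorem~\ref{unilinkirrel} rerun for cosets rather than subspaces. The position-shift layer and the reduction to states are routine, and the faithfulness and bijectivity-on-objects of $L$, together with the discussion in Section~\ref{aff}, supply the remaining glue.
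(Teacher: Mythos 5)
Your proposal is correct and follows essentially the same route as the paper: reduce to states, invoke Theorem~\ref{unilinkirrel} for the linear part, and realize the residual affine shift $(q_0,p_0)$ with $\vec{1}^{T}p_0=0$ by a layer of voltage sources for the position shift and current sources for the zero-sum momentum shift. The only difference is that you spell out the decomposition of $p_0$ into elementary zero-sum shifts $c(e_i-e_j)$, a detail the paper leaves implicit in its appeal to Section~\ref{aff} and Figure~\ref{fig: Universality affine}.
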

\begin{proof}
To extend the earlier proof for the universality of $\categoryname{KirRel}$ to $\categoryname{AffKirRel}$, we compose the diagram shown in figure \ref{fig:universality} with voltage and current sources to realize general affine shifts as shown in the figure below:
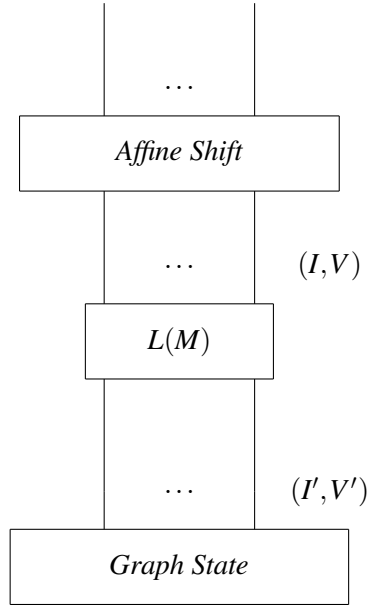
\begin{figure}
    \centering
\begin{tikzpicture}[scale=0.5]
	\begin{pgfonlayer}{nodelayer}
		\node [style=none] (0) at (-4, 5) {};
		\node [style=none] (1) at (1, 5) {};
		\node [style=none] (2) at (-4, 3) {};
		\node [style=none] (3) at (1, 3) {};
		\node [style=none] (4) at (-1.5, 4) {$L({M})$};
		\node [style=none] (5) at (-3.5, 5) {};
		\node [style=none] (6) at (0.5, 5) {};
		\node [style=none] (7) at (-3.5, 3) {};
		\node [style=none] (8) at (0.5, 3) {};
		\node [style=none] (9) at (-3.5, 8) {};
		\node [style=none] (10) at (0.5, 8) {};
		\node [style=none] (11) at (0.5, 0) {};
		\node [style=none] (12) at (-3.5, 0) {};
		\node [style=none] (13) at (-6, -1) {};
		\node [style=none] (14) at (3, -1) {};
		\node [style=none] (15) at (-6, -3) {};
		\node [style=none] (16) at (3, -3) {};
		\node [style=none] (17) at (-3.5, -1) {};
		\node [style=none] (18) at (0.5, -1) {};
		\node [style=none] (19) at (-1.5, -2) {$\textit{Graph State}$};
		\node [style=none] (20) at (-1.5, 6) {$\ldots$};
		\node [style=none] (22) at (-1.5, 0) {$\ldots$};
		\node [style=none] (34) at (2.5, 0) {$({I}',{V}')$};
		\node [style=none] (36) at (2.5, 6) {$({I},{V})$};
		\node [style=none] (37) at (-5.75, 10) {};
		\node [style=none] (38) at (2.75, 10) {};
		\node [style=none] (39) at (-5.75, 8) {};
		\node [style=none] (40) at (2.75, 8) {};
		\node [style=none] (41) at (-1.5, 9) {$\textit{Affine Shift}$};
		\node [style=none] (42) at (-3.5, 10) {};
		\node [style=none] (43) at (0.5, 10) {};
		\node [style=none] (44) at (-3.5, 8) {};
		\node [style=none] (45) at (0.5, 8) {};
		\node [style=none] (46) at (-3.5, 13) {};
		\node [style=none] (47) at (0.5, 13) {};
		\node [style=none] (48) at (-1.5, 10.75) {$\ldots$};
	\end{pgfonlayer}
	\begin{pgfonlayer}{edgelayer}
		\draw (0.center) to (2.center);
		\draw (2.center) to (3.center);
		\draw (3.center) to (1.center);
		\draw (0.center) to (1.center);
		\draw (7.center) to (12.center);
		\draw (5.center) to (9.center);
		\draw (10.center) to (6.center);
		\draw (8.center) to (11.center);
		\draw (13.center) to (14.center);
		\draw (13.center) to (15.center);
		\draw (15.center) to (16.center);
		\draw (16.center) to (14.center);
		\draw (12.center) to (17.center);
		\draw (11.center) to (18.center);
		\draw (37.center) to (39.center);
		\draw (39.center) to (40.center);
		\draw (40.center) to (38.center);
		\draw (37.center) to (38.center);
		\draw (46.center) to (42.center);
		\draw (47.center) to (43.center);
	\end{pgfonlayer}
\end{tikzpicture}
    \caption{Universality for \categoryname{AffKirRel}}
    \label{fig: Universality affine}
\end{figure}
Since any arbitrary affine shift in $\categoryname{AffKirRel}$ can be realized using a voltage and a current source, this completes the proof of universality for $\categoryname{AffKirRel}$.
\end{proof}

\section{Conclusion}
In this paper we have used parity check and generator matrices to study different subcategory of $\categoryname{LagRel}_{F}$. In particular we isolated a subcategory $\categoryname{KirRel}$ which obeyed Kirchoff's current law and so described electrical networks In doing this we establish connections between the structure of parity check matrices that are commonly used in quantum and classical error correction and structure of electrical networks. The various forms of the parity-check matrices which we investigated suggested a relation to normal forms for their corresponding electrical networks. It would be interesting to make this relation precise.
\section{Acknowledgements}
The authors would like to thank Cole Comfort for several illuminating discussions, comments and suggestions. We made use of open source tikz style files and the software 'Quantomatic' for drawing figures.
\bibliographystyle{eptcs}
\bibliography{parity-check-new}

\end{document}
